\def\tr{\mathop{\text{tr}}\kern.2ex}
\def\cov{{\rm Cov}}
\def\var{{\rm Var}}
\renewcommand{\Pr}{{P}}
\newcolumntype{L}[1]{>{\raggedright\let\newline\\\arraybackslash\hspace{0pt}}m{#1}}
\newcolumntype{C}[1]{>{  \centering\let\newline\\\arraybackslash\hspace{0pt}}m{#1}}
\newcolumntype{R}[1]{>{ \raggedleft\let\newline\\\arraybackslash\hspace{0pt}}m{#1}}
\newcolumntype{d}[1]{D{.}{.}{#1}}
\newcolumntype{H}{>{\setbox0=\hbox\bgroup}c<{\egroup}@{}}
\newcolumntype{Z}{>{\setbox0=\hbox\bgroup}c<{\egroup}@{\hspace*{-\tabcolsep}}}
\numberwithin{equation}{section}
\newtheorem{theorem}{Theorem}[section]
\newtheorem{lemma}{Lemma}[section]
\newtheorem{proposition}{Proposition}[section]
\newtheorem{assumption}{Assumption}[section]
\newtheorem{corollary}{Corollary}[section]
\newtheorem{definition}{Definition}[section]
\newtheorem{fact}[theorem]{Fact}
\providecommand{\customgenericname}{}
\newcommand{\newcustomtheorem}[2]{%
  \newenvironment{#1}[1]
  {%
   \renewcommand\customgenericname{#2}%
   \renewcommand\theinnercustomgeneric{##1}%
   \innercustomgeneric
  }
  {\endinnercustomgeneric}
}
\theoremstyle{definition}
\newtheorem{remark}{Remark}[section]
\begin{document}

\setlength{\abovedisplayskip}{5pt}
\setlength{\belowdisplayskip}{5pt}
\setlength{\abovedisplayshortskip}{5pt}
\setlength{\belowdisplayshortskip}{5pt}
\hypersetup{colorlinks,breaklinks,urlcolor=blue,linkcolor=blue}

\title{\LARGE Fisher-Pitman permutation tests based on nonparametric Poisson mixtures with application to single cell genomics}

\author{
Zhen Miao\thanks{Department of Statistics, University of Washington, Seattle; e-mail: {\tt zhenm@uw.edu}},~
Weihao Kong\thanks{Google Inc; e-mail: {\tt kweihao@gmail.com}},~Ramya Korlakai Vinayak\thanks{Department of Electrical and Computer Engineering, University of Wisconsin-Madison; e-mail: {\tt ramya@ece.wisc.edu}}, ~Wei Sun\thanks{Public Health Science Division, Fred Hutchinson Cancer Research Center; e-mail: {\tt wsun@fredhutch.org}},~and~Fang Han\thanks{Department of Statistics, University of Washington, Seattle; e-mail: {\tt fanghan@uw.edu}}
}

\date{\today}

\maketitle

%\vspace{-2em}

\begin{abstract} 
This paper investigates the theoretical and empirical performance of Fisher-Pitman-type permutation tests for assessing the equality of unknown Poisson mixture distributions. Building on nonparametric maximum likelihood estimators (NPMLEs) of the mixing distribution, these tests are theoretically shown to be able to adapt to complicated unspecified structures of count data and also consistent against their corresponding ANOVA-type alternatives; the latter is a result in parallel to classic claims made by Robinson \citep{robinson1973large}. The studied methods are then applied to a single-cell RNA-seq data obtained from different cell types from brain samples of autism subjects and healthy controls; %.  The studied tests are shown to be able to 
empirically, they unveil genes that are  differentially expressed between autism and control subjects  yet are missed using common tests. For justifying their use,  rate optimality of NPMLEs is also established in settings similar to nonparametric Gaussian \citep{wu2020optimal} and binomial mixtures \citep{tian2017learning,vinayak2019maximum}. %All methods described in this paper are implemented in the R package XXX.
\end{abstract}

{\bf Keywords:} Fisher-Pitman permutation tests, nonparametric MLE, nonparametric Poisson mixture, single-cell genomics, minimax risk

\section{Introduction}

Considering an experiment with multiple samples drawn from multiple populations, distinguishing possible difference among them in one or more dimensions is a fundamental statistical task. In the classical test of the null hypothesis of no mean differences, one-way analysis of variance (ANOVA, cf. \cite{fisher1925statistical}) $F$-test is perhaps the most commonly used tool, % to achieve this goal, 
and is the uniformly most powerful invariant one under additional normal assumption, c.f. \citet[page 50]{ScheffeHenry1959}.

Despite its popularity, one-way ANOVA has its competing alternatives. In the context of randomized experiments, Fisher \citep{fisher1935design} initialized an ingenious permutation approach as an alternative to performing ANOVA $F$-test. This idea was later developed further by Pitman \citep{pitman1938significance}. The resulting procedures, often termed the Fisher-Pitman permutation tests in literature, achieve the appealing property of being exactly distribution-free and have been suggested in various contexts as, e.g., when the distributional assumptions of $F$-tests no longer hold \citep{marascuilo1977nonparametric,still1981approximate,berry1983moment}. Robustness properties have been further studied empirically \citep{boik1987fisher} and theoretically \citep{chung2013exact}; power analyses were also performed in \cite{hoeffding1952large} and \cite{robinson1973large}.

Although being originally defined in Euclidean spaces, it is by now well understood that the ANOVA $F$-tests and especially their permutation-type alternatives are able to  adapt to an arbitrary metric space. This is via the approach of ``interpoint" distance functions \citep{mielke1976multi, mielke1984meteorological} that uses an alternative representation of the $F$ statistic as a function of between- and within-group pairwise distances. Thus, through replacing the original Euclidean distance by any properly defined distance function, the idea of Fisher-Pitman permutation tests is now implementable in many complicated metric spaces beyond the Euclidean \citep{anderson2001new,mielke2007permutation,petersen2019frechet}.

Our study of Fisher-Pitman-type permutation tests stems from the analysis of single-cell RNA-seq (scRNA-seq) data, and particularly, a framework that was recently promoted in \cite{sarkar2020separating}. There, the authors described how a separation of measurement and expression models is able to clarify confusion in modeling scRNA-seq data, and accordingly advocated using the terminology of Poisson mixtures to unify many existing models (cf. Table 1 in \cite{sarkar2020separating}). In detail, thinking about $X_{ij}^{(k)}$ to be the absolute expression of a specific gene in cell $i\in [N_{jk}]:=\{1,2,\ldots,N_{jk}\}$ of subject $j\in [n_k]$ of population $k\in [K]$, we are interested in studying the following model of $X_{ij}^{(k)}$ that is a slight simplification to Sarkar and Stephens's Equation (1): %, assuming the cells to be at a similar stage:
\begin{align}
X_{ij}^{(k)}~|~\lambda_{ij}^{(k)}  \sim {\rm Poisson}\big(r_{ij}^{(k)}\lambda_{ij}^{(k)}\big);& \quad\quad \text{(measurement model)} \label{model:measure}\\
\lambda_{ij}^{(k)} \sim Q_j^{(k)}.& \quad\quad \text{(expression model)} \label{model:expression}
\end{align}
Here $r_{ij}^{(k)}>0$ adjusts the cell ``read depth'' (cf. \citet[Page 1]{Zhang2020}) and in this paper is assumed to be known; $Q_j^{(k)}$ is a properly defined distribution that describes the ``expression level" of the gene in population $k$ and is assumed to have a compact support on the nonnegative real line. Adopting the statistical terminology, for each $k\in[K]$ and $j\in[n_k]$, $\{X_{ij}^{(k)}, i=1,\ldots,N_{jk}\}$ then independently follow Poisson mixture distributions of point mass functions (PMFs)
\[
h_{ij}^{(k)}(x) := \int_0^{\infty} e^{-\lambda r_{ij}^{(k)}}\frac{\{\lambda r_{ij}^{(k)}\}^x}{x!}{\sf d}Q_j^{(k)}(\lambda), ~~~x=0,1,2,\ldots
\]
and a mixing distribution $Q_j^{(k)}$ that has to be characterized by a nonparametric model; see \citet[Section ``Modeling scRNA-seq data"]{sarkar2020separating} for a discussion of why a nonparametric model of $Q_j^{(k)}$ is preferred in single-cell genomics, though \cite{sarkar2020separating} did not employ such Poisson mixtures for individual level differential expression testing, which however is the main focus of this work. 

Based on the observations $\big\{X_{ij}^{(k)}, i\in [N_{jk}], j\in [n_k], k\in[K]\big\}$ as well as the measure/expression models \eqref{model:measure}-\eqref{model:expression}, a natural question to ask is whether there exists any population-level gene expression difference among the $K$ groups. For this, we propose to leverage a Fisher-Pitman-type permutation test based on consistent estimators $\big\{\tilde Q_j^{(k)}, j\in [n_k], k\in[K] \big\}$ of the mixing distributions $\big\{ Q_j^{(k)}, j\in [n_k], k\in[K] \big\}$ under Wasserstein metrics, which have received much attention in recent mixture distribution estimation literature (see, among many others, \cite{nguyen2013convergence}, \cite{tian2017learning}, \cite{vinayak2019maximum}, \cite{wu2020optimal}, and the references therein). Particularly appealing choices to us include the NPMLE $\hat Q_j^{(k)}$ and its Poisson-smoothed one $h_{\hat Q_j^{(k)}}$ (notation to be introduced by the end of this section); see Section \ref{sec:test} ahead for the detailed description of the testing procedure. 

Many methods have been developed for differential expression analysis of scRNA-seq data \citep{chen2019single}. However, their focus is differential expression between two groups of cells instead of two groups of individuals. For individual level testing, a standard approach is to add up gene expression across all the cells (of a particular cell type) of an individual to create a pseudo-bulk sample, and then apply the methods for differential expression analysis using bulk RNA-seq data, such as DESeq2 \citep{love2014moderated}. 
The novelty of our proposed procedure is that we assess differential expression across individuals using cell level data instead of pseudo-bulk data. Furthermore, the proposed tests are shown to be consistent against their ANOVA-type alternatives, i.e., they are able to asymptotically distinguish the null from any fixed alternative where the ``between-group'' variation is larger than the ``within-group'' variation, a result that sheds insight to the power of the developed tests and is in line with classic observations \citep{hoeffding1952large,robinson1973large}\footnote{In addition to developing a more flexible non-parametric model, another route to boost the power of differential expression analysis is to de-noise the scRNA-seq data; see \cite{ideas} for a proposal along that track.}.

As a byproduct of our theoretical study, this paper further justifies the use of NPMLEs via establishing their rate-optimality in estimating the Poisson mixing distribution under the Wasserstein-1 ($W_1$) metric. Although the consistency of the NPMLEs has been established in the literature for different nonparametric mixture models (cf. \cite{simar1976maximum} for nonparametric Poisson mixtures; and \cite{chen2017consistency} and the references therein for more general models), NPMLEs' rates of convergence and their matching to a minimax lower bound are long standing until very recently. Built on the breakthroughs in binomial \citep{tian2017learning,vinayak2019maximum} and Gaussian mixtures \citep{wu2020optimal} (see also \cite{jiang2019rate} for a related study on the nonparametric
likelihood ratio test) as well as the new analytical techniques devised in
 \cite{jiao2015minimax}, \cite{wu2016minimax}, 
 \cite{jiao2018minimax}, and \cite{han2020optimality}, we are now able to further the optimality of NPMLEs to the nonparametric Poisson mixtures under minimal assumptions on the true mixing distribution function. These results %, although not directly related to, 
yield additional theoretical support for the use of NPMLEs in our developed tests.

The rest of this paper is organized as follows. Section \ref{sec:test} describes the model setup and studies the size and power of the proposed permutation tests. Section \ref{sec:algorithm} discusses implementation of the developed test. The finite-sample performance of the developed (smoothed or not) NPMLE-based permutation tests is investigated in Section \ref{sec:sim}. Section \ref{sec:app} applies the studied tests to a real scRNA-seq data containing single brain nuclei from autism subjects and healthy controls  \citep{velmeshev2019single} and discover significantly differentially expressed genes that cannot be detected using the benchmark DESeq2 method applied on pseudo-bulk data \citep{love2014moderated}. In Section \ref{sec:optimality}, we justify the use of NPMLEs in the permutation tests outlined in Section \ref{sec:test} by providing minimax optimality results for the NPMLE for nonparametric mixture of Poissons. In the last section, Section \ref{Section:Proofs} we provide outline of proofs. All the technical details of the proofs are relegated to a supplement.

{\it Notation.} For any two distributions $P,Q$ on the real line, the Wasserstein-1 distance is defined to be $W_1(P,Q):=\sup_{\ell\in {\rm Lip}_1}\int \ell({\sf d}P-{\sf d}Q)$, where ${\rm Lip}_1$ represents all $1$-Lipschitz functions. For any distribution $P$ on the nonnegative real line, we define its Poisson smoothed version as $$h_Q(x):=\int_0^{\infty}e^{-\lambda}\frac{\lambda^x}{x!}{\sf d}Q(\lambda),\ x = 0, 1, 2, ....$$ For any two constants $a,b$, we denote $a\vee b := \max\{a,b\}$ and $a\wedge b := \min\{a,b\}$. %Throughout the manuscript, $c, C$ are understood to be generic constants whose value may change at different places.

\section{Permutation tests}\label{sec:test}

\subsection{Setup} 

Throughout this section, it is assumed that the observations are heterogeneous count data $\{X_{ij}^{(k)}, i\in [N_{jk}], j\in[n_k], k\in [K]\}$ with $N_{jk}=N_{jk,n}\to \infty$ and $n_k=n_{k,n}\to \infty$ as $n:=\sum n_k\to \infty$. In contrast, $K\geq 2$ is assumed to be a fixed integer. It is further assumed that the probability measures $Q_j^{(k)}$'s in \eqref{model:measure} have a common support $[0,B]$ for some $B>0$ that is known a priori (cf. appendix Section \ref{sec:app-details} for a real implementation) and kept to be fixed in this section; later in Section \ref{sec:optimality} we will explore a more general setting where $B=B_n$ is allowed to increase with $n$.

To facilitate the approach to distinguishing differences among the $K$ groups, in addition to the measurement model \eqref{model:measure} and the expression model \eqref{model:expression},  a third-layer ``population model" is introduced to encourage independent and identically distributed (i.i.d.) randomness among each $n_k$ within-group expression models:
\begin{align}\label{model:population}
\text{for each }k\in [K]: \quad Q_1^{(k)}, \ldots, Q_{n_k}^{(k)}~ \stackrel{i.i.d.}{\sim} ~\mathcal{Q}_k.~~~ \text{(population model)}
\end{align}
Here $\mathcal{Q}_k$ is understood to be a probability measure over the Prohorov-metric topology of the space of probability measures that are defined on the Borel $\sigma$-field of $[0,B]$; details about constructing Prohorov-metric topology are referred to Pages 72-73 in \cite{billingsley1999convergence}. Following the discussions in \citet[Section ``Modeling scRNA-seq data"]{sarkar2020separating}, we do not specify $\mathcal{Q}_k$ except for assuming boundedness and well-definedness. 

To wrap up, the model considered in this manuscript, summarizing the three layers (\eqref{model:measure}, \eqref{model:expression}, \eqref{model:population}), is: 
\begin{align}\label{eq:model-all}
&\Big\{X_{ij}^{(k)}, i\in [N_{jk}], j\in[n_k], k\in[K]\Big\} \text{ are independently distributed with PMFs } \notag\\
&\quad\quad\quad\quad\quad \int \Big[\int_0^B e^{-\lambda r_{ij}^{(k)}}\frac{\{\lambda r_{ij}^{(k)}\}^x}{x!}{\sf d}Q(\lambda)\Big]{\sf d}\mathcal{Q}_k(Q), ~~~x=0,1,2,\ldots.
\end{align}
%Skorokhod space $D(0,B)$ with left limit 0, right limit 1, and equipped with the Skorokhod topology (see Chapter 3 in \cite{billingsley1999convergence} for introduction to these terminologies). A step-to-step measure-theoretical construction of $\mathcal{Q}_k$  is purposely avoided here as it is divergent from the main focus of this paper; we refer the readers of interest to Chapter 3 in \cite{billingsley1999convergence} as well as Chapter 1 in \cite{kallenberg2017random}.  

Under the above model, it is understood that $\mathcal{Q}_1,\ldots,\mathcal{Q}_K$ and $K\geq 2$ are fixed, all of which won't change with $n$. Besides $\mathcal{Q}_1,\ldots,\mathcal{Q}_K$ and accordingly the random measures $Q_j^{(k)}$'s, the observations $X_{ij}^{(k)}$'s also depend on the read depths $r_{ij}^{(k)}=r_{ij,n}^{(k)}$'s that are allowed to change with $n$. We are hence faced with a triangular array of possibly highly heterogeneous observations.

\subsection{Tests} 

Under Model \eqref{eq:model-all}, we are interested in testing the following null hypothesis,
\begin{align}\label{eq:H0}
H_0: \mathcal{Q}_1=\mathcal{Q}_2=\cdots=\mathcal{Q}_K,
\end{align}
and aim to detect any population-level difference between groups. Note that here, due to the incorporation of read depths $r_{ij}^{(k)}$'s, the measurements themselves even within each group are generally not identically distributed; thus, a naive empirical distribution function based test could be substantially biased.

The main interest of this paper is to explore how robust a Fisher-Pitman-type test can be when each unobserved subject-level random measure $Q_j^{(k)}$ is replaced by a plug-in-type estimate $\tilde Q_j^{(k)}$ and its Poisson-smoothed version $h_{\tilde Q_j^{(k)}}$ calculated from the measurements $X_{1j}^{(k)},\ldots,X_{N_{jk}j}^{(k)}$. To this end, let's regulate $\tilde Q_j^{(k)}$ as follows. 

\begin{definition}
For any $j\in[n_k]$ and any $k\in[K]$, an estimator $\tilde Q_j^{(k)}$ of $Q_j^{(k)}$ is said to be subject-specific conditionally $W_1$-consistent (shorthanded as ``conditionally $W_1$-consistent'') if it is (i) a function of $X_{1j}^{(k)},\ldots,X_{N_{jk}j}^{(k)}$; (ii) of support $[0,B]$; and (iii) satisfying 
\begin{align}\label{eq:key}
E\Big\{W_1\Big(\tilde Q_j^{(k)}, Q_j^{(k)}\Big)~\Big |~Q_j^{(k)}\Big\} \to 0 \text{ as } N_{jk}=N_{jk,n}\to\infty
\end{align}  
for almost all $Q_j^{(k)}$ with regard to the measure $\mathcal{Q}_k$.
\end{definition}

We next consider the Poisson-smoothed mixing distribution estimator  
\[
h_{\tilde Q_j^{(k)}}:= \int_0^{\infty}e^{-\lambda}\frac{\lambda^x}{x!}{\sf d}\tilde Q_j^{(k)}(\lambda)
\]
based on any conditionally $W_1$-consistent estimator $\tilde Q_j^{(k)}$. It justifies the use of smoothed NPMLEs as an alternative to directly using the original ones; see also Proposition 3.1 in \cite{lambert1984asymptotic} for more results as read depths are all forced to be equal. 

\begin{theorem} \label{theorem:ConsistencyinMixingImpliesConsistencyinMixture}
Suppose $\tilde Q_j^{(k)}$ is conditionally $W_1$-consistent. Then 
\[
E\Big\{W_1\Big(h_{\tilde Q_j^{(k)}}, h_{Q_j^{(k)}}\Big)~\Big |~Q_j^{(k)}\Big\} \to 0 \text{ as } N_{jk}=N_{jk,n}\to\infty
\]
for almost all $Q_j^{(k)}$ with regard to the measure $\mathcal{Q}_k$.
\end{theorem}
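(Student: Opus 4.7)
The strategy is to prove a non-expansion inequality for the Poisson-smoothing map: for every pair of Borel probability measures $P, Q$ on $[0, \infty)$ with finite first moments,
$$W_1(h_P, h_Q) \;\le\; W_1(P, Q). \qquad (\star)$$
Once $(\star)$ is in hand, specializing to $P = \tilde Q_j^{(k)}$ and $Q = Q_j^{(k)}$ (both supported in the compact interval $[0, B]$) gives a sure, not just almost sure, inequality between random variables, and the theorem follows immediately by taking conditional expectations and invoking the conditional $W_1$-consistency hypothesis \eqref{eq:key}.

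To prove $(\star)$, I would use an explicit coupling. Let $\pi$ be an optimal $W_1$-coupling of $(P, Q)$, so that $\int |u - v|\, {\sf d}\pi(u, v) = W_1(P, Q)$. Draw $(U, V) \sim \pi$; conditionally on $(U, V)$, construct three independent Poisson variables $A \sim {\rm Poisson}(U \wedge V)$, $B \sim {\rm Poisson}((U - V)_+)$, and $C \sim {\rm Poisson}((V - U)_+)$, and set $X = A + B$ and $Y = A + C$. By the convolution identity for Poissons, $X \mid (U,V) \sim {\rm Poisson}(U)$ and $Y \mid (U,V) \sim {\rm Poisson}(V)$; marginalizing out $(U, V)$ through $\pi$ yields $X \sim h_P$ and $Y \sim h_Q$. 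Since at most one of $(U - V)_+$ and $(V - U)_+$ is nonzero, at most one of $B, C$ is nonzero, so $|X - Y| = B + C$ and $E[|X - Y| \mid U, V] = (U-V)_+ + (V-U)_+ = |U - V|$. Combining,
$$W_1(h_P, h_Q) \;\le\; E|X - Y| \;=\; \int |u - v|\, {\sf d}\pi(u, v) \;=\; W_1(P, Q).$$

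Applying $(\star)$ pointwise, conditional on $Q_j^{(k)}$, to the random measure $\tilde Q_j^{(k)}$ yields $W_1(h_{\tilde Q_j^{(k)}}, h_{Q_j^{(k)}}) \le W_1(\tilde Q_j^{(k)}, Q_j^{(k)})$ almost surely, so by \eqref{eq:key} the left-hand side has conditional expectation tending to zero for $\mathcal{Q}_k$-almost all $Q_j^{(k)}$, as desired. I do not anticipate a real obstacle here: the only subtlety is standard Borel-measurability of the coupling construction, which follows from measurable dependence of the Poisson kernel on its rate parameter, together with bookkeeping to ensure that conditioning on $Q_j^{(k)}$ is preserved throughout. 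An alternative route that bypasses the coupling altogether is to observe, via the Kantorovich--Rubinstein dual, that for any $1$-Lipschitz $f:\mathbb{Z}_{\ge 0}\to\mathbb{R}$ the function $\lambda \mapsto E[f({\rm Poisson}(\lambda))]$ has derivative $E[f({\rm Poisson}(\lambda)+1)-f({\rm Poisson}(\lambda))]$ of absolute value at most $1$, hence is $1$-Lipschitz on $[0,\infty)$, yielding $(\star)$ directly.
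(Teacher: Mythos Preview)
Your proof is correct and takes a genuinely different route from the paper's. You establish a pointwise non-expansion inequality $W_1(h_P,h_Q)\le W_1(P,Q)$ via an explicit Poisson coupling (or, equivalently, via the dual Lipschitz argument), and then the theorem is an immediate consequence. The paper instead proves only the qualitative implication ``$W_1(\tilde Q,Q)\to 0 \Rightarrow W_1(h_{\tilde Q},h_Q)\to 0$'' for deterministic sequences, by unpacking $W_1$-convergence as weak convergence plus convergence of first moments, invoking Skorokhod representation and dominated convergence to get $h_{\tilde Q}(x)\to h_Q(x)$ for each $x$; it then upgrades to $L^1$ convergence for random $\tilde Q$ via a subsequence/convergence-in-probability argument together with the uniform bound $W_1(h_{\tilde Q},h_Q)\le 2B$.

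Your approach is shorter and yields a strictly stronger, quantitative statement: a sure inequality rather than a limiting one. It also sidesteps the Skorokhod and subsequence machinery entirely. The paper's route, by contrast, is the more ``soft'' continuity argument that would generalize to any smoothing map that is merely continuous in the weak-plus-first-moment topology, not necessarily $1$-Lipschitz; but that extra generality is not used here.
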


A particularly appealing candidate estimator of the mixing distribution is the following NPMLE $\hat Q_j^{(k)}$ with read depth incorporated:
\begin{align}\label{eq:npmle}
\hat Q_j^{(k)} \in \argmax_{Q \text{ of support }[0,B]} \sum\limits_{i\in [N_{jk}]}\log\int_0^{\infty} e^{-\lambda r_{ij}^{(k)}}\frac{\{\lambda r_{ij}^{(k)}\}^{X_{ij}^{(k)}}}{X_{ij}^{(k)}!}{\sf d}Q(\lambda).
\end{align}
Note that here $\hat Q_j^{(k)}$ may not be unique due to read depths, and if there are multiple choices, pick any one of them (cf. Remark \ref{remark:non-uniqueness}). We shall discuss the calculation of $\hat Q_j^{(k)}$ in Section \ref{sec:algorithm}. The next theorem shows that NPMLEs are conditionally $W_1$-consistent under no further assumptions on the population measures $\mathcal{Q}_k$'s except for the already imposed bounded support one. 

\begin{theorem}[Conditionally $W_1$-consistency of NPMLEs] \label{theorem:W1consistency of NPMLEs} Assume $N_{jk}=N_{jk,n}\to \infty$ as $n\to\infty$, $r_{ij}^{(k)}=r_{ij, n}^{(k)}\in [\gamma_0,\gamma_1]$ are uniformly upper and lower bounded by two positive universal constants $\gamma_0,\gamma_1$, and $\mathcal{Q}_k$'s have a common fixed support $[0,B]$. We then have the NPMLEs $\hat Q_j^{(k)}$'s are all conditionally $W_1$-consistent.
\end{theorem}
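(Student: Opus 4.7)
My plan is to condition on the realization $Q^{\ast} := Q_j^{(k)}$ and show that, for almost every such $Q^{\ast}$ supported on $[0,B]$,
\[
E\Big\{W_1\big(\hat Q_j^{(k)},\, Q^{\ast}\big)\,\Big|\,Q^{\ast}\Big\} \longrightarrow 0.
\]
Since $W_1(\hat Q_j^{(k)}, Q^{\ast}) \leq B$ and weak convergence of probability measures on the compact interval $[0,B]$ is equivalent to convergence in $W_1$, dominated convergence reduces the claim to showing $\hat Q_j^{(k)} \Rightarrow Q^{\ast}$ in probability as $N_{jk} \to \infty$, where $\Rightarrow$ denotes weak convergence.

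To establish this weak convergence I would run a subsequence argument. From any subsequence of $\{\hat Q_j^{(k)}\}$, tightness --- immediate from compactness of $[0,B]$ in the Prohorov topology --- allows extracting a further subsequence along which $\hat Q_j^{(k)} \Rightarrow \bar Q$ for some random probability measure $\bar Q$ on $[0,B]$; simultaneously, the empirical measure $\nu_{N_{jk}}$ of the bounded, positive read depths $\{r_{ij}^{(k)}\}_{i=1}^{N_{jk}}$ can be taken to converge weakly on $[\gamma_0,\gamma_1]$ to some limit $\nu_\infty$. It then remains to identify $\bar Q = Q^{\ast}$ almost surely.

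The core analytic step would translate the NPMLE inequality, $\sum_i \log h_{\hat Q_j^{(k)}, r_{ij}^{(k)}}(X_{ij}^{(k)}) \geq \sum_i \log h_{Q^{\ast}, r_{ij}^{(k)}}(X_{ij}^{(k)})$, into Hellinger-type convergence. Combining the inequality $\log y \leq 2(\sqrt{y}-1)$ with a uniform deviation bound for the log-likelihood ratio process over the class $\{h_{Q,r} : \mathrm{supp}(Q) \subseteq [0,B],\, r \in [\gamma_0,\gamma_1]\}$ of Poisson mixture PMFs --- a class of finite bracketing entropy thanks to analyticity of Poisson PMFs in the rate and the compactness of both the mixing support and the read-depth range, in the spirit of Wong--Shen and van de Geer --- should yield
\[
\frac{1}{N_{jk}}\sum_{i=1}^{N_{jk}} H^2\big(h_{\hat Q_j^{(k)},\, r_{ij}^{(k)}},\; h_{Q^{\ast},\, r_{ij}^{(k)}}\big) \;=\; \int H^2\big(h_{\hat Q_j^{(k)},\, r},\; h_{Q^{\ast},\, r}\big)\, d\nu_{N_{jk}}(r) \xrightarrow{P} 0.
\]
Because $(Q,r) \mapsto h_{Q,r}$ is jointly continuous in the product of the Prohorov and Euclidean topologies, passing to the weak limit along the chosen subsequence gives $\int H^2(h_{\bar Q, r}, h_{Q^{\ast}, r})\, d\nu_\infty(r) = 0$, forcing $h_{\bar Q, r} = h_{Q^{\ast}, r}$ for $\nu_\infty$-almost every $r \in [\gamma_0,\gamma_1]$. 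Identifiability of Poisson mixtures --- the probability generating function $z \mapsto \int e^{\lambda r(z-1)}\, dQ(\lambda)$ is, up to affine rescaling, the Laplace transform of $Q$ and hence determines $Q$ uniquely among measures on $[0,B]$ --- then yields $\bar Q = Q^{\ast}$.

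The principal obstacle I anticipate is the uniform control above: managing the empirical log-likelihood process over an infinite-dimensional nonparametric class while the observations are only independent (not identically distributed) because of the heterogeneous read depths. This requires entropy bounds tailored to the Poisson mixture family together with careful use of the uniform positivity of $r_{ij}^{(k)} \in [\gamma_0,\gamma_1]$ and the boundedness of $\mathrm{supp}(Q) \subseteq [0,B]$; as an alternative route, one could adapt the classical Pfanzagl-style NPMLE consistency framework to the present triangular-array, non-identically-distributed setting.
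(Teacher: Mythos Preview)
Your primary route---Hellinger control of the NPMLE via Wong--Shen/van de Geer empirical process machinery, followed by a subsequence/identifiability argument---is a plausible strategy, but it is genuinely different from what the paper does. The paper takes exactly the alternative you mention at the end: a Kiefer--Wolfowitz/Pfanzagl covering argument. Concretely, for fixed $Q^{\ast}$ and $\delta>0$ it covers the complement $\mathcal B_\delta^c(Q^{\ast})$ by finitely many $W_1$-balls $\mathcal B_\epsilon(Q_j)$, introduces the ``convexified'' log-likelihood increments $Y_{j,\epsilon}(r)=\log\{1+u(h_{r,Q^{\ast}}/h_{r,\mathcal B_\epsilon(Q_j)}-1)\}$, and shows $E\{Y_{j,\epsilon}(r)\}\ge c_j>0$ \emph{uniformly} in $r\in[\gamma_0,\gamma_1]$ using continuity in $r$ and Dini's theorem, together with a uniform variance bound $\mathrm{Var}\{Y_{j,\epsilon}(r)\}\le C_j$. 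Chebyshev then gives a triangular-array weak law for $\frac1N\sum_i Z_{ij,n}$, which forces any maximizer of the likelihood into $\mathcal B_\delta(Q^{\ast})$ with probability tending to one; dominated convergence (your same $W_1\le B$ observation) finishes. No bracketing entropy, no empirical process theory, and no subsequence extraction on the read-depth empirical measure are needed.

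On your approach: the outline is sound, but two points deserve care. First, the Hellinger deviation bound you invoke is not off-the-shelf here because the data are independent but \emph{not} identically distributed; you would need a non-i.i.d.\ analogue of the Wong--Shen inequality with bracketing entropy uniform over $r\in[\gamma_0,\gamma_1]$, which is doable but is precisely the heavy lifting the paper's covering argument avoids. Second, the passage ``$\int H^2(h_{\hat Q,r},h_{Q^{\ast},r})\,d\nu_{N}(r)\to 0$ implies $\int H^2(h_{\bar Q,r},h_{Q^{\ast},r})\,d\nu_\infty(r)=0$'' mixes convergence in probability of the random $\hat Q$ with deterministic convergence of $\nu_N$; to make this rigorous you should pass to a further a.s.-convergent subsequence before invoking joint continuity of $(Q,r)\mapsto h_{Q,r}$. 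These are fillable gaps, and your route would in principle yield rates (as the paper does separately in Section~\ref{sec:optimality} under the simplifying assumption of equal read depths), whereas the paper's covering argument is purely qualitative but far lighter.
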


\begin{remark}
In the literature, consistency of NPMLEs of mixing distributions under the classical i.i.d. mixture distribution setup (corresponding to the case with all read depths identical to each other) has been studied in depth. Notable results include \cite{kiefer1956consistency}, \cite{simar1976maximum}, \cite{pfanzagl1988consistency}; note also the survey by Chen \citep{chen2017consistency}. However, although arising naturally from single-cell genomics modeling, read-depth-incorporated nonparametric mixture distributions have not received much attention in mathematical statistics and, to our knowledge, Theorem \ref{theorem:W1consistency of NPMLEs} delivers the first consistency result for NPMLEs under this heterogeneous setting. 
\end{remark}

Based on any conditionally $W_1$-consistent estimators $\{\tilde Q_j^{(k)}\}$ of $\{Q_j^{(k)}\}$ and their Poisson-smoothed versions $h_{\tilde Q_j^{(k)}}$'s, the proposed ANOVA-type (pseudo-$F$) test statistics are
\[
\tilde F: = \frac{\frac{1}{n}\sum\limits_{k_1,k_2 \in [K]}\sum\limits_{j_1\in [n_{k_1}],j_2\in [n_{k_2}]}W_1\Big(\tilde Q_{j_1}^{(k_1)},\tilde Q_{j_2}^{(k_2)}\Big)^2-\sum\limits_{k\in [K]}\frac{1}{n_k}\sum\limits_{j_1,j_2\in [n_k]}W_1\Big(\tilde Q_{j_1}^{(k)},\tilde Q_{j_2}^{(k)}\Big)^2}{\sum\limits_{k\in [K]}\frac{1}{n_k}\sum\limits_{j_1,j_2\in [n_k]}W_1\Big(\tilde Q_{j_1}^{(k)},\tilde Q_{j_2}^{(k)}\Big)^2}
\]
and
\[
\tilde F_h:= \frac{\frac{1}{n}\sum\limits_{k_1,k_2 \in [K]}\sum\limits_{j_1\in [n_{k_1}],j_2\in [n_{k_2}]} W_1\Big(h_{\tilde Q_{j_1}^{(k_1)}},h_{\tilde Q_{j_2}^{(k_2)}}\Big)^2-\sum\limits_{k\in [K]}\frac{1}{n_k}\sum\limits_{j_1,j_2\in [n_k]}W_1\Big(h_{\tilde Q_{j_1}^{(k)}},h_{\tilde Q_{j_2}^{(k)}}\Big)^2}{\sum\limits_{k\in [K]}\frac{1}{n_k}\sum\limits_{j_1,j_2\in [n_k]}W_1\Big(h_{\tilde Q_{j_1}^{(k)}},h_{\tilde Q_{j_2}^{(k)}}\Big)^2}.
\]
It is ready to check that these two test statistics both reduce to the original one-way ANOVA statistic if the examined space is the real space equipped with the Euclidean norm. The studied statistics then generalize the one-way ANOVA statistics to the $W_1$-metric measure space with different inputs (mixing distribution smoothed or not); similar generalizations have been made in various other (non-)Euclidean spaces \citep{anderson2001new,mielke2007permutation,petersen2019frechet}.

We then move on to introduce the corresponding permuted ANOVA-type test statistics. To this end, for each permutation $\pi: [n]\to [n]$, let $\Pi^{j,k}=(\Pi^{j,k}_1,\Pi^{j,k}_2):=\pi^{\uparrow}(j,k)$ represent the original subject and population indices corresponding to ``the $j$-th subject in the $k$-th group'' after permutation $\pi$. The permuted test statistics are
%$\pi^{-1}\tilde Q_{j}^{(k)}$ denote the original $\tilde Q$ before permutation $\pi$.
\[
\tilde F^{\pi}:=\frac{\frac{1}{n}\sum\limits_{k_1,k_2 \in [K]}\sum\limits_{j_1\in [n_{k_1}],j_2\in [n_{k_2}]}W_1\Big(\tilde Q_{j_1}^{(k_1)},\tilde Q_{j_2}^{(k_2)}\Big)^2-\sum\limits_{k\in [K]}\frac{1}{n_{k}}\sum\limits_{j_1,j_2\in [n_{k}]}W_1\Big(\tilde Q_{\Pi^{j_1,k}_1}^{(\Pi^{{j_1,k}}_2)},\tilde Q_{\Pi^{j_2,k}_1}^{(\Pi^{j_2,k}_2)}\Big)^2}{\sum\limits_{k\in [K]}\frac{1}{n_{k}}\sum\limits_{j_1,j_2\in [n_{k}]}W_1\Big(\tilde Q_{\Pi^{j_1,k}_1}^{(\Pi^{{j_1,k}}_2)},\tilde Q_{\Pi^{j_2,k}_1}^{(\Pi^{j_2,k}_2)}\Big)^2}
\]
and
\[
\tilde F_h^{\pi}:= \frac{\frac{1}{n}\sum\limits_{k_1,k_2 \in [K]}\sum\limits_{j_1\in [n_{k_1}],j_2\in [n_{k_2}]} W_1\Big(h_{\tilde Q_{j_1}^{(k_1)}},h_{\tilde Q_{j_2}^{(k_2)}}\Big)^2-\sum\limits_{k\in [K]}\frac{1}{n_{k}}\sum\limits_{j_1,j_2\in [n_{k}]}W_1\Big(h_{\tilde Q_{\Pi^{j_1,k}_1}^{(\Pi^{j_1,k}_2)}},h_{\tilde Q_{\Pi^{j_2,k}_1}^{(\Pi^{j_2,k}_2)}}\Big)^2}{\sum\limits_{k\in [K]}\frac{1}{n_{k}}\sum\limits_{j_1,j_2\in [n_{k}]}W_1\Big(h_{\tilde Q_{\Pi^{j_1,k}_1}^{(\Pi^{j_1,k}_2)}},h_{\tilde Q_{\Pi^{j_2,k}_1}^{(\Pi^{j_2,k}_2)}}\Big)^2}.
\]
The following are the Fisher-Pitman-type permutation tests with nominal level  $\alpha$:
\[
\tilde T_\alpha := 
\begin{cases} 
1, & \text{ if } P(\tilde F^{\pi}<\tilde F~|~\tilde Q_j^{(k)}\text{'s})\geq 1-\alpha,\\
0, & \text{otherwise},
\end{cases}
\] 
and
\[
\tilde T_{h,\alpha}:=
\begin{cases} 
1, & \text{ if } P(\tilde F_h^{\pi}<\tilde F_h~|~\tilde Q_j^{(k)}\text{'s})\geq 1-\alpha,\\
0, & \text{otherwise},
\end{cases}
\]
where the probability here is only with respect to the random permutation $\pi$.
%\[
%\tilde T_\alpha := 
%\begin{cases} 
%1, & \text{ if } \tilde F>\text{ the }(1-\alpha) \text{ quantile of }\Big\{\tilde F^{\pi}:\pi\in\text{ all permutations of }[n]\to[n]\Big\},\\
%0, & \text{otherwise},
%\end{cases}
%\] 
%and
%\[
%\tilde T_{h,\alpha}:=
%\begin{cases} 
%1, & \text{ if } \tilde F_h>\text{ the }(1-\alpha) \text{ quantile of }\Big\{\tilde F_h^{\pi}:\pi\in\text{ all permutations of }[n]\to[n]\Big\},\\
%0, & \text{otherwise}.
%\end{cases}
%\]

As the (Poisson smoothed-)NPMLEs are chosen, the corresponding tests $\tilde T_\alpha$ and $\tilde T_{h,\alpha}$ are specified as $\hat T_\alpha$ and $\hat T_{h,\alpha}$.

\subsection{Theory}

This subsection provides the necessary theoretical support on the presented tests $\tilde F^{\pi}$ and $\tilde F_h^\pi$. Particular focus is on the asymptotic size and consistency against Robinson-type ANOVA alternatives (cf. Theorem 3 in \cite{robinson1973large}). To minimize assumptions and for presentation clearness, we are focused on the following balanced design case:
\begin{assumption}\label{ass:balance}
The design is balanced so that $n_k=n/K$ and $N_{jk}=N$ for $j\in[n_k]$, $k\in[K]$. In addition, it is assumed that the sets $\{r_{ij}^{(k)}, i\in[N]\}$ are invariant with respect to both $j$ and $k$.
\end{assumption}

\begin{remark}
We note that Assumption \ref{ass:balance} can be weakened in a straightforward manner to allow for $n_k/n\to 1/K$, $N_{jk}$'s asymptotically comparable, and the sets $\{r_{ij}^{(k)}, i\in[N_{jk}]\}$ all weakly converge to a same probability measure that does not depend on the particular choice of $j$ and $k$ (see \citet[Proposition 2.2]{shi2020distribution} as well as \cite{deb2019multivariate} for a similar setup in the recent independence testing literature). We however do not pursue these tracks but rather leave them to the readers of interest to verify. 
\end{remark}

Our first result concerns with the sizes of proposed tests, is of a finite-sample nature, and is a direct consequence of a long line of literature on permutation-based tests.

\begin{theorem}[Size validity]\label{theorem:size_validity}
We have, for any finite $N$ and $n$, as long as $H_0$ in \eqref{eq:H0} and Assumption \ref{ass:balance} hold, 
\[
P(\tilde T_\alpha=1 | H_0) \leq \alpha~~~{\rm and}~~~P(\tilde T_{h,\alpha}=1 | H_0) \leq \alpha.
\]
\end{theorem}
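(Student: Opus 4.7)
The approach is the classical randomization/exchangeability argument for permutation tests (as in Lehmann-Romano, Chapter 15). The plan is to show that under $H_0$ together with Assumption \ref{ass:balance}, the collection of permuted statistics $\{\tilde F^{\pi}\}_{\pi \in S_n}$ is exchangeable in the sense that for any fixed $\pi' \in S_n$ the joint law of $(\tilde F^{\pi' \circ \pi})_{\pi \in S_n}$ equals the joint law of $(\tilde F^{\pi})_{\pi \in S_n}$. Once this is established, applying the standard $p$-value-by-randomization lemma yields $P(\tilde T_\alpha = 1 \mid H_0) \le \alpha$, and identically for $\tilde T_{h,\alpha}$.

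\textbf{Step 1: group the data by subject.} Identify the set of $n$ subjects with the index set $I := \bigsqcup_{k \in [K]}\{(j,k) : j \in [n_k]\}$ and, for subject $(j,k) \in I$, define the ``subject record'' $Z_{j}^{(k)} := \bigl\{ (X_{ij}^{(k)}, r_{ij}^{(k)}) \bigr\}_{i \in [N]}$ viewed as an unordered multiset (the NPMLE objective in \eqref{eq:npmle} is symmetric in $i$, so $\hat Q_j^{(k)}$, with any fixed tie-breaking rule, is a function of this multiset only). Under $H_0$ the laws $\mathcal{Q}_1 = \cdots = \mathcal{Q}_K =: \mathcal{Q}$ coincide, and under Assumption \ref{ass:balance} the multiset of read depths $\{r_{ij}^{(k)} : i \in [N]\}$ is the same across every $(j,k)$. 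Hence the $n$ subject records $\{Z_j^{(k)}\}_{(j,k)\in I}$ are i.i.d., and in particular exchangeable as random elements indexed by $I$.

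\textbf{Step 2: transfer exchangeability to the estimators and the statistic.} Since $\tilde Q_j^{(k)}$ (whether chosen as the NPMLE or any other conditionally $W_1$-consistent estimator used symmetrically within each subject) is a deterministic function of $Z_j^{(k)}$, the collection $\{\tilde Q_j^{(k)}\}_{(j,k) \in I}$ is also exchangeable under $S_n$-permutations of $I$. The definition of $\tilde F^{\pi}$ is designed so that the numerator's first ``grand'' double sum is invariant under any $\pi$, while the remaining ``within-group'' double sums are precisely the within-group double sums one would obtain by first permuting the subject labels by $\pi$ and then applying the unpermuted statistic. Consequently $\tilde F^{\pi}\bigl(\{\tilde Q_j^{(k)}\}\bigr) = \tilde F\bigl(\{\tilde Q_{\Pi^{j,k}_1}^{(\Pi^{j,k}_2)}\}\bigr)$, and composition of permutations gives $\tilde F^{\pi' \circ \pi}\bigl(\{\tilde Q_j^{(k)}\}\bigr) = \tilde F^{\pi}\bigl(\{\tilde Q_{\Pi'^{j,k}_1}^{(\Pi'^{j,k}_2)}\}\bigr)$. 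Combined with Step 1, this yields $(\tilde F^{\pi' \circ \pi})_{\pi} \stackrel{d}{=} (\tilde F^{\pi})_{\pi}$, i.e., the randomization hypothesis.

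\textbf{Step 3: conclude via the randomization lemma.} Having verified the randomization hypothesis, a standard argument (e.g., writing $P(\tilde F^{\pi} < \tilde F \mid \tilde Q_j^{(k)}\text{'s})$ as a function of the order statistics of $\{\tilde F^{\pi}\}_{\pi}$ and using that $\tilde F = \tilde F^{\mathrm{id}}$ is uniformly distributed among these values conditional on the unordered multiset) shows that the test has size at most $\alpha$. The identical reasoning applied to $\tilde F_h^{\pi}$ (replacing $\tilde Q_j^{(k)}$ by its Poisson smoothing $h_{\tilde Q_j^{(k)}}$, which is again a deterministic function of $Z_j^{(k)}$) yields the bound for $\tilde T_{h,\alpha}$.

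\textbf{Anticipated obstacle.} The only nontrivial point is ensuring that permuting subject labels truly permutes the estimators, which requires (i) the read-depth multiset-invariance guaranteed by Assumption \ref{ass:balance}, and (ii) a fixed, data-only-dependent tie-breaking rule when the NPMLE \eqref{eq:npmle} is non-unique. Both are essentially bookkeeping rather than substantive analytic work; once they are handled, the theorem follows from the generic permutation-test size identity and therefore holds for every finite $N$ and $n$.
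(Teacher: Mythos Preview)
Your proposal is correct and follows essentially the same approach as the paper: both argue that under $H_0$ together with Assumption \ref{ass:balance} the subject-level data (and hence the estimators $\tilde Q_j^{(k)}$) are i.i.d./exchangeable, so that $\tilde F$ is uniformly distributed over the permutation orbit and the standard permutation-test size bound applies. Your treatment is in fact slightly more careful than the paper's---by working with the unordered multiset $\{(X_{ij}^{(k)},r_{ij}^{(k)})\}_{i\in[N]}$ and flagging the need for a deterministic tie-breaking rule when the NPMLE is non-unique, you address subtleties the paper glosses over.
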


In the following, we are focused on asymptotic results with the balanced design and let $N=N_{n}\to \infty$ as $n \to \infty$. The next theorem is the main result of this subsection. 

\begin{theorem}[Test consistency]\label{theorem:TestConsistency}
Consider $\tilde Q_j^{(k)}$'s to be conditionally $W_1$-consistent estimators of $Q_j^{(k)}$'s. If Assumption \ref{ass:balance} holds, then the following two statements are true. 
\begin{itemize}
\item[(a)] Under any fixed alternative regarding $\mathcal{Q}_1,\ldots,\mathcal{Q}_K$ such that
\begin{eqnarray}\label{eq:ConsistencyH1mixing}
H_1: \frac{1}{K}\sum\limits_{k\in[K]}E\Big\{W_1\Big(Q^{(k)}_1,Q^{(k)}_2\Big)^2\Big\}<\sum\limits_{k_1\neq k_2\in[K]}\frac{E\{W_1(Q_{1}^{(k_1)},Q_{1}^{(k_2)})^2\}}{K(K-1)},
\end{eqnarray}
%\fbox{I do not understand why there is a prime in $k_1$ and $k_2$}
we have $\lim\limits_{n\to\infty}P(\tilde T_\alpha=1 | H_1)= 1$ for each $\alpha\in(0,1)$.
%\fbox{either $\lim XXX=XX$ or $XXX \to XX$. There is no meaning of $\lim XXX \to XX$.}
\item[(b)] Under any fixed alternative regarding $\mathcal{Q}_1,\ldots,\mathcal{Q}_K$ such that
\begin{eqnarray}\label{eq:ConsistencyH1mixture}
H_{1,h}: \frac{1}{K}\sum\limits_{k\in[K]}E\Big\{W_1\Big(h_{Q^{(k)}_1},h_{Q^{(k)}_2}\Big)^2\Big\}<\sum\limits_{k_1\neq k_2\in[K]}\frac{E\Big\{W_1\Big(h_{Q_{1}^{(k_1)}},h_{Q_{1}^{(k_2)}}\Big)^2\Big\}}{K(K-1)},
\end{eqnarray}
we have $\lim\limits_{n\to\infty}P(\tilde T_{h,\alpha}=1 | H_{1,h})= 1$  for each $\alpha\in(0,1)$.
\end{itemize}
\end{theorem}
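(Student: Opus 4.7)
The plan has three steps: (i) reduce each sum appearing in $\tilde F$ and $\tilde F^{\pi}$ to the corresponding $Q$-based quantity using the conditional $W_1$-consistency of $\tilde Q_j^{(k)}$; (ii) show that $\tilde F$ converges in probability to a strictly positive constant under $H_1$; (iii) show that, conditional on the data, $\tilde F^{\pi}$ concentrates at $0$ under a uniform draw of $\pi$, so that $P_{\pi}(\tilde F^{\pi}<\tilde F\mid\text{data})\to 1$ in probability and hence $P(\tilde T_{\alpha}=1\mid H_1)\to 1$ for every $\alpha\in(0,1)$. For step (i), all measures in play are supported on $[0,B]$, so every $W_1$-distance is at most $B$, and
\[
\bigl|W_1(P_1,P_2)^2-W_1(P_1',P_2')^2\bigr|\le 2B\bigl\{W_1(P_1,P_1')+W_1(P_2,P_2')\bigr\}.
\]
Combining this with conditional $W_1$-consistency, the bounded convergence theorem, and Markov's inequality yields $n^{-1}\sum_{k,j}W_1(\tilde Q_j^{(k)},Q_j^{(k)})=o_P(1)$, whence each of the double sums in $\tilde F$ and $\tilde F^{\pi}$ agrees with its $Q$-based analogue up to $o_P(1)$ after normalization by $n/K$.

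After the reduction in step (i), each of the normalized sums
\[
\frac{1}{n_{k_1}n_{k_2}}\sum_{j_1\in[n_{k_1}],\,j_2\in[n_{k_2}]}W_1\bigl(Q_{j_1}^{(k_1)},Q_{j_2}^{(k_2)}\bigr)^{2}
\]
is a bounded (by $B^{2}$) U-statistic in the i.i.d.\ draw $\{Q_j^{(k)}\}$ coming from the population model \eqref{model:population}, so the ordinary weak law gives convergence in probability to $E\{W_1(Q_1^{(k_1)},Q_1^{(k_2)})^{2}\}$ (and to $E\{W_1(Q_1^{(k)},Q_2^{(k)})^{2}\}$ when $k_1=k_2=k$). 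A short calculation under Assumption~\ref{ass:balance} then gives
\[
\tilde F\xrightarrow{P}\frac{K-1}{K}\cdot\frac{\mu_{\mathrm{between}}-\mu_{\mathrm{within}}}{\mu_{\mathrm{within}}},
\]
where $\mu_{\mathrm{within}}:=K^{-1}\sum_k E\{W_1(Q_1^{(k)},Q_2^{(k)})^{2}\}$ and $\mu_{\mathrm{between}}:=[K(K-1)]^{-1}\sum_{k_1\ne k_2}E\{W_1(Q_1^{(k_1)},Q_1^{(k_2)})^{2}\}$. The hypothesis $H_1$ in \eqref{eq:ConsistencyH1mixing} is precisely $\mu_{\mathrm{within}}<\mu_{\mathrm{between}}$, so this limit is strictly positive.

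For step (iii), condition on the data and write $V(a,b):=W_1(\tilde Q_a,\tilde Q_b)^{2}$ in flat indexing. The first term of the numerator of $\tilde F^{\pi}$ is permutation-invariant, and a direct sampling-without-replacement computation gives
\[
E_{\pi}\!\left[\sum_k\frac{1}{n_k}\sum_{j_1,j_2\in[n_k]}\!V\bigl(\Pi^{j_1,k},\Pi^{j_2,k}\bigr)\right]=\frac{n-K}{n(n-1)}\sum_{a\neq b}V(a,b),
\]
which is within a factor $1+O(1/n)$ of the first term of the numerator. A Hoeffding-type concentration bound for permutation statistics then shows that the conditional $\pi$-variance of the permuted within-group sum vanishes at rate $O(1/n)$, so $\tilde F^{\pi}\xrightarrow{P_{\pi}}0$ for almost every realization of the data. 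Combined with step (ii), this yields $P_{\pi}(\tilde F^{\pi}<\tilde F\mid\text{data})\to 1$ in probability, proving (a); part (b) follows by the identical argument after replacing each $\tilde Q_j^{(k)}$ by $h_{\tilde Q_j^{(k)}}$ and invoking Theorem~\ref{theorem:ConsistencyinMixingImpliesConsistencyinMixture} to transport conditional $W_1$-consistency. The main technical obstacle is step (iii): proving uniform-in-data $\pi$-concentration when the estimators $\tilde Q_j^{(k)}$ are only conditionally $W_1$-consistent in expectation rather than almost surely, which will require a concentration argument for permutation U-statistics carefully tuned to the bounded-support setting.
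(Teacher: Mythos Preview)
Your proposal is correct and shares the paper's basic skeleton: transfer from $\tilde Q$ to $Q$ via the $2B$-Lipschitz bound on squared $W_1$, invoke the U-statistic law of large numbers for the population-level sums, and handle (b) by appealing to Theorem~\ref{theorem:ConsistencyinMixingImpliesConsistencyinMixture}. The route diverges at your step~(iii). You argue \emph{conditionally} on the data and aim to show $P_\pi(\tilde F^\pi<\tilde F\mid\text{data})\to 1$ via a permutation-variance bound; the paper instead observes (its Step~1, a short Markov argument) that it suffices to prove the \emph{unconditional} statement $P(\tilde F>\tilde F^\pi\mid H_1)\to 1$ with $\pi$ treated as a single uniform random permutation, and then establishes $\tilde F-\tilde F^\pi\xrightarrow{p}C>0$ jointly over data and $\pi$. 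Concretely, the paper computes the limits of $SS_k^\pi/(n_k-1)$ and $SS_T/(n-1)$ separately (via a variance decomposition conditioning on $\pi$, using $n_{k,k'}^\pi/n_k\xrightarrow{p}1/K$), obtaining $F^\pi\xrightarrow{p}0$ and $F\xrightarrow{p}\tfrac{K-1}{K}\cdot\tfrac{\mu_{\mathrm{between}}-\mu_{\mathrm{within}}}{\mu_{\mathrm{within}}}$, the same constant you derive. The payoff of the paper's route is that your flagged ``main technical obstacle'' disappears: no uniform-in-data conditional concentration is needed, only convergence in the joint law. That said, your worry is overstated even on your own route: the $\pi$-variance of the permuted within-sum depends only on the realized distances $V(a,b)\le B^2$, not on how accurate the $\tilde Q_j^{(k)}$'s are, so the bound is automatically uniform over data by the compact-support assumption; the expectation-versus-almost-sure consistency of $\tilde Q_j^{(k)}$ is only relevant for step~(i), where you already handle it correctly.
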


Specific to (smoothed-)NPMLEs, the following theorem is a direct consequence of Theorems \ref{theorem:W1consistency of NPMLEs}-\ref{theorem:TestConsistency}.

\begin{corollary} Suppose Assumption \ref{ass:balance} and all conditions in Theorem \ref{theorem:W1consistency of NPMLEs} hold. Then the following are true for any $\alpha\in (0,1)$.
\begin{itemize}
\item[(a)] For any finite $N$ and $n$, as long as $H_0$ in \eqref{eq:H0} holds, we have
\[
P(\hat T_\alpha=1 | H_0) \leq \alpha~~~{\rm and}~~~P(\hat T_{h,\alpha}=1 | H_0) \leq \alpha.
\]
\item[(b)] Concerning any fixed alternative $H_1$ (or $H_{1h}$), we have
\[
\lim\limits_{n\to\infty}P(\hat T_\alpha=1~|~H_1) = 1 ~~~{\rm and}~~~\lim\limits_{n\to\infty}P(\hat T_{h,\alpha}=1~|~H_{1,h})= 1.
\]
\end{itemize}
\end{corollary}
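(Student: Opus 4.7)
The corollary asserts both size validity and consistency for the NPMLE-based tests and, as the text suggests, should follow as a direct consequence of the three preceding theorems once one verifies that the NPMLEs $\hat Q_j^{(k)}$ qualify as admissible inputs to each. The plan, therefore, is purely a matter of bookkeeping: check the hypotheses, feed $\hat Q_j^{(k)}$ into the right result, and read off the conclusion.

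For part (a), the plan is to invoke Theorem \ref{theorem:size_validity} with the specialization $\tilde Q_j^{(k)}=\hat Q_j^{(k)}$. Reading the statement of Theorem \ref{theorem:size_validity}, one sees that its conclusion requires only (i) that each $\tilde Q_j^{(k)}$ be a function of the subject-group-specific sample $\{X_{ij}^{(k)}, i\in[N_{jk}]\}$, (ii) that it be supported on $[0,B]$, and (iii) that Assumption \ref{ass:balance} hold; no consistency is needed. The NPMLE defined in \eqref{eq:npmle} satisfies (i) and (ii) by construction. Under Assumption \ref{ass:balance} the read-depth configurations $\{r_{ij}^{(k)}\}$ are invariant in $(j,k)$, so the joint distribution of $\{\hat Q_j^{(k)}\}$ is exchangeable across $(j,k)$-labels under $H_0$; this is precisely the mechanism exploited in the proof of Theorem \ref{theorem:size_validity}, and the resulting bound $\leq\alpha$ transfers verbatim to $\hat T_\alpha$ and $\hat T_{h,\alpha}$.

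For part (b), the plan is to chain Theorem \ref{theorem:W1consistency of NPMLEs} into Theorem \ref{theorem:TestConsistency}. The hypotheses of the corollary import those of Theorem \ref{theorem:W1consistency of NPMLEs} (bounded read depths, common fixed support $[0,B]$, and $N_{jk}\to\infty$), so its conclusion gives that every NPMLE $\hat Q_j^{(k)}$ is conditionally $W_1$-consistent. This is exactly the input required by Theorem \ref{theorem:TestConsistency}, whose parts (a) and (b), applied with $\tilde Q_j^{(k)}=\hat Q_j^{(k)}$, immediately produce $\lim_{n\to\infty}P(\hat T_\alpha=1\mid H_1)=1$ and $\lim_{n\to\infty}P(\hat T_{h,\alpha}=1\mid H_{1,h})=1$, respectively.

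There is no substantive obstacle; the entire argument amounts to verifying that the NPMLE meets the abstract criteria placed on $\tilde Q_j^{(k)}$ in the earlier theorems. The only minor subtlety worth flagging is the potential non-uniqueness of the NPMLE noted in Remark \ref{remark:non-uniqueness}: since Theorem \ref{theorem:W1consistency of NPMLEs} is phrased so that its conclusion holds for \emph{any} element of the argmax set in \eqref{eq:npmle}, the arbitrary tie-breaking rule used to select $\hat Q_j^{(k)}$ does not affect either conclusion, and the proof is complete.
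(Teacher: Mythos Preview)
Your proposal is correct and matches the paper's approach exactly: the paper states the corollary as ``a direct consequence of Theorems \ref{theorem:W1consistency of NPMLEs}--\ref{theorem:TestConsistency}'' without further proof, and your bookkeeping spells out precisely that chain (Theorem \ref{theorem:size_validity} for part (a), and Theorem \ref{theorem:W1consistency of NPMLEs} fed into Theorem \ref{theorem:TestConsistency} for part (b)). Your remark on non-uniqueness is a helpful clarification but not required for the argument.
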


%\fbox{stop point}

\section{Algorithms}\label{sec:algorithm}

This section presents three algorithms to calculate \eqref{eq:npmle}, 
\begin{itemize}
\item[(1)] the vertex direction method (VDM), cf. \cite{fedorov1972theory}, \cite{simar1976maximum}, \cite{wu1978some}, \cite{wu1978someII}, \cite{bohning1982convergence}, and \cite{lindsay1983geometry}; 
\item[(2)] the vertex exchange method (VEM), cf. \cite{bohning1985numerical} and \cite{bohning1986vertex};
\item[(3)] the intra simplex direction method (ISDM), cf. \cite{lesperance1992algorithm}.
\end{itemize}

%There are roughly four methods for computing NPMLEs in general mixture models. The first one is EM-type method suggested by in \cite{laird1978nonparametric}, \cite{turnbull1976empirical}, \cite{dempster1977maximum} and \cite{jewell1982mixtures}. The second one is the vertex direction method (VDM) discussed by \cite{fedorov1972theory}, \cite{simar1976maximum} (a special case of Poisson mixture), \cite{wu1978some}, \cite{wu1978someII}, \cite{bohning1982convergence} and \cite{lindsay1983geometry}. The third one is the vertex exchange method (VEM) discussed by \cite{bohning1985numerical} and \cite{bohning1986vertex}. The final one is the intra simplex direction method (ISDM) proposed by \cite{lesperance1992algorithm}  Both \cite{bohning1995review} and \cite{lindsay1995mixture} are nice review of the these methods in general mixture models.
 
%Recently, \cite{chae2018convergence} propose an algorithm that produces smooth near-maximum likelihood estimators of the mixing density with a rigorous proof demonstrating its converge to the nonparametric maximum likelihood estimator. \cite{martin2018nonparametric} surveys the literature on a fast, recursive estimator based on the predictive recursion algorithm and summarizes the available asymptotic convergence theory.

To simplify the notation, in this section we remove $j,k$ from the subscript and use $\{X_i,i\in[N]\}$ and $\{r_i,i\in[N]\}$ to denote the sample points and the corresponding read-depths. Moreover, we use $\hat Q$ to denote the NPMLE defined in \eqref{eq:npmle} based on $\{X_i,i\in[N]$ and $\{r_i,i\in[N]\}$.
For a discrete measure $G$ on $[0,B]$ with support points $\{\lambda_m,m\in[M]\}$,  let $G(\lambda_m)$ stand for the mass $G$ assigned at $\lambda_m$ for each $m\in[M]$. We define 
\[
\Phi(G):=\frac{1}{N}\sum_{i\in[N]}\log\left(\sum_{m\in[M]}G(\lambda_m)e^{-\lambda_mr_i}(\lambda_mr_i)^{X_i}\right)
\]
and its directional derivative from $G$ to $\delta_\lambda$ as
\[
 \Phi^\prime(G,\delta_\lambda)
 :=\lim\limits_{\epsilon\to0^+}\epsilon^{-1}\Big\{\Phi\{(1-\epsilon)G\oplus\epsilon\delta_\lambda\}-\Phi(G)\Big\}
 =\frac{1}{N}\sum_{i\in[N]}\frac{e^{-\lambda r_i}(\lambda r_i)^{X_i}}{\sum_{m\in[M]}G(\lambda_m)e^{-\lambda_mr_i}(\lambda_mr_i)^{X_i}}-1.
 \]
Here $\delta_\lambda$ represents the unit measure at $\lambda\in[0,B]$. Lastly, for any two signed measures $\nu_1$ and $\nu_2$ on the real line, we denote $\nu_1\oplus \nu_2$ as the sum of $\nu_1$ and $\nu_2$, and $\nu_1\ominus \nu_2$ as the sum of $\nu_1$ and $-\nu_2$.

With these notation, we are now ready to present the VDM, VEM, and ISDM algorithms for calculating $\hat Q$. 
%It follows from Theorem~\ref{thm:ConvergenceOfAlgorithm} that $\hat Q$ is supported at most $N$ points.

\begin{center}
The VDM Algorithm
% from the data $\{X_i,i\in[N]\}$ and $\{r_i,i\in[N]\}$, which is an analogue of \citet[Section 4.1]{simar1976maximum}.
\begin{enumerate}[label=\textit{Step} {\arabic*}]%, wide, labelwidth=!,  labelindent=0pt]
\setcounter{enumi}{-1}
\item
(\textit{Initialization}).
Select a point $\lambda_1\in(0,B]$.
Let $G_1=\delta_{\lambda_1}$ be the initial value.
Set the loop index $L=1$.
%Let $\Lambda_1=\{\lambda_1\}$ be the initially potential support of $\hat Q$. Set $M=1$.
\item
If $\max\limits_{\lambda\in[0,B]}\Phi^\prime(G_L,\delta_\lambda)=0$, then stop and return $G_L$. Otherwise, find $\lambda_{\text{max}}=\argmax\limits_{\lambda\in[0,B]}\Phi^\prime(G_L,\delta_\lambda)$.
\item Find $\alpha_{\max}=\argmax_{\alpha\in[0,1]}\Phi\Big\{(1-\alpha)G_L\oplus \alpha\delta_{\lambda_{\text{max}}}\Big\}$.
\item
Set $G_{L+1}=(1-\alpha)G_L\oplus \alpha_{\max}\delta_{\lambda_{\text{max}}}$. % with $\alpha\in(0,1)$ such that $\Phi(G_{L+1})\geq \Phi(G_{L})$. 
 Set $L=L+1$ and go to Step 1.
 \end{enumerate}
\end{center}
%The following VEM is stable and converging better than the VDM, see \citep[page 14]{bohning1995review} for more details.
\noindent
\begin{center}
The VEM Algorithm
\begin{enumerate}[label=\textit{Step} {\arabic*}]%, wide, labelwidth=!,  labelindent=0pt]
\setcounter{enumi}{-1}
\item
(\textit{Initialization}).
Select a point $\lambda_1\in(0,B]$.
Let $G_1=\delta_{\lambda_1}$ be the initial value.
Set the loop index $L=1$.
%(\textit{Initialization}).
%Select $N$ points $\Lambda_{1}=\{\lambda_{1,i},i\in[N]\}$ such that $0\leq\lambda_{1,i}<\ldots<\lambda_{1,N}\leq B$.
%Set $\{p_{1,i}=1/N,i\in[N]\}$.
%% such that $\sum_{i=1}^N\log\{\exp(-\lambda_1 r_i)(\lambda_1 r_i)^{X_i}\}>-\infty$.
%Set $M=1$.
\item If $\max\limits_{\lambda\in[0,B]}\Phi^\prime(G_L,\delta_\lambda)=0$, then stop and return $G_L$. Otherwise, find $\lambda_{\text{max}}=\argmax\limits_{\lambda\in[0,B]}\Phi^\prime(G_L,\delta_\lambda)$ and $\lambda_{\text{min}}=\argmin\limits_{\lambda\in\text{supp}(G_L)}\Phi^\prime(G_L,\delta_\lambda)$, where $\text{supp}(G_L)$ stands for the support of $G_L$.
\item Find $\alpha_{\max}=\argmax_{\alpha\in[0,1]}\Phi\Big\{G_L\oplus\Big(\alpha G_L(\lambda_{\text{min}})(\delta_{\lambda_{\text{max}}}\ominus\delta_{\lambda_{\text{min}}})\Big)\Big\}$.
\item Set $G_{L+1}=G_L\oplus\Big(\alpha_{\max} G_L(\lambda_{\text{min}})(\delta_{\lambda_{\text{max}}}\ominus\delta_{\lambda_{\text{min}}})\Big)$. % with $\alpha\in(0,1]$ such that $\Phi(G_{L+1})\geq \Phi(G_{L})$. 
Set $L=L+1$ and go to Step 1.
\end{enumerate}
\end{center}
%The following ISDM is table and very fast, however, there is an increased complexity involved in \textit{Step} 2, see \citep[page 15]{bohning1995review} for more details.
\vspace{0.2cm}
\begin{center}
The ISDM Algorithm
\begin{enumerate}[label=\textit{Step} {\arabic*}]%, wide, labelwidth=!,  labelindent=0pt]
\setcounter{enumi}{-1}
\item
(\textit{Initialization}).
Select a point $\lambda_1\in(0,B]$.
Let $G_1=\delta_{\lambda_1}$ be the initial value.
Set the loop index $L=1$.
\item If $\max\limits_{\lambda\in[0,B]}\Phi^\prime(G_L,\delta_\lambda)=0$, then stop and return $G_L$. Otherwise, find all local maxima $\lambda_{\max,1},\ldots,\lambda_{\max,\mathcal{N}}$ of  $\lambda\mapsto\Phi^\prime(G_L,\delta_\lambda)$ on $[0,B]$, where $\mathcal{N}$ represents the number of local maxima. 
\item Find $(\alpha_{\max,0},\ldots,\alpha_{\max,\mathcal{N}})=\argmax\limits_{\alpha_0,\ldots,\alpha_\mathcal{N}}\Phi\Big\{(1-\alpha_0)G_L\oplus\alpha_1\delta_{\lambda_{\max,1}}\oplus \cdots\oplus\alpha_\mathcal{N}\delta_{\lambda_{\max,\mathcal{N}}}\Big\}$ subject to $\alpha_0\geq 0,\alpha_1\geq 0,\cdots,\alpha_\mathcal{N}\geq 0$ and $\alpha_0+\alpha_1+\cdots+\alpha_\mathcal{N}=1$.
%$\sum_{s=0}^\mathcal{N}\alpha_s=1$ and $\alpha_s\geq0$ for every $s\in\{0,1,\ldots,\mathcal{N}\}$.
\item Set $G_{L+1}=(1-\alpha_{\max,0})G_L\oplus\alpha_{\max,1}\delta_{\lambda_{\max,1}}\oplus \cdots\oplus\alpha_{\max,\mathcal{N}}\delta_{\lambda_{\max,\mathcal{N}}}$. 
Set $L=L+1$ and go to Step 1.
\end{enumerate}
\end{center}
\medskip

\par\noindent
The convergence of VDM, VEM, and ISDM is guaranteed by the following theorem.
 
\begin{theorem}\label{thm:ConvergenceOfAlgorithm}
Assuming $r_i>0$ for each $i\in[N]$. For each of VDM, VEM and ISDM, if it stops for some $L$, then we have $\Phi(G_L)=\Phi(\hat Q)$;
otherwise, $\Phi(G_L)\to\Phi(\hat Q)$ as $L\to\infty$.
\end{theorem}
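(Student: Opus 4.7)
The plan is to give a unified proof for all three algorithms based on (i) concavity of $G \mapsto \Phi(G)$ over the convex set of probability measures on $[0,B]$, (ii) a Lindsay-type first-order optimality characterization, and (iii) a monotone-convergence plus weak-compactness contradiction. First I would establish that $\Phi$ is concave (since the integrand is linear in $G$ and $\log$ is concave) and finite along any iterate (since $\lambda_1>0$ forces positive mass on $(0,B]$). A standard directional-derivative argument then yields
\begin{equation*}
\Phi(G) = \Phi(\hat Q) \iff \max_{\lambda \in [0, B]} \Phi'(G, \delta_\lambda) = 0,
\end{equation*}
which immediately disposes of the ``stops at some $L$'' case, since the termination test in Step 1 of each algorithm is precisely the right-hand side.

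Next I would verify that $\{\Phi(G_L)\}_{L \geq 1}$ is non-decreasing for each algorithm, since the Step 2 line search is always feasible at $G_L$ itself (take $\alpha = 0$ in VDM and VEM, $\alpha_0 = 1$ in ISDM), so $\Phi(G_L)$ converges to some $\Phi^\star \leq \Phi(\hat Q)$ with $\Phi(G_{L+1}) - \Phi(G_L) \to 0$. To upgrade this to $\Phi^\star = \Phi(\hat Q)$, I would invoke Prohorov's theorem to extract $G_{L_j} \Rightarrow G^\star$ and use that $\lambda \mapsto e^{-\lambda r_i}(\lambda r_i)^{X_i}$ is bounded continuous on $[0, B]$, so $\Phi$ is weakly continuous and $G \mapsto \Phi'(G, \delta_\lambda)$ is weakly continuous uniformly in $\lambda$. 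If $G^\star$ were not a maximizer, then $\max_\lambda \Phi'(G^\star, \delta_\lambda) = \varepsilon > 0$ and hence $\max_\lambda \Phi'(G_{L_j}, \delta_\lambda) \geq \varepsilon/2$ for all large $j$; a strictly positive lower bound $\eta(\varepsilon) > 0$ on $\Phi(G_{L_j+1}) - \Phi(G_{L_j})$ would then contradict $\Phi(G_{L+1}) - \Phi(G_L) \to 0$.

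The main obstacle is deriving this quantitative per-iteration improvement. For VDM, concavity of $\alpha \mapsto \Phi((1-\alpha)G_L \oplus \alpha \delta_{\lambda_{\max}})$, its initial slope $\Phi'(G_L, \delta_{\lambda_{\max}}) \geq \varepsilon/2$, and a uniform bound on its second derivative (the denominators appearing in $\Phi'$ stay bounded away from zero along iterates because $\lambda_1 > 0$ and all $r_i > 0$) yield an $\Omega(\varepsilon^2)$ improvement via a one-dimensional Taylor expansion; the same bound transfers to ISDM since its line search strictly contains the VDM one. The VEM case is the most delicate because the initial slope along its line search equals $G_L(\lambda_{\min})\,[\Phi'(G_L, \delta_{\lambda_{\max}}) - \Phi'(G_L, \delta_{\lambda_{\min}})]$, which, combined with the identity $\int \Phi'(G_L, \delta_\lambda)\,{\sf d}G_L(\lambda) = 0$ implying $\Phi'(G_L, \delta_{\lambda_{\min}}) \leq 0$, is bounded below by $(\varepsilon/2)\,G_L(\lambda_{\min})$; this requires an additional lower bound on the worst-support-point mass $G_L(\lambda_{\min})$, for which I would adopt the classical VEM convergence analysis of \cite{bohning1986vertex} that establishes such a bound through a careful accounting of mass migrations across iterations.
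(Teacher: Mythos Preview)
Your proposal is correct and the overall architecture matches the paper's, but you take a noticeably more roundabout route at the key step. The paper skips the Prohorov subsequence extraction entirely: by concavity of $\epsilon\mapsto\Phi((1-\epsilon)G_L+\epsilon\hat Q)$ on $[0,1]$ one has, for \emph{every} $L$,
\[
\max_{\lambda\in[0,B]}\Phi'(G_L,\delta_\lambda)\;\ge\;\Phi'(G_L,\hat Q)\;\ge\;\Phi(\hat Q)-\Phi(G_L)\;\ge\;\Phi(\hat Q)-\Phi^+\;=:\;\nu>0,
\]
so the directional-derivative lower bound holds along the full sequence without passing to a weak limit $G^\star$ and pulling back by continuity. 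The paper also proves the per-iteration improvement differently: rather than bounding the second derivative of the line-search objective in measure space, it passes to the finite-dimensional moment space $\bar\Gamma_N=\{(\mu_1(G),\ldots,\mu_N(G)):G\in\bar{\mathbb Q}_B\}\subset\mathbb R^N$, observes that level sets $\{\Psi\ge\alpha'\}$ are compact there, and uses uniform continuity of $\nabla\Psi$ on such a level set to get the uniform bound $\Phi((1-\epsilon)G+\epsilon\delta_\lambda)-\Phi(G)\ge\epsilon\nu/2$ for all $G$ with $\Phi(G)\ge\Phi(G_1)$. Your second-derivative route works too (the justification ``because $\lambda_1>0$'' is really the level-set argument in disguise: monotonicity forces $\Phi(G_L)\ge\Phi(G_1)>-\infty$, and since each $\mu_i$ is bounded above this pins each $\mu_i(G_L)$ away from zero), but the moment-space formulation is cleaner and more in the spirit of \cite{simar1976maximum,bohning1982convergence}. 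On VEM you are in fact more scrupulous than the paper, which simply asserts the extension is ``straight-forward and hence omitted''; your identification of the $G_L(\lambda_{\min})$ lower bound as the delicate point and the reference to \cite{bohning1986vertex} is appropriate.
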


\begin{remark}\label{remark:non-uniqueness}
Unlike in the traditional setting where all read depths are identical, when heterogeneous read depths are incorporated, although $G\mapsto \Phi(G)$ is still a concave function, there is no theoretical guarantee about the uniqueness of $\hat Q$'s that maximize the objective function and whether the maximizer is unique or not is still open. This issue of computational uniqueness shall be compared to the parallel result in Theorem \ref{theorem:W1consistency of NPMLEs}, which provides theoretical guarantee for the consistency of an arbitrary maximizer of the objective function as the sample size increases to infinity. %Technically speaking, the classic proof of uniqueness leverages Chebyshev systems and strictly totally positive kernels (cf. \cite{lindsay1993uniqueness} and \citet[Chapter 5.5]{lindsay1995mixture}). These Chebyshev systems no longer always exist in our setting due to the read-depth effects.
\end{remark}

\section{Simulation studies}\label{sec:sim}

%In terms of asymptotic behaviors, theorems in Section~\ref{Section:TwoGroupsTesting} show some preferences of permutation testing with estimated mixture distributions compared to permutation testing with estimated mixing distributions.
%However, s
This section aims to show that the two NPMLE-based (smoothed or not) tests presented in Section \ref{sec:test} cannot dominate each other. Throughout the whole section, we fix $K=2$ and consider the following three designs across with several cases of population models.
\medskip

\par\noindent
\textbf{Designs.}
\begin{enumerate}[itemsep=-.5ex,label=(\Alph*)]
\item\label{scenario(a)} Balanced designs with all read depths set to be 1,  $n_1=n_2=10$, and $N_{jk}=50$, $100$, and $500$ for each $j,k$.
\item\label{scenario(b)} Balanced designs with read-depth effects with
$n_1=n_2=10$ and $N_{jk}=50$, $100$ and $500$ for each $j,k$. In addition, 
in each round of the simulation, $\{r_{i1}^{(1)},i\in[N_{11}]\}$ are i.i.d. generated from $\text{Uniform}(0.5,1.5)$ and then let $r_{ij}^{(k)}=r_{i1}^{(1)}$ for each $j,k$.
\item\label{scenario(c)} A particular unbalanced design motivated by the single-cell RNA-seq data in Section~\ref{sec:app} ahead, with $n_1=10,n_2=13$ and $N_{jk}$ be as in Table~\ref{table:DesignOfN_jk}. 
For each round of the simulation, $\{r_{ij}^{(k)},i\in[N_{jk}],j\in[n_k],k\in[K]\}$ are i.i.d. generated from $\text{Uniform}(0.5,1.5)$.
\end{enumerate}

\begin{table}[h]
\caption{$N_{jk}$ in the unbalanced design (Design~\ref{scenario(c)}) \vspace{-0.5cm}}
\center
\begin{tabular}{|l|l|l|l|l|l|l|l|l|l|l|l|}
\hline
$N_{1,1}$ &$N_{2,1}$  &$N_{3,1}$  &$N_{4,1}$  &$N_{5,1}$  &$N_{6,1}$ & $N_{7,1}$ &$N_{8,1}$  &$N_{9,1}$  &$N_{10,1}$  &$N_{1,2}$  &$N_{2,2}$ \\ \hline
388 & 1142 & 162  &391  &215  &278 &284  &193  &542  &106  &202  & 759\\ \hline
$N_{3,2}$ &$N_{4,2}$  &$N_{5,2}$  &$N_{6,2}$  &$N_{7,2}$  &$N_{8,2}$ &$N_{9,2}$  &$N_{10,2}$  &$N_{11,2}$  &$N_{12,2}$  &$N_{13,2}$  & \\ \hline
415 & 69  &327  &431  &414  &451 &  275&733  &422  & 65 & 362 & \\ \hline
\end{tabular}
\label{table:DesignOfN_jk}
\end{table}

We then move on to specify the population model \eqref{model:population} used in our simulation studies. Hereafter, let $\text{Gam}(a,b;B)$ denote a truncated Gamma distribution with a shape parameter $a>0$, a rate parameter $b>0$, and with any realization larger than $B$ shrunken to $B$. 
Let $\{\Delta_j^{(k)},j\in[n_k],k\in[K]\}$ be i.i.d. generated from $\text{Uniform}(-1,1)$. 
\medskip

\par\noindent
\textbf{Population models.}
\begin{enumerate}[itemsep=-.5ex,leftmargin=*]
\item \label{model1}
\begin{enumerate}[itemsep=-.5ex,leftmargin=*,label=(\alph*)]
\item\label{model(1a)} $Q_j^{(k)} \sim \text{Gam}(14+\Delta_j^{(k)},7/4;50)$ for each $j\in[n_k],k\in[2]$.
\item\label{model(1b)} $Q_j^{(k)} \sim \text{Gam}(14+\Delta_j^{(k)},7;50)$ for each $j\in[n_k],k\in[2]$.
\item\label{model(1c)}  $Q_j^{(k)}\sim \text{Gam}(  6+\Delta_j^{(k)},1;50)$ for each $j\in[n_k],k\in[2]$.
%\item\label{model(1d)} Let $Q_j^{(k)}$ be $\text{Gam}(13+\Delta_j^{(k)},1;50)$ for each $j\in[n_k],k\in[K]$.
\end{enumerate}
\item \label{model2}
\begin{enumerate}[itemsep=-.5ex,leftmargin=*,label=(\alph*)]
\item\label{model(2a)} 
$Q_j^{(1)}\sim\text{Gam}(14+\Delta_j^{(1)},7/4;50)$ for $j\in[n_1]$ and $Q_j^{(2)}\sim\text{Gam}(6+\Delta_j^{(2)},3/4;50)$ for $j\in[n_2]$.
\item\label{model(2b)} 
$Q_j^{(1)}\sim\text{Gam}(14+\Delta_j^{(1)},7/3;50)$ for $j\in[n_1]$ and $Q_j^{(2)}\sim\text{Gam}(6+\Delta_j^{(2)},1;50)$ for $j\in[n_2]$.
\item\label{model(2c)} 
$Q_j^{(1)}\sim\text{Gam}(14+\Delta_j^{(1)},7/2;50)$ for $j\in[n_1]$ and $Q_j^{(2)}\sim\text{Gam}(6+\Delta_j^{(2)},3/2;50)$ for $j\in[n_2]$.
%\item\label{model(2d)} 
%$Q_j^{(1)}=\text{Gam}(14+\Delta_j^{(k)},7;50)$ for $j\in[n_1]$ and $Q_j^{(2)}=\text{Gam}(6+\Delta_j^{(k)},3;50)$ for $j\in[n_2]$.
\end{enumerate}
\item \label{model3}
\begin{enumerate}[itemsep=-.5ex,leftmargin=*,label=(\alph*)]
\item\label{model(3a)} 
$Q_j^{(1)}\sim\text{Gam}(4+\Delta_j^{(1)},1;20)$ for $j\in[n_1]$ and $Q_j^{(2)}\sim\text{Gam}(5+\Delta_j^{(2)},1;20)$ for $j\in[n_2]$.
\item\label{model(3b)}
$Q_j^{(1)}\sim\text{Gam}(5+\Delta_j^{(1)},1;20)$ for $j\in[n_1]$ and $Q_j^{(2)}\sim\text{Gam}(6+\Delta_j^{(2)},1;20)$ for $j\in[n_2]$.
\item\label{model(3c)}
$Q_j^{(1)}\sim\text{Gam}(6+\Delta_j^{(1)},1;20)$ for $j\in[n_1]$ and $Q_j^{(2)}\sim\text{Gam}(7+\Delta_j^{(2)},1;20)$ for $j\in[n_2]$.
%\item\label{model(3d)}
%$Q_j^{(1)}=\text{Gam}(7+\Delta_j^{(k)},1;15)$ for $j\in[n_1]$ and $Q_j^{(2)}=\text{Gam}(8+\Delta_j^{(k)},1;15)$ for $j\in[n_2]$.
\end{enumerate}
\item \label{model4}
\begin{enumerate}[itemsep=-.5ex,leftmargin=*,label=(\alph*)]
\item\label{model(4a)}
$Q_j^{(1)}\sim\text{Gamma}(11+\Delta_j^{(1)},1;50)$ for $j\in[n_1]$ and $Q_j^{(2)}\sim\text{Gamma}(12+\Delta_j^{(2)},1;50)$ for $j\in[n_2]$.
\item\label{model(4b)}
$Q_j^{(1)}\sim\text{Gamma}(12+\Delta_j^{(1)},1;50)$ for $j\in[n_1]$ and $Q_j^{(2)}\sim\text{Gamma}(13+\Delta_j^{(2)},1;50)$ for $j\in[n_2]$.
\item\label{model(4c)}
$Q_j^{(1)}\sim\text{Gamma}(13+\Delta_j^{(1)},1;50)$ for $j\in[n_1]$ and $Q_j^{(2)}\sim\text{Gamma}(14+\Delta_j^{(2)},1;50)$ for $j\in[n_2]$.
%\item\label{model(4d)}
%$Q_j^{(1)}=\text{Gamma}(14+\Delta_j^{(k)},1;50)$ for $j\in[n_1]$ and $Q_j^{(2)}=\text{Gamma}(15+\Delta_j^{(k)},1;50)$ for $j\in[n_2]$.
\end{enumerate}
\end{enumerate}

Our focus is on examining as well as comparing the empirical performance of the tests $\hat T_{\alpha}$ and $\hat T_{h,\alpha}$ with NPMLE calculated using the oracle $B$.
Both of them are based on an exact critical value approximated by 1,000 Monte Carlo simulations. 
The underlying nominal significance level is $0.05$.
%, and the sample size is chosen as $N\in\{50,100,500\}$ for Design~\ref{scenario(a)} and \ref{scenario(b)}.
For each setting, $1,000$ rounds of simulations were performed. We use VEM to compute NPMLEs with a stop tolerance 0.01.
Optimization in Step 1 and Step 2 in VEM is implemented by the default interior-point algorithm in Matlab; see the support page of function `\textit{fmincon}' for further details.

Table \ref{tab:power1} shows the empirical sizes and powers (rejection frequencies) of tests $\hat T_{\alpha}$ and $\hat T_{h,\alpha}$.
In short, the results confirm our earlier theoretical claims on the sizes and powers of $T_{\alpha}$ and $\hat T_{h,\alpha}$ in the different models and balanced designs (Designs~\ref{scenario(a)} and \ref{scenario(b)}).
Moreover, even under the unbalanced design (Design \ref{scenario(c)}), $T_{\alpha}$ and $\hat T_{h,\alpha}$ still perform well in terms of their empirical sizes and powers.

Some more detailed comparisons between $T_{\alpha}$ and $\hat T_{h,\alpha}$ are in line. The following observations depend on the ``signal strengths'' $D$ and $D_h$, defined as follows:
\begin{eqnarray}
D:=E\{W_1(Q_1^{(1)},Q_1^{(2)})^2\}-\left(E\{W_1(Q_1^{(1)},Q_2^{(1)})^2\}+E\{W_1(Q_1^{(2)},Q_2^{(2)})^2\}\right)/2
\label{Formula:DifferencesMixing}
\end{eqnarray}
and
\begin{eqnarray}
D_h:=E\{W_1(h_{Q_1^{(1)}},h_{Q_1^{(2)}})^2\}-\left(E\{W_1(h_{Q_1^{(1)}},h_{Q_2^{(1)}})^2\}+E\{W_1(h_{Q_1^{(2)}},h_{Q_2^{(2)}})^2\}\right)/2.
\label{Formula:DifferencesMixture}
\end{eqnarray}

%We find that both of them have preferred settings and it is recommended to use both of them in practical problems.

%We expect (i) $\hat T$ is more powerful than $\hat T_h$ in Case 1;
%(ii) without read-depth effects, $\hat T$ is less powerful than $\hat T_h$ in Case 2 and Case 3, especially when the sample size is small and $B$ is large.
%The first one is straight-forward.
%For the second one, since the convergence rate of $\hat Q$ is almost $O(B/\log N)$ (Theorem~\ref{theorem:UpperBoundOfMixing}) and the convergence rate of $h_{\hat Q}$ is almost $O(B^{3/2}/\sqrt{N})$ \footnote{\citet[Proposition 3.1]{lambert1984asymptotic} proved that the convergence rate is almost $O(1/\sqrt{N})$ with treating $B$ as a universal constant. Following their proof ideas, we have the rate $O(B^{3/2}/\sqrt{N})$ after considering the role of $B$ in the convergence rate.},
%the convergence is quicker for $h_{\hat Q}$ than $\hat Q$ and hence we expected $\hat T_h$ is more powerful when $N$ is small but $B$ is large.

First, empirical results for Model \ref{model1} illustrates that under $H_0$, empirical powers are close to the nominal level $\alpha=0.05$, confirming the size validity of $\hat T_{\alpha}$ and $\hat T_{h,\alpha}$.  In addition, even under the unbalanced design (Design~\ref{scenario(c)}), empirical powers are stable and close to the nominal level $\alpha=0.05$, indicating the robustness of the studied tests.

Second, we compare the empirical powers using Models \ref{model2}, \ref{model3}, and \ref{model4}. In Model \ref{model2}, $D$ is significantly larger than $D_h$ and the corresponding empirical powers of $\hat T_{\alpha}$ are all larger than these of $\hat T_{h,\alpha}$ in all three considered designs (Designs~\ref{scenario(a)},~\ref{scenario(b)}, and \ref{scenario(c)}).
This phenomenon is not surprising to us as the difference between variation between groups and variation within groups in mixing distributions is much larger than that in mixture distributions. 
Therefore,  $\hat T_{\alpha}$ is more powerful than $\hat T_{h,\alpha}$.

In Model \ref{model3}, $D$ is approximately equal to $D_h$ and the empirical power of $\hat T_{\alpha}$ is smaller than the empirical power of $\hat T_{h,\alpha}$ when $N$ is small (e.g., $50$ and $100$).  However, the empirical powers of $\hat T_{\alpha}$ and $\hat T_{h,\alpha}$ are close when $N$ is large. 
%This is because the convergence rate of the estimated mixture distribution (almost $N^{-1/2}$, see \cite{lambert1984asymptotic}) is larger than the convergence rate of the estimated mixing distribution (almost $1/\log N$) and hence $\hat T_{h,\alpha}$ can still tell the differences between variations within groups and variations between groups even under a small sample size.
%For the large sample size, both estimated mixing distributions and estimated mixture distributions converge to their targets and hence both $\hat T_{\alpha}$ and $\hat T_{h,\alpha}$ have high powers.
Similar observation applies to Model \ref{model4}, where $D$ is also approximately equal to $D_h$. However, compared to Model \ref{model3}, the mixing distributions in Model \ref{model4} have larger $B$ and thus the empirical powers of $\hat T_{h,\alpha}$ are higher than the empirical powers of $\hat T_{\alpha}$ even for $N=500$, especially under Design~\ref{scenario(a)}. Some pilot studies to explain this phenomenon will be put in Section \ref{sec:optimality}, where we analyze the finite-sample behavior of the NPMLE under an exploratory simplified setting where all read depths are fixed to be 1. There, the rate of convergence of NPMLE, at the worst case, is showed to be $O(\log\log N/\log N)$; in contrast, \citet[Lemma 4.1 and Theorem 4.1]{lambert1984asymptotic} showed that the Poisson-smoothed NPMLE attains a near-root-n rate of convergence to the mixture distribution. %In this case, $\hat T_{\alpha}$ needs a larger sample size to be same powerful as $\hat T_{h,\alpha}$.

%As a conclusion, we have the following five observations.
%\begin{itemize}
%\item Empirical sizes of $\hat T$ and $\hat T_h$ confirmed size validity of $\hat T$ and $\hat T_h$.
%\item For a balanced design, having read-depth effects doesn't significantly change empirical powers of $\hat T$ and $\hat T_h$ compared to no read-depth effects.
%\item Having read-depth effects in our model, the unbalanced design we have in the real data (see Section~\ref{sec:app}) doesn't significantly change the empirical power of $\hat T$ and $\hat T_h$ compared to the balanced design.
%\item Compared to $\hat T_h$, $\hat T$ has a higher empirical power if $D$ is significantly larger than $D_h$.
%\item Empirical powers of $\hat T_h$ are equally or more powerful than those of $\hat T$ if $D$ is approximately equal to $D_h$.
%In the case of small sample sizes, $\hat T_h$ is empirically more powerful than $\hat T$.
%%\item the empirical powers of $\hat T$ and $\hat T_h$ are comparable to each other if $D$ is approximately equal to $D_h$ and $B$ is relatively small.
%%\item compared to $\hat T$, $\hat T_h$ has a higher empirical power if $D$ is approximately equal to $D_h$ and $B$ is relatively large.
%\end{itemize}
{
\renewcommand{\tabcolsep}{1.5pt}
\renewcommand{\arraystretch}{1.0}
\begin{table}[t]
\caption{Empirical sizes and powers of $\hat T_{\alpha}$ and $\hat T_{h,\alpha}$; here $D$ and $D_h$ are defined in \eqref{Formula:DifferencesMixing} and \eqref{Formula:DifferencesMixture}}
\centering
{{
\begin{tabular}{cC{.45in}C{.45in}C{.45in}C{.45in}C{.45in}C{.45in}
         C{.45in}C{.45in}C{.45in}C{.45in}C{.45in}C{.45in}C{.45in}}
\toprule
  \multicolumn{2}{c}{Model}   &  \ref{model1}\ref{model(1a)}  & \ref{model1}\ref{model(1b)} &   \ref{model1}\ref{model(1c)}   &   \ref{model2}\ref{model(2a)}   & \ref{model2}\ref{model(2b)} 
     &\ref{model2}\ref{model(2c)} & \ref{model3}\ref{model(3a)}  & \ref{model3}\ref{model(3b)} &   \ref{model3}\ref{model(3c)}   &   \ref{model4}\ref{model(4a)}   & \ref{model4}\ref{model(4b)} &\ref{model4}\ref{model(4c)} \\
  & $D$&   0   &   0   &   0   &   0.59   &   0.32        & 0.15&   0.99   &   0.99   &   0.99   &   0.99   &   0.99   &  0.99   \\
&$D_h$  &   0   &   0   &   0   &   0.22   &   0.10        & 0.03 &   0.99   &   0.99   &   0.99   &   0.99   &   0.99   &  0.99   \\
\midrule
  &N     &  \multicolumn{12}{c}{Empirical sizes/powers for $\hat T_{\alpha}$ under Design \ref{scenario(a)}}   \\
 & 50   &   0.054  	&    0.050  &   0.045  &   0.644  &  0.595    & 0.356	&   0.811	&   0.772	&   0.698	&   0.538	&  0.501	&  	0.502\\
& 100  &   0.043  	&    0.055  &   0.053 	&   0.901  &   0.835	&0.583	&   0.870	&   0.872	&  0.843	&  0.723	& 0.680	&   	0.668\\
& 500  &   0.049	&    0.049 	&   0.060	&    0.996	&   0.999  &0.965	&   0.952	&   0.958	&   0.941	&   0.850	& 0.831	&	0.829\\
\vspace{-.5em}\\
 &N     &  \multicolumn{12}{c}{Empirical sizes/powers for $\hat T_{h,\alpha}$ under Design \ref{scenario(a)}}   \\
 & 50  &   0.054   	&   0.045  &   0.049	&   0.284	&   0.210	& 0.111	&  0.833	&   0.816	&   0.767	&   0.650	& 0.635	&	0.624\\
& 100 &   0.038		&   0.063	&   0.049	&   0.371  	&   0.264	& 0.138	&   0.896	&  0.892	&   0.882	&   0.797	& 0.788	&	0.771\\
& 500 &   0.042		&   0.047	&   0.055	&   0.492  &   0.309	&0.186	&   0.951	&   0.961	&   0.947	&   0.944	&   0.924	&	0.921\\
\vspace{-.5em}\\
   &N  &  \multicolumn{12}{c}{Empirical sizes/powers for $\hat T_{\alpha}$ under Design \ref{scenario(b)}}   \\
&  50 &   0.044		&    0.048	&  0.058	&   0.644	& 0.508	& 0.338	&0.796	&  0.763	&0.729   	& 0.559	&  0.522	&  0.520	\\
& 100&    0.053		&   0.050	&  0.062	&   0.863	&  0.779	& 0.518	&  0.878	&   0.862	& 0.846	&   0.714	&   0.735	&  0.679	\\
& 500&    	0.036	&    0.052	&   0.054	&   1.000	&  0.998	&0.972	& 0.958	&  0.952	& 0.939	&   0.922	&   0.920	& 0.913	\\
\vspace{-.5em}\\
&N  	&  \multicolumn{12}{c}{Empirical sizes/powers for $\hat T_{h,\alpha}$ under Design \ref{scenario(b)}}   \\
&50  &   0.044		&   0.050	&  0.054	&   0.262	&   0.193	&0.100	& 0.821	& 0.806	& 0.772	&  0.632	&  0.619	&  0.602	\\
&100&   0.058		&  0.041	&  0.053	&  0.350	&  0.276	&0.132	&   0.885	&  0.877	&  0.858	&  0.772	&  0.788	& 0.759	\\
&500&   0.036		& 0.045	&  0.057	&  0.501	&  0.414	&0.187	&  0.956	& 0.950	&  0.943	&   0.932	&  0.928	& 0.924	\\
\vspace{-.5em}\\
    &N     &  \multicolumn{12}{c}{Empirical sizes/powers for $\hat T_{\alpha}$ under Design \ref{scenario(c)}}   \\
\multicolumn{2}{c}{Table~\ref{table:DesignOfN_jk}}  
	&    	0.048	& 0.050	&  0.051	&  0.994	&  0.988	& 0.900	& 0.962	& 0.940	& 0.951	&0.910	&  0.904	&   0.907\\
 \vspace{-.5em}\\
  &N     &  \multicolumn{12}{c}{Empirical sizes/powers for $\hat T_{h,\alpha}$ under Design \ref{scenario(c)}}   \\
\multicolumn{2}{c}{Table~\ref{table:DesignOfN_jk}} 
	&   	0.047	&0.051	&0.052	&0.452	&0.346	&0.173	&0.966	&0.947	&0.952	&0.929	&0.920	&0.922	\\
\bottomrule 
\end{tabular}}}
\label{tab:power1}
\end{table}
}

\section{Applications to single-cell genomics}\label{sec:app}

This section applies the studied permutation tests to a scRNA-seq data. There has been a large literature studying fitting RNA-seq data using Poisson mixtures including, e.g., over-dispersed Poisson model \citep{robinson2010edgeR},  Poisson-Gamma model \citep{love2014moderated, huang2018saver}, Poisson-Beta model \citep{vu2016beta}, Poisson-log normal model \citep{silva2019multivariate}, Poisson mixture model with K-clusters \citep{rau2015coexpression},  finite Poisson mixture models \citep{wu2013pm}, zero-inflated mixture Poisson linear models \citep{liu2019modelling}, Poisson mixture models with unimodal mixing distributions \citep{lu2018generalized}.
%In this list, only the last one belongs to non-parametric Poisson mixture models.
%Although there are tons of 
Compared to parametric Poisson mixture models, nonparametric Poisson mixture models haven't received much attention; some notable exceptions include \cite{bi2013npebseq}, \cite{dadaneh2018bnpseq}, \cite{sarkar2020separating}, the latter of which was closely followed by us.

%\cite{huang2015differential}, \cite{farooqi2019review} and \cite{sarkar2020separating} are a nice review of these Poisson mixture models.

\subsection{Data set description}
The scRNA-seq data used in this paper is obtained from \cite{velmeshev2019single}, %\href{https://science.sciencemag.org/content/364/6441/685.abstract}{Single-cell genomics identifies cell type–specific molecular changes in autism}, 
which focused on autism spectrum disorder (ASD) and recorded gene expression of 23 subjects (13 ASD v.s. 10 control) and 18,041 genes for each subject from 17 different cell types and 2 different brain regions. Here we focus on the brain region prefrontal cortex, which is more relevant to autism disease etiology. 
Moreover, each subject has 7 covariates including age, sex, diagnosis, capbatch, seqbatch, post-mortem interval (PMI), and RNA integrity number (RIN). %; three of them (age, sex, and seqbatch) are most likely related to diagnostic are used in our study.

We focus on a pre-selected subset including 100 genes (names of the genes put in Table~\ref{Table:ApplicationTable1}) that were documented to be related to body height; for relation between ASD and body height, see, e.g., \cite{fukumoto2011head} and \cite{chawarska2011early}.
In addition to permutation testing with either estimated mixing distributions or mixture distributions, we also consider DESeq2 \citep{love2014moderated} as a benchmark. In implementing the two considered permutation tests, we adopt a common strategy to incorporate four covariates age, sex, seqbatch, and RIN. The other two covariates PMI and capbatch are not significantly associated with gene expression given the other covariates, since their p-value distributions across all genes are uniform. The corresponding tests were denoted as $\hat T_Z$ (with the original NPMLE) and $\hat T_{h, Z}$ (with the Poisson-smoothed NPMLE). Details of the implementation were put in appendix Section \ref{sec:app-details}.

\subsection{Implementation results}

Using $\hat T_Z$, 9 genes are significant under the threshold of false discovery rate (FDR) 0.05 after multiple testing correction by the Benjamini-Hochberg procedure.
Replacing $\hat T_Z$ by $\hat T_{h,Z}$, 8 genes are significant under the same threshold of FDR and 7 genes are coincident with significant genes found by $\hat T_Z$. This shows some consistency between $\hat T_Z$ and $\hat T_{h,Z}$.

%Although most genes captured by permutation testing with estimated mixing distributions are also captured by permutation testing using estimated mixture distributions, there is one exception, which is also a significant gene under DESeq2.

Furthermore, by DESeq2 there are 7 significant genes under the same threshold of FDR and all of them are coincident with significant genes found by $\hat T_Z$.
In other words, among significant genes found by $\hat T_Z$, 78\% significant genes are coincident with genes found by DESeq2 and 22\% are new which means $\hat T_Z$ could enrich the set of significant genes found by the standard method DESeq2.

Similarly, 6 genes are coincident with significant genes found by $\hat T_{h,Z}$.
In other words, among significant genes found by $\hat T_{h,Z}$, 75\% significant genes are coincident with genes found by DESeq2 and 25\% are new which means $\hat T_{h,Z}$ could enrich the set of significant genes found by the standard method DESeq2.
In one word, both $\hat T_{Z}$ and $\hat T_{h,Z}$ could enrich the set of significant genes found by DESeq2.
Further details are summarized in Figure~\ref{Figure:ThreeCircles}.

\begin{figure}[]
\centering
\includegraphics[width=0.65\textwidth]{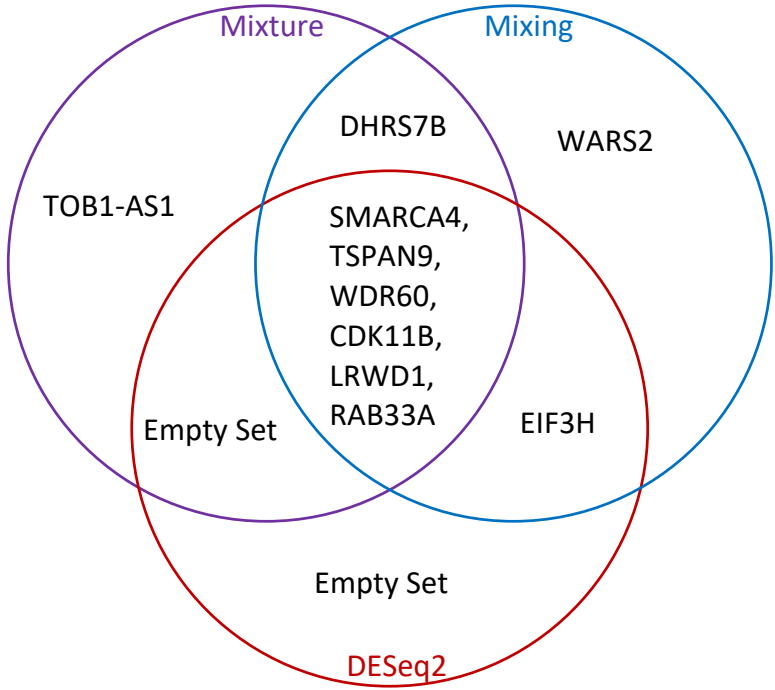}
\caption{Significant genes selected using Mixing ($\hat T_{Z}$), Mixture ($\hat T_{h,Z}$), and DESeq2 methods.}
\label{Figure:ThreeCircles}
\end{figure}

Our results can also be justified by functions of significant genes.
For example, fasting blood glucose measurement is not only one of functions of gene DHRS7B, but also related to ASD \citep{hoirisch2019autism}. More such results are summarized in Table~\ref{Table:FunctionsOfSignificantGenes}.
%From this table, we could see that significant genes selected by $\hat T_{Z}$ and $\hat T_{h,Z}$ have related functions on autism spectrum disorder and do make sense in a biological view.

\begin{table}[]
\caption{All genes used in Section~\ref{sec:app}}
\centering
\begin{tabular}{*{8}{l}}
\hline
%Name 		&Name 			&Name 			&Name 		&Name			\\ \hline
DST      		&CHSY3  			&TSC2			&EHD4		&HERC1		&KIF16B   		&DLGAP1			&PIK3CG		\\ %\hline
ELL    		&ODF2L 			&FBXL5			&LNX1		&ERGIC3	 	&CBFA2T2 		&FAM20A	  		&STAT2		\\ %\hline
DAP     		&SSH2 			&WDR60			&SAXO1		&FOXP2 		&SAMD4A    		&TSPAN9			&ARAP3			\\ %\hline
GHR			&KCNK9			&RGL1			&SOCS5		&ZNF76		&ADAMTS2    		&DHRS7B		&PNMA8C	\\ %\hline
KIZ	     		&SHPRH			&RBMS3			&MFSD2B	&NR4A3 		&CCDC171     		&RAB33A			&WDR70	\\
IL16        		&MTMR3			&CDK10			&ZNF628		&CAPZB		& ATXN7L3      		&PSKH1			&FGFRL1		\\ %\hline
BST2 		&UMAD1 			&CPED1			&ESYT2		&LRRC43		&SMARCA4     		&MYO18A		&IL17RD	\\ %\hline
LHX2     		&FBP2 			&ZC3H13			&SRRM2		&NOTCH1 	&HSD17B3      		&SBNO1			&EIF3H		\\ %\hline
RLF      		&LAYN			&SUSD5			&DOT1L		&WARS2 		&RPS4XP13     	&PHF11			&CDK11B		\\ %\hline
DAZL     		&CYFIP2 			&ST7L			&CWC27		&C9orf152	&TOB1-AS1       	&HIF1AN			&KLHL28	\\ %\hline
BCL9		&LRWD1			&LMO7			&PTENP1		&CEP112		&LINC01572		&PPP4R2			&UBE2Z	\\
NRK			&GCLC			&PPM1H			&ITGA9		&HIP1R		&PPP1R16A		&POLR3E			&TANC2 \\	
\multicolumn{2}{l}{ANKDD1A}	 	&\multicolumn{2}{l}{ZNF710-AS1}	&\multicolumn{2}{l}{ZRANB2-AS2}	&\multicolumn{2}{l}{DNAJC27-AS1}	\\ \hline		
\end{tabular}
\label{Table:ApplicationTable1}
\end{table}

\vspace{0.2cm}

\begin{table}[]
\centering
\caption{Significant genes on ASD with some literature support. 
The first column includes names of genes, the second column includes functions potentially related to ASD, and the third column includes literatures supports}
\begin{tabular}{lll}
\hline
gene name				&related functions												&literatures								\\   \hline
DHRS7B					&fasting blood glucose measurement								&\cite{hoirisch2019autism}					\\ 
WDR60					&abnormality of refraction											&\cite{ezegwui2014refractive}					\\ 
EIF3H					&reaction time measurement										&\cite{baisch2017reaction}					\\ 
LRWD1					&insomnia measurement											&\cite{hohn2019insomnia} 					\\ 
RAB33A					&bipolar disorder												&\cite{joshi2012examining}					\\ 
TSPAN9					&creatinine measurement											&\cite{cameron2017variability}					\\ 
WARS2, CDK11B			&heel bone mineral density										&\cite{calarge2017bone} 						\\ 
SMARC4, TOB1-AS1		&cholesterol measurement										&\cite{benachenhou2019implication} 			\\ 
\hline
\end{tabular}
\label{Table:FunctionsOfSignificantGenes}
\end{table}

\section{Minimax optimality of the Poisson NPMLEs}\label{sec:optimality}

This section provides additional theoretical support for the use of NPMLEs in forming up the tests $\hat T_\alpha$ and $\hat T_{h,\alpha}$ in Section \ref{sec:test}. To this end, due to the technical challenges, focus is restricted to a simplified setting of \eqref{eq:model-all}, where the observations $\{X_i,i\in[N]\}$  independently follow a distribution of PMF
\begin{eqnarray}\label{Formula:DefinitionOfh_Q}
h_Q(x)=\int_0^B e^{-\lambda}\frac{\lambda^x}{x!}{\sf d}Q(\lambda), ~~~x=0,1,2,\ldots,
\end{eqnarray}
where $Q$ is a deterministic measure supported on $[0,B]$ that cannot be characterized by a simple parametric model. This is exactly the classic nonparametric Poisson mixture setup, and we study the nonasymptotic behavior of the following NPMLE 
\begin{eqnarray}\label{Formula:DefinitionOfwidehatQ}
\hat Q=\argmax\limits_{Q \text{ of support }[0,B]} \sum\limits_{i\in [N]}\log h_Q(X_i). 
\end{eqnarray}
Note that, the above NPMLE is the simplified version of \eqref{eq:npmle} with all read depths there forced to be one.

There has been an enormous literature studying the NPMLE \eqref{Formula:DefinitionOfwidehatQ} under the nonparametric Poisson mixture model \eqref{Formula:DefinitionOfh_Q}. Earlier results on the existence, discreteness (of the NPMLE support), and computation include, among many others, \cite{simar1976maximum}, \cite{laird1978nonparametric}, \cite{jewell1982mixtures}, \cite{lindsay1983geometry}, \cite{lindsay1983geometryII}, and \cite{lindsay1993uniqueness}; see also \cite{lindsay1995mixture} for a survey. Consistency of NPMLEs were established in, among many others, 
\citet{kiefer1956consistency}, \citet{simar1976maximum}, and 
\cite{pfanzagl1988consistency}; see also \cite{chen2017consistency} for a survey.

Beyond these important results, there has been another track of substantial research that is focused on establishing the minimax rate in estimating the mixing distribution (mostly on the density function) of nonparametric Poisson mixtures. Notable results there include, e.g., \cite{zhang1995estimating}, \cite{loh1996global}, \cite{VanDeGeer1996rates}, \cite{hengartner1997adaptive},  \cite{VanDeGeer2003asymptotic}, \cite{roueff2005nonparametric}, and \cite{rebafka2015nonparametric}. However, to our knowledge, a study on the minimax optimality and the corresponding convergence rates for NPMLEs under a fully nonparametric Poisson mixture model is still absent from the literature.

%In the following, it is assumed that $B=B_N\geq \kappa$ is allowed to change with the sample size $N$, where $\kappa>0$ is a universal constant. 
We would love to highlight again that, due to the nature of nonasymptotic analysis, all the parameters in the model, including $B$, are allowed to change with $N$. This is a strict generalization of the ``asymptotic'' setting in Section \ref{sec:test}, where, due to the additional hardness of handling the read depth as well as for simplifying notation and assumptions, we do not intend to establish similar nonasymptotic results.
%An immediate consequence from this theorem is that NPMLE is consistent estimator only if $B=o(\log N)$.

Our first theorem concerns with the NPMLE's rate of convergence. 
\begin{theorem}[Upper bound of NPMLEs]\label{theorem:UpperBoundOfMixing}~
\begin{itemize}
\item[(a)] Suppose there exists a universal constant $c_0>0$ such that $B\leq c_0\log N$. Then there exists a positive constant $C=C(c_0)$ such that for all sufficiently large $N$ ($>N_0(c_0)$) we have
\[
\sup_{Q \text{ of support }[0,B]}E\Big\{W_1(\hat{Q},Q)\Big\}\leq C\frac{B}{\log N}\log\left(\frac{\log N}{B}\vee e\right).
\]
\item[(b)] Suppose there exist universal strictly positive constants $c_0, C_0$ and $\epsilon_0\in(0,1/3)$ such that $B\in[c_0\log N,C_0N^{1/3-\epsilon_0}]$. Then there exists a strictly positive constant $C=C(\epsilon_0,c_0)$ such that for all sufficiently large $N$ ($>N_0(c_0,C_0,\epsilon_0)$) we have
\[
\sup_{Q \text{ of support }[0,B]}E\Big\{W_1(\hat{Q},Q)\Big\}\leq C\sqrt{\frac{B}{\log N}}.
\]
\end{itemize}
\end{theorem}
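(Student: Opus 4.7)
The plan is to follow the two-step template that has proved successful for Gaussian \citep{wu2020optimal} and binomial mixtures \citep{tian2017learning,vinayak2019maximum}: first control the Hellinger (or $\chi^2$) distance at the \emph{mixture} level via the NPMLE likelihood property, and then transfer this to a $W_1$ bound at the \emph{mixing} level by a polynomial moment-matching argument. The two-regime split in the statement will emerge from how one optimally trades the polynomial-approximation error against the moment-matching error.

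For the mixture-level step, the NPMLE inequality $\sum_i \log h_{\hat Q}(X_i)\ge \sum_i \log h_Q(X_i)$ combined with a bracketing-entropy bound on the class $\{h_Q : Q \text{ supported on }[0,B]\}$ yields, via standard empirical-process machinery in the spirit of \citet{VanDeGeer1996rates,VanDeGeer2003asymptotic}, a Hellinger bound of the form
\[
E\big[H^2(h_{\hat Q},h_Q)\big]\lesssim \frac{(\log N)\cdot (\log(NB))^{O(1)}}{N}.
\]
The bracketing numbers can be read off from the covering number of mixing distributions on $[0,B]$ in $W_1$ together with the Lipschitzness of the Poisson likelihood in $\lambda$; the discreteness of the observations and the effective tail truncation at $O(B+\log N)$ keep this step relatively mechanical.

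For the moment-matching step, writing $W_1(\hat Q,Q)=\sup_{\ell\in {\rm Lip}_1}\int \ell\,{\sf d}(\hat Q-Q)$, I would replace $\ell$ on $[0,B]$ by its best uniform polynomial approximation $p_L(\lambda)=\sum_{k=0}^L a_k \lambda^k$ of degree $L$, for which Jackson's theorem gives a deterministic bias $\|\ell-p_L\|_\infty\lesssim B/L$. This produces
\[
W_1(\hat Q,Q)\lesssim \frac{B}{L}+\sum_{k=0}^L |a_k|\cdot\big|E_{\hat Q}[\lambda^k]-E_Q[\lambda^k]\big|.
\]
Each moment difference is then translated to the mixture-level distance by expanding $\lambda^k$ in the Poisson factorial-moment basis (so that $E_Q[\lambda^k]=\sum_{x\ge k} x(x-1)\cdots(x-k+1)\,h_Q(x)$) and invoking a Cauchy--Schwarz/$\chi^2$ inequality to get a bound of the form $\big|E_{\hat Q}[\lambda^k]-E_Q[\lambda^k]\big|\lesssim H(h_{\hat Q},h_Q)\cdot g(k,B)$ with an explicit factor $g$, while Markov-brother/Chebyshev estimates yield coefficient bounds of type $|a_k|\lesssim B^{-k}(eB/k)^k$.

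Summing the coefficient-moment products and choosing $L$ to balance them against the approximation bias $B/L$ then produces the two regimes. When $B\lesssim \log N$, taking $L\asymp \log N$ makes the bias term dominate and drives the rate $B/\log N$, with the mild $\log(\log N/B)$ factor emerging from the resulting sum of coefficient-moment products. When $B\gtrsim \log N$, the moment-matching sum blows up much more rapidly in $L$, so the optimal choice of $L$ is of order $\log N/\log(B/\log N)$, and a further square-root application of Cauchy--Schwarz across moments produces the $\sqrt{B/\log N}$ rate; the restriction $B\le C_0 N^{1/3-\epsilon_0}$ enters precisely to prevent the polylog slack accumulated in Step~1 from overwhelming the moment-matching balance. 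I expect the main obstacle to be obtaining the \emph{sharp} interplay between the Markov-brother coefficient bound and the factorial-moment bound in Step~2 so that the $\log\log(\log N/B)$ factor in (a) and the improved $\sqrt{\cdot}$ exponent in (b) come out correctly, rather than the crude rate one would get by combining either step with a naive bound on the other.
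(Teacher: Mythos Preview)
Your proposal is in the right spirit for part~(a) and is close to what the paper actually does, but it has a genuine gap for part~(b).

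\medskip
\noindent\textbf{Part (a).} The paper does \emph{not} go through a Hellinger--entropy bound on the mixture class. Instead it writes the approximant directly in the Poisson basis, $\hat\ell(\lambda)=\sum_x b_x\,P(\mathrm{Poi}(\lambda)=x)$, so that $\int\hat\ell\,{\sf d}(Q-\hat Q)=\sum_x b_x(h_Q(x)-h_{\hat Q}(x))$ identically. The three pieces in the resulting decomposition are then handled by (i) a Jackson-type Poisson-polynomial approximation with an explicit bound $\max_x|b_x|\le C(\sqrt e\,k/B)^k$, (ii) McDiarmid for $\sum b_x(h_Q-h_Q^{\mathrm{obs}})$, and (iii) the MLE inequality $\mathrm{KL}(h_Q^{\mathrm{obs}},h_{\hat Q})\le \mathrm{KL}(h_Q^{\mathrm{obs}},h_Q)$ followed by Pinsker and a direct bound on $\mathrm{KL}(h_Q^{\mathrm{obs}},h_Q)$ via truncation at $2B$ and a Mardia--Jiao--Han--Weissman concentration bound. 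Your factorial-moment route is algebraically equivalent to the Poisson-basis route (since $\sum_x (x)_k P(\mathrm{Poi}(\lambda)=x)=\lambda^k$), and your entropy/Hellinger step is a legitimate alternative to the paper's KL/Pinsker step; but the paper's packaging is cleaner because it reduces everything to $\max_x|b_x|\cdot\|h_Q^{\mathrm{obs}}-h_{\hat Q}\|_1$ rather than summing coefficient-times-moment terms. Either way, choosing the degree $k$ via $(\sqrt e\,k/B)^k=N^{1/8}$ (Lambert $W$) produces the $\frac{B}{\log N}\log(\frac{\log N}{B}\vee e)$ rate.

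\medskip
\noindent\textbf{Part (b): the gap.} Your plan here cannot deliver the stated rate. A global Jackson approximation of a Lipschitz $\ell$ on $[0,B]$ by a degree-$L$ polynomial has bias $\asymp B/L$, and for the coefficient/moment sum not to explode you need $L$ at most of order $\log N$; this forces bias $\gtrsim B/\log N$, which \emph{exceeds} the target $\sqrt{B/\log N}$ whenever $B\gg\log N$ (e.g.\ $B=N^{1/4}$ gives bias $\asymp N^{1/4}/\log N$ versus target $\asymp N^{1/8}/\sqrt{\log N}$). No ``square-root Cauchy--Schwarz across moments'' can repair this, because the $B/L$ term is a deterministic approximation error, not a variance. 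The paper's fix is a genuinely different construction: it partitions $[0,1]$ (after rescaling) into $M\asymp\sqrt{B/\log N}$ nested local intervals $I_m\subset I_m'\subset I_m''$ of Poisson-scale width, builds a \emph{local} Poisson polynomial of degree $\asymp\log N$ on each $I_m'$ using the refined Jackson bound $|\ell-p_k|\lesssim\sqrt{(b-a)(\lambda-a)}/k$, and then stitches them together via the Poisson-thinning identity (splitting $\mathrm{Poi}(B\lambda)$ into two independent $\mathrm{Poi}(B\lambda/2)$'s). This localized construction, taken from \cite{han2020optimality}, is what produces both the $\sqrt{B/\log N}$ bias and the controlled coefficient bound $\max_x|b_x|\lesssim BN^\epsilon$; it is the missing idea in your proposal, and part~(b) will not go through without it.
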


Our second theorem concerns with minimax lower bounds in estimating mixing distributions in model \eqref{Formula:DefinitionOfh_Q}. Combined with Theorem \ref{theorem:UpperBoundOfMixing}, it confirms the NPMLE's minimax optimality. 

\begin{theorem}[Minimax lower bound of mixing distribution estimation]\label{theorem:LowerBoundOfMixing} ~%Let $\tilde{Q}$ on $[0,B]$ be an estimator of the mixing distribution $Q$ based on the sample $\{X_i,i\in[N]\}$.
\begin{itemize}
\item[(a)] Supposing there exists $c_0>0$ such that $B\leq c_0\log N$,% it follows that there exists a constant $c_1>0$ depending only on $C_1$ such that 
it follows that for any $N\geq 3$,
\[
\inf_{\tilde{Q}}\sup_Q E\{W_1(\tilde{Q},Q)\}\geq\frac{B}{24e\log N}\log\Big( \frac{16c_0\log N}{B}\Big).% \frac{1}{8}\frac{B}{\log N}\log\left(\frac{\log N}{B}\vee e\right).
\]
\item[(b)] Supposing there exists $c_0>0$ such that $B\geq  c_0\log N$, it follows that for any $N\geq 1$,
\[
\inf_{\tilde{Q}}\sup_Q E\{W_1(\tilde{Q},Q)\}\geq \frac{3}{40e^4}\sqrt{\frac{B}{c_0\log N}}.
\]
\end{itemize}
In the above, the infimum and supremum are understood to be taken over all estimators and all distributions of support $[0,B]$
\end{theorem}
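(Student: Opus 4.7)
Both bounds follow Le Cam's two-point method: if we can exhibit probability measures $Q_0, Q_1$ supported on $[0,B]$ with $W_1(Q_0,Q_1)\geq 2\delta$ and $TV(h_{Q_0}^{\otimes N}, h_{Q_1}^{\otimes N}) \leq \tfrac{1}{2}$, then $\inf_{\widetilde Q}\sup_Q E\{W_1(\widetilde Q,Q)\}\geq \delta/2$. The construction mirrors the moment-matching strategy of \citet{wu2020optimal}: pick $Q_0, Q_1$ whose first $L$ ordinary moments agree, $m_k(Q_0)=m_k(Q_1)$ for $k=0,1,\ldots,L$, where $m_k(Q):=\int \lambda^k\,{\sf d}Q(\lambda)$. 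Taylor-expanding $e^{-\lambda}$ then gives
\[
h_{Q_0}(x)-h_{Q_1}(x) \;=\; \sum_{j\,\geq\,\max(0,\,L+1-x)}\frac{(-1)^j}{j!\,x!}\bigl(m_{x+j}(Q_0)-m_{x+j}(Q_1)\bigr),
\]
and, using $|m_k(Q_i)|\leq B^k$, splitting $\sum_x|h_{Q_0}(x)-h_{Q_1}(x)|$ at $x\asymp L$ and controlling the $x>L$ tail via a Chernoff bound on $\mathrm{Poi}(\lambda)$ with $\lambda\leq B$, one obtains $TV(h_{Q_0},h_{Q_1})\leq \Psi(L,B)$ of order $(cB/L)^{L}$ (up to polynomial prefactors), provided $L$ exceeds $eB$ by a constant factor.

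For the opposite direction---maximising $W_1(Q_0,Q_1)$ under the moment-matching constraint---Kantorovich duality combined with polynomial approximation gives the bound: since every polynomial $p$ of degree $\leq L$ integrates to zero against $Q_0-Q_1$, one has $W_1(Q_0,Q_1)=\sup_{f\in\mathrm{Lip}_1}\int (f-p)\,{\sf d}(Q_0-Q_1)$, and a Chebyshev/Gauss-quadrature extremal construction produces honest probability measures $Q_0, Q_1$ supported on $O(L)$ points with $W_1(Q_0,Q_1)\gtrsim B/L$ (cf.\ Jackson's theorem). For part (a), where $B\leq c_0\log N$, I would tune $L\asymp \log N/\log(\log N/B)$ so that $N\cdot\Psi(L,B)\lesssim 1$, which balances against $B/L$ to yield the claimed rate $\gtrsim (B/\log N)\log(\log N/B)$. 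Part (b), where $B\geq c_0\log N$, precludes the choice $L\gtrsim eB$ while keeping $\Psi(L,B)N$ bounded; I would therefore embed the moment-matching construction into a sub-interval of $[0,B]$ of length $D\asymp\sqrt{B\log N}$ anchored near the right endpoint and work with recentred moments, exploiting the Poisson noise scale $\sqrt{\lambda}\asymp\sqrt{B}$ at the embedded location, so that the resolvable $W_1$ gap becomes $\sqrt{B/\log N}$.

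The principal obstacle is obtaining the sharp TV bound in the Poisson setting. Unlike the Gaussian mixture case, where the noise density admits clean sub-Gaussian tail control, the Poisson PMF $e^{-\lambda}\lambda^x/x!$ transitions behaviour around $x\asymp\lambda$, so the dichotomy between ``moment-controlled'' small-$x$ and ``tail-controlled'' large-$x$ contributions must be calibrated carefully, with the threshold and the implicit constants tracked so as to recover both rates and their sharp transition at $B\asymp\log N$. A secondary difficulty is making the extremal moment-matching pair $(Q_0,Q_1)$ explicit as genuine probability measures (nonnegative with unit mass) achieving the Jackson $B/L$ rate on the \emph{exact} interval $[0,B]$ rather than a shifted/rescaled one; this is standard but requires verifying that the Gauss-quadrature weights sum correctly in both the ``big'' and ``embedded'' regimes used above.
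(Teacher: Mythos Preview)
Your overall strategy—Le Cam's two-point method with moment-matched priors—is exactly what the paper does. The paper, however, does not rederive the two key ingredients: it invokes a ready-made total-variation bound (Lemma~32 of \cite{jiao2018minimax}) stating that if $Q_0,Q_1$ are supported on $[a-M,a+M]$, share moments up to order $L$, and $L+1\ge (2eM)^2/a$, then
\[
\mathrm{TV}(h_{Q_0},h_{Q_1})\le 2\Bigl(\tfrac{eM}{\sqrt{a(L+1)}}\Bigr)^{L+1},
\]
and it cites Proposition~4.3 of \cite{vinayak2019maximum} for the existence of such $Q_0,Q_1$ with $W_1(Q_0,Q_1)\ge M/L$. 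For part (a) the paper takes $a$ a fixed multiple of $B$, $M=B$, and solves for $L$ via the Lambert $W$ function; for part (b) it takes $a\asymp B$, $M\asymp\sqrt{B\log N}$, $L=\log N$.

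Where your sketch needs tightening is the TV bound for part (b). Your proposed derivation—Taylor-expand $e^{-\lambda}$ about the origin and bound $\sum_x|h_{Q_0}(x)-h_{Q_1}(x)|$—naturally produces a bound of the form $(cD/L)^L$ in the \emph{width} $D$ of the support, with the proviso $L\gtrsim D$. In the regime $B\ge c_0\log N$ you want $D\asymp\sqrt{B\log N}$ and $L\asymp\log N$, but then $L\gtrsim D$ fails (it would force $\log N\gtrsim B$). The crucial feature of the Jiao--Han--Weissman bound is the $\sqrt{a}$ in the denominator, reflecting that Poisson$(\lambda)$ has noise scale $\sqrt{\lambda}$; this lets you take $L\asymp\log N\ll M$ while still driving TV below $1/N$ because $M/\sqrt{aL}\asymp\sqrt{B\log N}/\sqrt{B\log N}$ is order one. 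You clearly see this (you mention ``recentred moments'' and the $\sqrt{\lambda}$ scale), but the origin-based Taylor expansion you propose will not deliver it: one needs either a Charlier-polynomial expansion of the Poisson PMF or the direct combinatorial argument in \cite{jiao2018minimax} to extract the $\sqrt{a}$ gain. For part (a) your bound and the cited one coincide up to constants, so this gap is specific to part (b).
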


\begin{remark}
Under fully nonparametric binomial mixture models, minimax optimal convergence rates for NPMLEs of mixing distributions were obtained by \citet[Section 3]{vinayak2019maximum} in terms of the $W_1$ distance.
Under fully nonparametric binomial and Gaussian mixture models, \citet[Theorem 1]{tian2017learning} and \citet[Page 1985]{wu2020optimal} obtained optimal convergence rates for moment-based estimators in terms of $W_1$ distance; see also \citet[Remark 2]{polyanskiy2020self}.
\citet[Theorems 1 and 2]{nguyen2013convergence} upper bounded the Wasserstein distance between mixing distributions by the divergence between the corresponding mixture distributions under general mixture models, with normal mixture models as an example in Example 2. 
However, their results cannot be applied here since Theorem 1 restricts the mixing distribution being discrete and Theorem 2 is only for convolution mixture models.
\end{remark}

\section{Proofs}
\label{Section:Proofs}

\subsection{Proofs of theorems in Section~\ref{sec:test}}

\begin{proof}[Proof of Theorem~\ref{theorem:ConsistencyinMixingImpliesConsistencyinMixture}]
To simplify notations, we temporarily drop the subject index $j$ and the group index $k$ in this proof.
A restatement of this theorem is then as follows:
\begin{description}
\item ~~~~~~~~\textit{suppose there exists an estimator $\tilde Q=\tilde Q_N$ on $[0,B]$ such that $E\{W_1(\tilde Q,Q)~|~Q\}\to0$ as $N=N_{n}\to \infty$ for almost all $Q$ with regard to the measure $\mathcal{Q}$. Then we have $E\{W_1(h_{\tilde Q}, h_{Q})~|~Q\} \to 0 \text{ as } N=N_{n}\to\infty$ for almost all $Q$ with regard to the measure $\mathcal{Q}$.
}
%the NPMLEs $\hat Q$'s are all conditionally $W_1$-consistent.
\end{description}
This proof consists of three steps.
In the first step, we assume both $Q$ and $\tilde Q$'s are ordinary distributions with no randomness and prove that $W_1(\tilde Q,Q)\to0$ implies $W_1(h_{\tilde Q},h_Q)\to0$.
In the second step,  we temporarily forget the third-layer ``population model'' \eqref{model:population} and prove that $E\{W_1(\tilde Q,Q)\}\to0$ implies $E\{W_1(h_{\tilde Q},h_Q)\}\to0$ where the expectation is with respect to randomness from the ``measurement model'' \eqref{model:measure} and ``expression model'' \eqref{model:expression}.
In the third step, the third-layer ``population model'' \eqref{model:population} gets involved and we complete this proof.
\medskip

\par\noindent
\textbf{Step 1.}
Suppose $\{\tilde Q\}$ is a sequence of ordinary distributions with no randomness. 
To prove that $W_1(\tilde Q,Q)\to 0$ implies $W_1(h_{\tilde Q},h_Q)\to 0$,
note that $W_1(\tilde Q,Q)\to 0$ is equivalent to $\tilde Q\stackrel{d}{\to}Q$ supplemented with $E\{|\tilde Q|\}\to E\{|Q|\}$ \citep[Section 2.3]{doi:10.1146/annurev-statistics-030718-104938}.
Moreover, it follows from Skorokhod's representation theorem that we can assume $\tilde Q\stackrel{a.s.}{\to}Q$.
To prove $W_1(h_{\tilde Q},h_{Q})\to 0$, it suffices to prove that $h_{\tilde Q}\stackrel{d}{\to}h_{Q}$ and $E\{h_{\tilde Q}\}\to E\{h_{Q}\}$, where the second part follows immediately from $E\{h_{\tilde Q}\}=E\{\tilde Q\}$ and $E\{h_{Q}\}=E\{Q\}$.
For the first part, it follows from $e^{-\lambda}\lambda^x\leq (x/e)^x$ for all $\lambda\in\mathbb{R}^+$ and the dominated convergence theorem that 
\[
h_{\tilde Q}(x)=\int_0^B e^{-\lambda}\frac{\lambda^x}{x!}d\tilde Q(\lambda)\to \int_0^B e^{-\lambda}\frac{\lambda^x}{x!}dQ(\lambda)=h_{Q}(x).
\]
\medskip

\par\noindent
\textbf{Step 2.}
Now suppose $\{\tilde Q\}$ is a sequence of estimators for $Q$ with randomness from the ``measurement model'' \eqref{model:measure} and ``expression model'' \eqref{model:expression}.
Then it can be proved that $W_1(\tilde Q,Q)\stackrel{p}{\to}0$ implies $W_1(h_{\tilde Q},h_Q)\stackrel{p}{\to}0$ based on the result in \textbf{Step 1} and the fact that a sequence converging in probability is equivalent to that its every subsequence has a further subsequence that converges almost surely.
To prove $E\{W_1(\tilde Q,Q)\}\to0$ implies $E\{W_1(h_{\tilde Q},h_Q)\}\to0$, it suffices to verify that $E\{W_1(h_{\tilde Q},h_Q)^2\}$ is bounded which follows immediately from Proposition~\ref{Prop:W1BetweenTwoMixture}, or specifically, 
\[
W_1(h_{\tilde Q},h_Q)\leq E\{h_{\tilde Q}\}+E\{h_Q\}=E\{\tilde Q\}+E\{Q\}\leq 2B.
\]
\medskip

\par\noindent
\textbf{Step 3.}
Suppose $\mathbb{Q}_B$ is a set consisting of all distributions on $[0,B]$.
For any $Q_0\in\mathbb{Q}_B$ with $E\{W_1(\hat{Q},Q)~|~Q=Q_0\}\to 0$, it follows from \textbf{Step 2} that 
\[
E\{W_1(h_{\tilde{Q}},h_{Q})~|~Q=Q_0\}=E\{W_1(h_{\tilde{Q}},h_{Q_0})~|~Q=Q_0\}=E\{W_1(h_{\tilde{Q}},h_{Q_0})\}\to 0,
\]
where the expectation in the last term $E\{W_1(h_{\tilde{Q}},h_{Q_0})\}$ is with respect to the randomness from the ``measurement model'' \eqref{model:measure} and ``expression model'' \eqref{model:expression} only.
Then we can complete this proof by noting that $P(Q\in \mathbb{Q}_B)=1$.
\end{proof}

\begin{proof}[Proof of Theorem~\ref{theorem:W1consistency of NPMLEs}] 
To simplify notations, we temporarily drop the subject index $j$ and the group index $k$ in this proof.
A restatement of this theorem is accordingly as follows:
\begin{description}
\item ~~~~\textit{assume $N=N_{n}\to \infty$ as $n\to\infty$, $r_{i}=r_{i,n}\in [\gamma_0,\gamma_1]$ are uniformly upper and lower bounded by two positive universal constants $\gamma_0,\gamma_1$, and $\mathcal{Q}$ is supported on $[0,B]$. We then have $E\{W_1(\hat Q, Q)~|~Q\} \to 0 \text{ as } n\to\infty$ for almost all $Q$ with regard to the measure $\mathcal{Q}$.}
%the NPMLEs $\hat Q$'s are all conditionally $W_1$-consistent.
\end{description}
This proof consists of two steps.
In the first step, we temporarily drop further the third-layer ``population model'' \eqref{model:population} and prove that for each fixed distribution $Q$ supported on $[0,B]$ we have $E\{W_1(\hat Q, Q)\} \to 0 \text{ as } n\to\infty$, where the expectation is with respect to randomness from the ``measurement model'' \eqref{model:measure} and ``expression model'' \eqref{model:expression}.
In the second step, the third-layer ``population model'' \eqref{model:population} gets involved and we complete the proof of the conditional $W_1$-consistency of $\hat Q$.
\medskip

\par\noindent
\textbf{Step 1.}
The first step consists of three substeps.
In the first substep, we prove that the set containing all distributions which are at least $\delta>0$ far from $Q$ can be covered by finite open balls in $W_1$ distance.
In the second substep, with the aid of finite balls, we prove that, with probability converging to 1, no distributions that are at least $\delta$ far away from $Q$ can maximize the likelihood function and hence the $W_1$ distance between $\hat{Q}$ and $Q$ is less than $\delta$.
Then the $W_1$ consistency of $\hat Q$ follows immediately from picking a arbitrarily small $\delta$.
 
\textbf{Step 1(a).}
Let $\mathbb{Q}_B$ be a metric space consisting of all distributions supported on $[0,B]$ with the $W_1$ distance.
% and let $\bar{\mathbb{Q}}_B$ be its compactification, see \citet[Section 2.2.1]{chen2017consistency}.
For any $\delta>0$, define 
\[
\mathcal{B}_\delta(Q):=\Big\{Q^\prime\in\mathbb{Q}_B: W_1(Q^\prime,Q)<\delta\Big\}
\]
and its complement is denoted by $\mathcal{B}_\delta^c(Q)$.
%After  compactificaton of $\mathbb{Q}_B$, it follows from \citet[Section 2.2.1]{chen2017consistency} that $\mathcal{B}_\delta^c(Q)$ is compact.
%By the compactness of $\mathbb{Q}_B$, it follows that $\mathcal{B}_\delta^c(Q)$ is compact as well (cf. \citet[Example 2.1]{chen2017consistency}).

In the sequel, fix $\epsilon$ to be a small positive number.
Suppose $Q_1,Q_2$ are two distributions on $[0,B]$ and $F_{Q_1},F_{Q_2}$ are their distribution functions.
It then follows from 
\[
e^{-B}\int_0^B|F_{Q_1}-F_{Q_2}|\leq \int_0^B|F_{Q_1}(\lambda)-F_{Q_2}(\lambda)|e^{-\lambda}{\sf d}\lambda\leq\int_0^B|F_{Q_1}-F_{Q_2}|
\]
that Kiefer-Wolfowitz distance \citet[page 51]{chen2017consistency} and Wasserstein-1 distance induce the same topology on $\mathbb{Q}_B$.
Hence it follows from \citet[page 54]{chen2017consistency} that that  
%Then it follows from the continuity condition, i.e. $\lim\limits_{W_1(Q^\prime,Q)\to 0}h_{Q^\prime}(x)=h_Q(x)$ for all $x$ \citet[Example 2.1]{chen2017consistency}, that 
there exists a finite number of distributions $Q_j\in\mathbb{Q}_B,j\in[J],$
%with corresponding $\epsilon_j,j\in[J]$ 
such that 
\[
\mathcal{B}^c_\delta(Q)\subset \bigcup_{j\in[J]}\mathcal{B}_\epsilon(Q_j).%_j,
\]
Without loss of generality, it is assumed that $Q_j$ is neither a deterministic distribution at $0$ (in other words, degenerate distribution at $0$) nor $Q$ for each $j\in[J]$.
%For simplicity, define $\mathcal{B}_j:=\mathcal{B}_\epsilon(Q_j)$.

\textbf{Step 1(b).}
Let $Y_{j,\epsilon}(r):=\log\Big\{1+u\left(h_{r,Q}(H_{r,Q}^{-1}(U))/h_{r,\mathcal{B}_\epsilon(Q_j)}(H_{r,Q}^{-1}(U))-1\right)\Big\}$,
where $r\in[\gamma_0,\gamma_1]$, $u\in(0,1)$, $h_{r,Q}(x):=\int_0^Be^{-r\lambda}\frac{(r\lambda)^{x}}{x!}dQ(\lambda)$, $h_{r,\mathcal{B}_\epsilon(Q_j)}(x):=\sup\limits_{Q^\prime\in\mathcal{B}_\epsilon(Q_j)}h_{r,Q^\prime}(x)$, $U$ is a uniform random variable on $[0,1]$, and $H_{r,Q}^{-1}(\cdot)$ is a function such that $H_{r,Q}^{-1}(U)\sim h_{r,Q}$.

\textbf{(i)} We first prove that there exist constants $\epsilon_j>0$ and $c_j>0$ such that for all $\epsilon\leq \epsilon_j$ and $r\in[\gamma_0,\gamma_1]$ we have $E\{Y_{j,\epsilon}(r)\}\geq c_j>0$.

Note that $Y_{j,\epsilon}(r)\geq\log(1-u)$ and $\lim\limits_{\epsilon\to0^+}Y_{j,\epsilon}(r)=Y_{j,0^+}(r)$ almost surely, where 
$$
Y_{j,0^+}(r):=\log\Big\{1+u\left(h_{r,Q}(H_{r,Q}^{-1}(U))/h_{r,Q_j}(H_{r,Q}^{-1}(U))-1\right)\Big\}.
$$
Since $Y_{j,\epsilon}(r)$ is monotonically decreasing with respect to $\epsilon$, it follows from the monotone convergence theorem that 
$\lim\limits_{\epsilon\to 0^+}E\{Y_{j,\epsilon}(r)\}=E\{Y_{j,0^+}(r)\}$ for each $r\in[\gamma_0,\gamma_1]$.
Moreover, it follows from Lemma 2.5 in \citet{chen2017consistency} that $E\{Y_{j,0^+}(r)\}>0$ for each $r\in[\gamma_0,\gamma_1]$.
Since $E\{Y_{j,0^+}(r)\}$ is a continuous function with respect to $r$ and $r\in[\gamma_0,\gamma_1]$, there exists a positive constant $c_j$ such that 
$E\{Y_{j,0^+}(r)\}\geq 2c_j$ for all $r\in[\gamma_0,\gamma_1]$.
Furthermore, since $E\{Y_{j,\epsilon}(r)\}$ is a monotonically decreasing function with respect to $\epsilon$, then it follows from Dini's theorem that $E\{Y_{j,\epsilon}(r)\}$ uniformly converges to $E\{Y_{j,0^+}(r)\}$ as $\epsilon\to0^+$ on $r\in[\gamma_0,\gamma_1]$ and hence there exists a $\epsilon_j$ which doesn't depend on $r$ such that for all $\epsilon\leq \epsilon_j$ and $r\in[\gamma_0,\gamma_1]$ we have $|E\{Y_{j,\epsilon}(r)\}-E\{Y_{j,0^+}(r)\}|\leq c_j$.
Hence $E\{Y_{j,\epsilon}(r)\}\geq c_j>0$ for all $\epsilon\leq\epsilon_j$ and $r\in[\gamma_0,\gamma_1]$.
Replacing $r$ by $r_{i,n}$ for all $\epsilon\leq\epsilon_j$ we have
\[
E\{Y_{j,\epsilon}(r_{i,n})\}\geq c_j>0\text{ for all }i\in[N_n].
\]
In the following arguments, set $\epsilon=\min\limits_{j\in [J]}\{\epsilon_j\}$ and let $\mathcal{B}_j:=\mathcal{B}_\epsilon(Q_j)$ for simplicity.

\textbf{(ii)} We then prove that there exists a constant $C_j$ such that $\var\{Y_{j,\epsilon}(r)\}\leq C_j<\infty$ for all $r\in[\gamma_0,\gamma_1]$.

Since $h_{r,\mathcal{B}_j}(x)\geq h_{r,Q_j}(x)$ and $\log\{1+u(h_{r,Q}(x)/h_{r,\mathcal{B}_j}(x)-1)\}\geq \log (1-u)$, we have $E\{Y^2_{j,\epsilon}(r)\}<\infty$ or equivalently
\[
E\{\left(\log\{1+u(h_{r,Q}(X_{r})/h_{r,\mathcal{B}_j}(X_{r})-1)\}\right)^2\}<\infty,
\]
where $X_{r}:=H_{r,Q}^{-1}(U)\sim h_{r,Q}$,
as long as  $E\{(h_{r,Q}(X_{r})/h_{r,Q_j}(X_{r})-1)^2\}<\infty$, or simply, 
\[
E\{(h_{r,Q}(X_{r})/h_{r,Q_j}(X_{r}))^2\}<\infty.
\]
To prove it, note that $Q_j$ is not a deterministic distribution at $0$ and hence there exist $\lambda_j\in(0,B]$ such that $F_{Q_j}(\lambda_j)<1$.
Then we have
\[
h_{r,Q}(x)\leq e^{-B\gamma_1}\frac{(B\gamma_1)^x}{x!}\text{ and } h_{r,Q_j}(x)\geq \left(1-F_{Q_j}(\lambda_j)\right)e^{-\gamma_0\lambda_j}\frac{(\gamma_0\lambda_j)^x}{x!}
\]
for sufficiently large $x$,
and hence
\[
\frac{h_{r,Q}(x)}{h_{r,Q_j}(x)}\leq \frac{e^{\gamma_0\lambda_j-B\gamma_1}}{\left(1-F_{Q_j}(\lambda_j)\right)}\left(\frac{B\gamma_1}{\gamma_0\lambda_j}\right)^x.
\]
Then $E\{(h_{r,Q}(X_{r})/h_{r,Q_j}(X_{r}))^2\}<\infty$ follows immediately from the existence of the moment generating function of Poisson distribution.
Since $E\{Y^2_{j,\epsilon}(r)\}$ is a continuous function with respect to $r$ and $r\in[\gamma_0,\gamma_1]$, then there exists a uniform constant $C_j$ such that 
\[
E\{Y^2_{j,\epsilon}(r)\}\leq C_j
\] 
for all $r\in[\gamma_0,\gamma_1]$. Replacing $r$ by $r_{i,n}$ it follows that $E\{Y^2_{j,\epsilon}(r_{i,n})\}\leq C_j$ for all $i\in[N_n]$.

\textbf{(iii)}
Suppose $X_{i,n},i\in[N_n]$ is a sequence of independent random variables with $X_{i,n}\sim h_{r_{i,n},Q}$.
Define $Z_{ij,n}:=\log\Big\{1+u\left(h_{r_{i,n},Q}(X_{i,n})/h_{r_{i,n},\mathcal{B}_\epsilon(Q_j)}(X_{i,n})-1\right)\Big\}$.
Note that $Z_{ij,n}\stackrel{d}{=}Y_{j,\epsilon}(r_{i,n})$.
Built on (i) and (ii), we have $E\{Z_{ij,n}\}\geq c_j>0$ and $\var\{Z_{ij,n}\}\leq C_j<\infty$ for $i\in[N_n]$ and $j\in[J]$.
Therefore,
\[
\var\left\{\sum\limits_{i\in[N_n]}Z_{ij,n}\right\}\leq N_nC_j
\]
 and hence $\var\left\{\sum\limits_{i\in[N_n]}Z_{ij,n}\right\}/N_n^2\to0$ as $n\to\infty$.
%Combining with the finiteness of $J$, then $E\left(\log\{1+u(h_{i,Q}(X)/h_{i,\mathcal{B}_j}(X)-1)\}\right)^2$ is uniformly upper bounded.
Then it follows from Markov's inequality that
\[
\frac{1}{N_n}\sum_{i\in[N_n]}\left(Z_{ij,n}-E\{Z_{ij,n}\}\right)\stackrel{p}{\to}0
\]
for each $j\in[J]$.
In other words, for any positive number $\xi$ and events 
\[
A_{j,n}:=\left|\frac{1}{N_n}\sum_{i\in[N_n]}\left(Z_{ij,n}-E\{Z_{ij,n}\}\right)\right|\leq \xi,
\]
we have $\lim\limits_{n\to\infty}P(A_{j,n})=1$.
Combined with $E\{Z_{ij,n}\}\geq c_j$, we have, under $A_{j,n}$ with $\xi\leq c_j/2$,
\[
\frac{1}{N_n}\sum_{i\in[N_n]}\log\{1+u\left(h_{r_{i,n},Q}(X_{i,n})/h_{r_{i,n},\mathcal{B}_j}(X_{i,n})-1\right)\}> c_j/2,
\]
and hence
\begin{eqnarray*}
0&<&\sum_{i\in[N_n]}\log\{1+u\left(h_{r_{i,n},Q}(X_{i,n})/h_{r_{i,n},\mathcal{B}_j}(X_{i,n})-1\right)\}\\
&\leq&\inf_{Q^\prime\in\mathcal{B}_j}\sum_{i\in[N_n]}\log\{1+u\left(h_{r_{i,n},Q}(X_{i,n})/h_{r_{i,n},Q^\prime}(X_{i,n})-1\right)\}
\end{eqnarray*}
for each $j\in[J]$.
Noting that $\mathcal{B}^c_\delta(Q)\subset \bigcup_{j=1}^J\mathcal{B}_j$, the last display implies that under events $A_{n}:=\bigcap\limits_{j\in[J]}A_{j,n}$ with $\xi\leq \min\limits_{j\in[J]}c_j/2$ we have
\[
0<\inf_{Q^\prime\notin\mathcal{B}_\delta(Q)}\sum_{i\in[N_n]}\log\{1+u\left(h_{r_{i,n},Q}(X_{i,n})/h_{r_{i,n},Q^\prime}(X_{i,n})-1\right)\},
\]
or equivalently,
\[
l_{n}(uQ+(1-u)Q^\prime)>l_{n}(Q^\prime)
\]
for all $Q^\prime\in\mathcal{B}^c_\delta(Q)$, where for each $Q^\prime\in\mathbb{Q}_B$
\[
l_{n}(Q^\prime):=\sum_{i\in[N_n]}\log\int_0^Be^{-r_{i,n}\lambda}\frac{(r_{i,n}\lambda)^{X_{i,n}}}{X_{i,n}!}dQ^\prime(\lambda).
\]
Therefore, under events $A_{n}$ with $\xi\leq \min\limits_{j\in[J]}c_j/2$, the maximum likelihood estimator $\hat{Q}$ must belong to $\mathcal{B}_\delta(Q)$ and hence $W_1(\hat{Q},Q)\leq \delta$.
Since $P(A_n)\to1$, we have $P(W_1(\hat{Q},Q)\leq \delta)\to1$, or equivalently, $W_1(\hat{Q},Q)\stackrel{p}{\to}0$ as $n\to\infty$.
It further follows from $W_1(\hat{Q},Q)\leq B$ that $E\{W_1(\hat{Q},Q)\}\to0$ as $n\to\infty$.
\medskip

\par\noindent
\textbf{Step 2.}
For any $Q_0\in\mathbb{Q}_B$, it follows from \textbf{Step 1} that 
\[
E\{W_1(\hat{Q},Q)~|~Q=Q_0\}=E\{W_1(\hat{Q},Q_0)~|~Q=Q_0\}=E\{W_1(\hat{Q},Q_0)\}\to 0,
\]
where the expectation in the last term $E\{W_1(\hat{Q},Q_0)\}$ is with respect to the randomness from the ``measurement model'' \eqref{model:measure} and ``expression model'' \eqref{model:expression} only.
Then it follows from $P(Q\in\mathbb{Q}_B)=1$ that $E\{W_1(\hat{Q},Q)~|~Q\}\to 0$ for almost all $Q$ with regard to the measure $\mathcal{Q}$.
\end{proof}

\begin{proof}[Proof of Theorem~\ref{theorem:size_validity}] %Without loss of generality, it is assumed $r_{ij}^{(k)}=r_{i1}^{(1)}$ for $i\in N$, $j\in[n_k]$ and $k\in[K]$. 
By the construction of the population model \eqref{model:population}, under the $H_0$ in \eqref{eq:H0} $Q_j^{(k)}$'s are independent and identically distributed.
Furthermore, since the sets $\{r_{ij}^{(k)}, i\in[N]\}$ are invariant with respect to $j\in[n_k]$ and $k\in[K]$ for each $i\in[N]$, the random vectors $(X_{1j}^{(k)},\ldots,X_{Nj}^{(k)})^\top$'s are independent and identically distributed.
Therefore, $\tilde Q_j^{(k)}$'s are independent and identically distributed.
As a consequence, $\tilde F$ is uniformly distributed over 
\[
\mathcal{\tilde{F}}^{\pi}:=\Big\{\tilde F^{\pi}:\pi\in\text{ all permutations of }[n]\to[n]\Big\}
\]
and hence 
 \[
 P(\tilde F>\text{ the $1-\alpha$ quantile of $\mathcal{\tilde F}^{\pi}$ }| H_0) \leq \alpha.
 \]
 Note that the event 
 \[
 \tilde F>\text{ the $1-\alpha$ quantile of $\mathcal{\tilde F}^{\pi}$ }
 \]
 is identical to the event
 \[
 P(\tilde F^{\pi}<\tilde F~|~\tilde Q_j^{(k)}\text{'s})\geq 1-\alpha,\text{ where probability here is with respect to the random permutation $\pi$,}
 \]
 and hence
$P(\tilde T_\alpha=1 | H_0) \leq \alpha$.
The proof of $P(\tilde T_{h,\alpha}=1 | H_0) \leq \alpha$ is analogous and hence omitted.
\end{proof}

\begin{proposition}\label{Prop:W1BetweenTwoMixture}
Suppose $P$ and $Q$ are two distributions supported on $[0,\infty)$.% supported on $[0,B]$. Then $W_1(h_P,h_Q)\leq 2B$.
Then $W_1(P,Q)\leq E\{P\}+E\{Q\}$ and $W_1(P,Q)\geq |E\{P\}-E\{Q\}|$.
\end{proposition}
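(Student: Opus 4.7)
The plan is to obtain both inequalities from the Kantorovich--Rubinstein dual representation
\[
W_1(P,Q) = \sup_{\ell\in\mathrm{Lip}_1}\int \ell\,({\sf d}P-{\sf d}Q),
\]
paired with the classical tail identity $E\{X\}=\int_0^\infty(1-F_X(t))\,{\sf d}t$ that holds for any nonnegative random variable $X$. Both bounds will be very short, so this is less a proof with a single hard step and more a matching of natural test functions to the dual, plus one pointwise bound on CDFs.

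For the lower bound I would plug the identity function $\ell(x)=x$ into the dual; it is $1$-Lipschitz, so it is admissible, and the supremum dominates $\int x\,({\sf d}P-{\sf d}Q)=E\{P\}-E\{Q\}$. Taking $\ell(x)=-x$ gives the reverse sign, and combining yields $W_1(P,Q)\geq |E\{P\}-E\{Q\}|$. Finiteness of the integrals is automatic because both distributions are supported on $[0,\infty)$ and will have finite means in every setting the proposition is invoked (in Theorem~\ref{theorem:ConsistencyinMixingImpliesConsistencyinMixture} the supports are compact).

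For the upper bound I would invoke the one-dimensional CDF representation
\[
W_1(P,Q)=\int_0^{\infty}\bigl|F_P(t)-F_Q(t)\bigr|\,{\sf d}t,
\]
valid because both distributions live on $[0,\infty)$. Rewriting the integrand as $\bigl|(1-F_Q(t))-(1-F_P(t))\bigr|$ and applying the trivial bound $|a-b|\leq a+b$ for nonnegative $a,b$ gives
\[
\bigl|F_P(t)-F_Q(t)\bigr|\leq (1-F_P(t))+(1-F_Q(t)).
\]
Integrating and applying the tail identity to each term separately yields $E\{P\}+E\{Q\}$, as required.

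There is no genuine obstacle here; the only care points are citing the CDF formula for $W_1$ on $[0,\infty)$ and observing that the tail identity reduces the right-hand side to the means. The bound is visibly not tight in general (e.g.\ it becomes an equality when one of the measures is $\delta_0$), which is acceptable since the proposition is only used as a crude but uniform control of $W_1(h_{\tilde Q},h_Q)$ by $2B$ in the dominated-convergence step of the proof of Theorem~\ref{theorem:ConsistencyinMixingImpliesConsistencyinMixture}.
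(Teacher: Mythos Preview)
Your proof is correct and matches the paper's argument almost verbatim for the upper bound (CDF representation, pointwise triangle inequality, tail identity). For the lower bound the paper stays with the CDF integral and applies the reverse triangle inequality $\int_0^\infty|(1-F_P)-(1-F_Q)|\geq\bigl|\int_0^\infty(1-F_P)-\int_0^\infty(1-F_Q)\bigr|$, which is equivalent to your choice of $\ell(x)=\pm x$ in the dual; the difference is purely cosmetic.
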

\begin{proof}
Denote distribution functions of $P$ and $Q$ by $F_{P}$ and $F_{Q}$ respectively.
Then it follows from the triangle inequality that
$
W_1(P,Q)
=\int_0^\infty |(1-F_{P})-(1-F_{Q})|
\leq\int_0^\infty \left(1-F_{P}\right)+\int_0^\infty \left(1-F_{Q}\right)
=E\{P\}+E\{Q\}
$
and
$
W_1(P,Q)
=\int_0^\infty |(1-F_{P})-(1-F_{Q})|
\geq|E\{P\}-E\{Q\}|.
$
\end{proof}

Define
\begin{eqnarray}\label{eq:DefinitionOfF}
F:=\Big(SS_T-\sum\limits_{k\in[K]}SS_{k}\Big)\Big/\sum\limits_{k\in[K]}SS_{k},
\end{eqnarray}
where 
\begin{align*}
SS_T:=\frac{1}{n}\sum\limits_{k_1,k_2 \in [K]}\sum\limits_{j_1\in [n_{k_1}],j_2\in [n_{k_2}]}W_1(Q_{j_1}^{(k_1)},Q_{j_2}^{(k_2)})^2~~~{\rm and}~~~
SS_{k}:=\frac{1}{n_k}\sum\limits_{j_1,j_2\in [n_k]}W_1(Q_{j_1}^{(k)},Q_{j_2}^{(k)})^2.
\end{align*}
For any permutation $\pi: [n]\to [n]$, define
\begin{eqnarray}\label{eq:DefinitionOfFpi}
F^\pi:=\Big(SS_T-\sum\limits_{k\in[K]}SS_{k}^\pi\Big)\Big/\sum\limits_{k\in[K]}SS_{k}^\pi,
\end{eqnarray}
where 
\begin{eqnarray*}
SS_k^\pi&:=&\frac{1}{n_{k}}\sum\limits_{j_1,j_2\in [n_{k}]}W_1\Big(Q_{\Pi^{j_1,k}_1}^{(\Pi^{{j_1,k}}_2)},Q_{\Pi^{j_2,k}_1}^{(\Pi^{j_2,k}_2)}\Big)^2\text{ for each }k\in[K].
\end{eqnarray*}
\medskip

\par\noindent
\begin{proof}[Proof of Theorem~\ref{theorem:TestConsistency}(a)]
Throughout this proof, unless conditioning on certain events, the probability refers to randomness from all three layers as well as the permutation. 
Without loss of generality, it is assumed that $\mathcal{Q}_k$ is non-degenerate for each $k\in[K]$.
Otherwise, the proof is analogous and omitted.

%\begin{eqnarray*}
%\tilde F-\tilde F^\pi
%&=&\tilde{SS}_T\left(\frac{1}{\sum\limits_{k\in[K]}\tilde{SS}_k}-\frac{1}{\sum\limits_{k\in[K]}\tilde{SS}^\pi_k}\right),
%\end{eqnarray*}
%for $\frac{1}{n_k-1}\tilde{SS}^\pi_k$ see \eqref{eq:LimitOfSS_k^pi}, for $\frac{1}{n}\tilde{SS}_T$ see Step 4(a) and for $\frac{1}{n_k}\tilde{SS}_k$ converges to 0 in probability.

This proof consists of five steps. In the first step, we prove that if the following Equation \eqref{eq:ConsistencyGoal} is true,
\begin{eqnarray}
\lim_{n\to\infty}P\Big(\tilde F>\tilde F^\pi|H_1\Big)=1,
\label{eq:ConsistencyGoal}
\end{eqnarray}
%where the probability above refers to randomness from all three layers and permutations, 
then $\lim\limits_{n\to\infty}P(\tilde T_\alpha=1~|~H_1)=1$ for any $\alpha\in(0,1)$. The rest four steps are devoted to proving Equation \eqref{eq:ConsistencyGoal}. 
Note that $\tilde{F}-\tilde F^\pi=(\tilde{F}-F)+(F-F^\pi)+(F^\pi-\tilde{F}^\pi)$, where $F$ is defined in \eqref{eq:DefinitionOfF} and $F^\pi$ is defined in \eqref{eq:DefinitionOfFpi}.
The second step proves that $\tilde{F}-F\stackrel{p}{\to}0$ as $n\to\infty$.  
The third step proves that $F^\pi-\tilde{F}^\pi\stackrel{p}{\to}0$ as $n\to\infty$. 
The fourth step proves that $F-F^\pi\stackrel{p}{\to}$ some strictly positive constant. In the fifth step, we combine results in Steps 2-4 to prove \eqref{eq:ConsistencyGoal} and hence finish the proof of Theorem \ref{theorem:TestConsistency}(a).

In the following the notion $H_1$ in the probability is abandoned as long as no confusion is possible.

\medskip

\par\noindent
\textbf{Step 1.}
Note that 
\[
P(\tilde T_\alpha=1)=P\Big\{P(\tilde F^{\pi}<\tilde F~|~\tilde Q_j^{(k)}\text{'s})\geq 1-\alpha\Big\}=1-P\Big\{P(\tilde F^{\pi}<\tilde F~|~\tilde Q_j^{(k)}\text{'s})< 1-\alpha\Big\},
\]
\begin{eqnarray*}
P\Big\{P(\tilde F^{\pi}<\tilde F~|~\tilde Q_j^{(k)}\text{'s})< 1-\alpha\Big\}
=P\Big\{P(\tilde F^{\pi}\geq \tilde F~|~\tilde Q_j^{(k)}\text{'s})> \alpha)\Big\}
\leq \frac{E\Big\{P(\tilde F^{\pi}\geq \tilde F~|~\tilde Q_j^{(k)}\text{'s})\Big\}}{\alpha},
\end{eqnarray*}
and
\[
E\Big\{P(\tilde F^{\pi}\geq \tilde F~|~\tilde Q_j^{(k)}\text{'s})\Big\}
=P(\tilde F^{\pi}\geq \tilde F)
=1-P(\tilde F^{\pi}<\tilde F).
\]
Therefore, we have $\lim\limits_{n\to\infty}P\{P(\tilde F^{\pi}<\tilde F~|~\tilde Q_j^{(k)}\text{'s})< 1-\alpha\}=0$ and hence $\lim\limits_{n\to\infty}P(\tilde T_\alpha=1)=1$ as long as \eqref{eq:ConsistencyGoal} holds.
\medskip

\par\noindent 
\textbf{Step 2.}
In this step, we prove that $\tilde{F}-F\stackrel{p}{\to}0$ as $n\to\infty$, where the probability here refers to randomness from all three layers.

Note that
\begin{eqnarray*}
 \left|\tilde{F}-F\right|
=\left|\frac{\tilde{SS}_T-\sum\limits_{k\in[K]}\tilde{SS}_k}{\sum\limits_{k\in[K]}\tilde{SS}_k}-\frac{SS_T-\sum\limits_{k\in[K]}SS_k}{\sum\limits_{k\in[K]}SS_k}\right|
=\left|\frac{SS_T}{\sum\limits_{k\in[K]}SS_k}-\frac{\tilde{SS}_T}{\sum\limits_{k\in[K]}SS_k}+\frac{\tilde{SS}_T}{\sum\limits_{k\in[K]}SS_k}-\frac{\tilde{SS}_T}{\sum\limits_{k\in[K]}\tilde{SS}_k}\right|,
%\left|\frac{\tilde{SS}_T}{\sum\limits_{k\in[K]}\tilde{SS}_k}-\frac{SS_T}{\sum\limits_{k\in[K]}SS_k}\right|\notag
\end{eqnarray*}
where $SS_k$ and $SS_T$ are defined in \eqref{eq:DefinitionOfF}.
It then follows from the triangle inequality that
\begin{eqnarray}
\left|\tilde{F}-F\right|
&\leq&\left|\frac{SS_T-\tilde{SS}_T}{\sum\limits_{k\in[K]}SS_k}\right|+\left|\frac{\tilde{SS}_T}{\sum\limits_{k\in[K]}\tilde{SS}_k}\right|\left|\frac{\sum\limits_{k\in[K]}(\tilde{SS}_k-SS_k)}{\sum\limits_{k\in[K]}SS_k}\right|\notag\\
&\leq&\left|\frac{SS_T-\tilde{SS}_T}{\sum\limits_{k\in[K]}SS_k}\right|+\left|\frac{B^2}{\frac{1}{n-1}\sum\limits_{k\in[K]}\tilde{SS}_k}\right|\left|\frac{\sum\limits_{k\in[K]}(\tilde{SS}_k-SS_k)}{\sum\limits_{k\in[K]}SS_k}\right|\label{eq:DiffInF-TildeF},
\end{eqnarray}
where $\frac{1}{n-1}\tilde{SS}_T\leq B^2$ follows from $W_1(\tilde Q_{j_1}^{(k_1)},\tilde Q_{j_2}^{(k_2)})\leq B^2$ for each $j_1,j_2,k_1,k_2$.

\textbf{Step 2(a).}
We first prove that $\left|\frac{\tilde{SS}_k}{n_k-1}-\frac{SS_{k}}{n_k-1}\right|\stackrel{p}{\to}0$ and $\frac{1}{n-1}\left|\tilde{SS}_T-SS_T\right|\stackrel{p}{\to}0$.

It follows from the triangle inequality that for each $k\in[K]$
%\left|\frac{\tilde{SS}_k}{n_k-1}-\frac{SS_k}{n_k-1}\right|\leq\frac{6B}{n_k}\sum\limits_{j\in[n_k]}W_1(\tilde{Q}^{(k)}_j,Q^{(k)}_j),\ 
\begin{align*}
\left|\frac{\tilde{SS}_k}{n_k-1}-\frac{SS_{k}}{n_k-1}\right|
=&\frac{1}{n_k(n_k-1)}\left|\sum\limits_{j_1,j_2\in [n_k]}\left(W_1(\tilde{Q}_{j_1}^{(k)},\tilde{Q}_{j_2}^{(k)})^2-W_1(Q_{j_1}^{(k)},Q_{j_2}^{(k)})^2\right)\right|\\
\leq&\frac{2B}{n_k(n_k-1)}\sum\limits_{j_1,j_2\in [n_k]}\left(W_1(\tilde{Q}_{j_1}^{(k)},Q_{j_1}^{(k)})+W_1(\tilde{Q}_{j_2}^{(k)},Q_{j_2}^{(k)})\right)\cdot I(j_1\neq j_2)\\
=&\frac{4B}{n_k}\sum\limits_{j\in [n_k]}W_1(\tilde{Q}_{j}^{(k)},Q_{j}^{(k)})
\end{align*}
and analogously we have
\[
\left|\frac{\tilde{SS}_{T}}{n-1}-\frac{SS_{T}}{n-1}\right|\leq\frac{4B}{n}\sum\limits_{k\in[K]}\sum\limits_{j\in[n_k]}W_1(\tilde{Q}^{(k)}_j,Q^{(k)}_j).
\]
Therefore,
\begin{eqnarray*}
E\left\{\left|\frac{\tilde{SS}_k}{n_k-1}-\frac{SS_{k}}{n_k-1}\right|\right\}
\leq4B\cdot \frac{1}{n_k}\sum\limits_{j\in [n_k]}E\left\{W_1(\tilde{Q}_{j}^{(k)},Q_{j}^{(k)})\right\}
=4B\cdot E\left\{W_1(\tilde{Q}_{1}^{(k)},Q_{1}^{(k)})\right\},
\end{eqnarray*}
where the last equality follows from Assumption \ref{ass:balance}.
Noting that $W_1(\tilde{Q}_{1}^{(k)},Q_{1}^{(k)})\leq B$ and 
\[
E\Big\{W_1(\tilde{Q}_{1}^{(k)},Q_{1}^{(k)})~|~Q_{1}^{(k)}\Big\}\stackrel{a.s.}{\to}0,
\]
we have using
\[
E\left\{W_1(\tilde{Q}_{1}^{(k)},Q_{1}^{(k)})\right\}=E\left[E\{W_1(\tilde{Q}_{1}^{(k)},Q_{1}^{(k)})|Q_{1}^{(k)}\}\right]
\]
that 
\[
E\left\{W_1(\tilde{Q}_{1}^{(k)},Q_{1}^{(k)})\right\}\to 0 \text{ as }n\to\infty.
\]
Analogously, we have $E\left\{\frac{\left|\tilde{SS}_T-SS_T\right|}{n-1}\right\}\to0$ as $n\to\infty$.
Therefore, in \eqref{eq:DiffInF-TildeF} we have $\frac{\left|\tilde{SS}_T-SS_T\right|}{n-1}\stackrel{p}{\to}0$ and $\sum\limits_{k\in[K]}\frac{\left|\tilde{SS}_{k}-SS_{k}\right|}{n_k-1}\stackrel{p}{\to}0$.

\textbf{Step 2(b).}
We then prove that $\frac{1}{n_k-1}SS_{k}\stackrel{p}{\to}$ some strictly positive constant.

It follows from $|W_1(Q_{j_1}^{(k_1)},Q_{j_2}^{(k_2)})|\leq B$ and the strong law of large numbers for U-statistics 
\citet[Chapter 5.4]{Serfling1980} 
that
\begin{eqnarray}\label{eq:LimitOfSS_k}
\frac{1}{n_k-1}SS_{k}
&\stackrel{a.s.}{\to}&E\{W_1(Q^{(k)}_1,Q^{(k)}_2)^2\}.
\end{eqnarray}
If $E\{W_1(Q^{(k)}_1,Q^{(k)}_2)^2\}=0$, then $W_1(Q^{(k)}_1,Q^{(k)}_2)=0$ almost surely and hence $Q^{(k)}_1=Q^{(k)}_2$ almost surely, which implies that 
$\mathcal{Q}_k$ is degenerate.

\textbf{Step 2(c).}
Built on Step 2(a) and Step 2(b), we have
\[
\frac{1}{n_k-1}\tilde{SS}_{k}=\frac{1}{n_k-1}(\tilde{SS}_{k}-SS_{k})+\frac{1}{n_k-1}SS_{k}\stackrel{p}{\to}E\Big\{W_1(Q^{(k)}_1,Q^{(k)}_2)^2\Big\}.
\]
Therefore, Slutsky's theorem guarantees $|\tilde F-F|\stackrel{p}{\to}0$, where the  probability here refers to randomness from all three layers.
\medskip

\par\noindent
\textbf{Step 3.}
In this step, we prove that $F^\pi-\tilde{F}^\pi\stackrel{p}{\to}0$ as $n\to\infty$, where the probability here refers to randomness from all three layers as well as the permutation.

To prove it, it suffices to show that 
$\frac{1}{n_k-1}\left|\tilde{SS}^\pi_{k}-SS_{k}^\pi\right|\stackrel{p}{\to}0$ and $\frac{1}{n_k-1}SS_{k}^\pi\stackrel{p}{\to}$ some strictly positive constant since 
\begin{eqnarray}
 \left|\tilde{F}^\pi-F^\pi\right|
%&=&\left|\frac{\tilde{SS}_T-\sum\limits_{k\in[K]}\tilde{SS}^\pi_k}{\sum\limits_{k\in[K]}\tilde{SS}^\pi_k}-\frac{SS_T-\sum\limits_{k\in[K]}SS^\pi_k}{\sum\limits_{k\in[K]}SS^\pi_k}\right|\notag\\
%&=&\left|\frac{\tilde{SS}_T}{\sum\limits_{k\in[K]}\tilde{SS}^\pi_k}-\frac{SS_T}{\sum\limits_{k\in[K]}SS^\pi_k}\right|\notag\\
%&=&\left|\frac{SS_T}{\sum\limits_{k\in[K]}SS^\pi_k}-\frac{\tilde{SS}_T}{\sum\limits_{k\in[K]}SS^\pi_k}+\frac{\tilde{SS}_T}{\sum\limits_{k\in[K]}SS^\pi_k}-\frac{\tilde{SS}_T}{\sum\limits_{k\in[K]}\tilde{SS}^\pi_k}\right|\notag\\
%&\leq&\left|\frac{SS_T-\tilde{SS}_T}{\sum\limits_{k\in[K]}SS^\pi_k}\right|+\left|\frac{\tilde{SS}_T}{\sum\limits_{k\in[K]}\tilde{SS}^\pi_k}\right|\left|\frac{\sum\limits_{k\in[K]}(\tilde{SS}^\pi_k-SS^\pi_k)}{\sum\limits_{k\in[K]}SS^\pi_k}\right|\notag\\
\leq\left|\frac{SS_T-\tilde{SS}_T}{\sum\limits_{k\in[K]}SS^\pi_k}\right|+\left|\frac{B^2}{\frac{1}{n-1}\sum\limits_{k\in[K]}\tilde{SS}^\pi_k}\right|\left|\frac{\sum\limits_{k\in[K]}(\tilde{SS}^\pi_k-SS^\pi_k)}{\sum\limits_{k\in[K]}SS^\pi_k}\right|\label{eq:DiffInF^pi-TildeF^pi},
\end{eqnarray}
where $SS^\pi_k$ and $SS^\pi_T$ are defined in \eqref{eq:DefinitionOfFpi}.

\textbf{Step 3(a).}
We first prove $\frac{1}{n_k-1}\left|\tilde{SS}^\pi_{k}-SS_{k}^\pi\right|\stackrel{p}{\to}0$.

To prove it, note that with similar arguments in Step 2(a) we have
\begin{eqnarray*}
E\left\{\left|\frac{\tilde{SS}^\pi_k}{n_k-1}-\frac{SS^\pi_{k}}{n_k-1}\right|\right\}
&\leq&4B\cdot \frac{1}{n_k}\sum\limits_{j\in [n_k]}E\left\{W_1(\tilde{Q}_{\Pi^{j,k}_1}^{(\Pi^{j,k}_2)},Q_{\Pi^{j,k}_1}^{(\Pi^{j,k}_2)})\right\}.
\end{eqnarray*}
Let $n^\pi_{k,k^\prime}$ represent the number of indices exchanged between group $k$ and $k^\prime$ after the specific permutation $\pi$.
Note that $n^\pi_{k,k^\prime}=n^\pi_{k^\prime,k}$ and $\sum\limits_{k^\prime}n^\pi_{k,k^\prime}=n_k$.
Then, with the aid of the notations $\{n^\pi_{k,k^\prime}\}$, it follows that 
\begin{eqnarray*}
\sum\limits_{j\in [n_k]}E\left\{W_1(\tilde{Q}_{\Pi^{j,k}_1}^{(\Pi^{j,k}_2)},Q_{\Pi^{j,k}_1}^{(\Pi^{j,k}_2)})\Big|\pi\right\}
&=&\sum_{k^\prime\in[K]}\sum\limits_{j\in[n^\pi_{k,k^\prime}]}E\left\{W_1(\tilde{Q}_{j}^{(k^\prime)},Q_{j}^{(k^\prime)})\Big|\pi\right\}\\
&=&\sum_{k^\prime\in[K]}n^\pi_{k,k^\prime}E\left\{W_1(\tilde{Q}_{1}^{(k^\prime)},Q_{1}^{(k^\prime)})\right\}
\end{eqnarray*}
and hence 
\begin{eqnarray*}
\sum\limits_{j\in [n_k]}E\left\{\frac{W_1(\tilde{Q}_{\Pi^{j,k}_1}^{(\Pi^{j,k}_2)},Q_{\Pi^{j,k}_1}^{(\Pi^{j,k}_2)})}{n_k}\right\}
=\sum_{k^\prime\in[K]}E\left\{W_1(\tilde{Q}_{1}^{(k^\prime)},Q_{1}^{(k^\prime)})\right\}E\Big\{\frac{n^\pi_{k,k^\prime}}{n_k}\Big\}
=\sum_{k^\prime\in[K]}E\left\{\frac{W_1(\tilde{Q}_{1}^{(k^\prime)},Q_{1}^{(k^\prime)})}{K}\right\}.
\end{eqnarray*}
Therefore, it follows from $E\left\{W_1(\tilde{Q}_{1}^{(k^\prime)},Q_{1}^{(k^\prime)})\right\}\to 0$ for each $k\in[K]$ that $E\left\{\left|\frac{\tilde{SS}^\pi_k}{n_k-1}-\frac{SS^\pi_{k}}{n_k-1}\right|\right\}\to0$.

\textbf{Step 3(b).}
We then prove $\frac{1}{n_k-1}SS_{k}^\pi\stackrel{p}{\to}$ some strictly positive constant.

To prove it, let $X\stackrel{d}{=}Y$ denote that the two random variables $X,Y$ are identically distributed and note that 
\begin{eqnarray}
\frac{SS_{k}^\pi}{n_k-1}
\stackrel{d}{=}\sum\limits_{k^\prime\in[K]}\underbrace{\sum\limits_{j_1,j_2\in[n^\pi_{k,k^\prime}]}\frac{W_1(Q_{j_1}^{(k^\prime)},Q_{j_2}^{(k^\prime)})^2}{n_k(n_k-1)}}_{(a)}
+\sum\limits_{k_1^\prime\neq k_2^\prime\in[K]}\sum\limits_{j_1\in[n^\pi_{k,k_1^\prime}],j_2\in[n^\pi_{k,k_2^\prime}]}\frac{W_1(Q_{j_1}^{(k_1^\prime)},Q_{j_2}^{(k_2^\prime)})^2}{n_k(n_k-1)}.\label{eq:SS_k^pi/(n_k-1)}
%&=&\sum\limits_{k^\prime\in[K]}\frac{(n^\pi_{k,k^\prime}+1)^2}{n_k(n_k-1)}\frac{1}{(n^\pi_{k,k^\prime}+1)^2}\sum\limits_{j_1,j_2\in[n^\pi_{k,k^\prime}]}W_1(Q_{j_1}^{(k^\prime)},Q_{j_2}^{(k^\prime)})^2\\
%& &+\sum\limits_{k_1^\prime\neq k_2^\prime\in[K]}\frac{(n^\pi_{k,k_1^\prime}+1)(n^\pi_{k,k_2^\prime}+1)}{n_k(n_k-1)}\frac{1}{(n^\pi_{k,k_1^\prime}+1)(n^\pi_{k,k_2^\prime}+1)}\sum\limits_{j_1\in[n^\pi_{k,k_1^\prime}],j_2\in[n^\pi_{k,k_2^\prime}]}W_1(Q_{j_1}^{(k_1^\prime)},Q_{j_2}^{(k_2^\prime)})^2.
\end{eqnarray}

(i) We first prove that the variance of (a) converges to $0$. 

Note that the variance of (a) equals to
\begin{eqnarray*}
%\left\{\sum\limits_{j_1,j_2\in[n^\pi_{k,k^\prime}]}\frac{W_1(Q_{j_1}^{(k^\prime)},Q_{j_2}^{(k^\prime)})^2}{n_k(n_k-1)}\right\}
\var\left\{E\left\{\sum\limits_{j_1,j_2\in[n^\pi_{k,k^\prime}]}\frac{W_1(Q_{j_1}^{(k^\prime)},Q_{j_2}^{(k^\prime)})^2}{n_k(n_k-1)}\Big|\pi\right\}\right\}
+E\left\{\var\left\{\sum\limits_{j_1,j_2\in[n^\pi_{k,k^\prime}]}\frac{W_1(Q_{j_1}^{(k^\prime)},Q_{j_2}^{(k^\prime)})^2}{n_k(n_k-1)}\Big|\pi\right\}\right\},
\end{eqnarray*}
where the first term equals to
\[
\var\left\{\frac{n^\pi_{k,k^\prime}(n^\pi_{k,k^\prime}-1)}{n_k(n_k-1)}E\{W_1(Q_{1}^{(k^\prime)},Q_{2}^{(k^\prime)})^2\}\right\}
\leq B^4\cdot\var\left\{\frac{n^\pi_{k,k^\prime}(n^\pi_{k,k^\prime}-1)}{n_k(n_k-1)}\right\}.
\]
For the second term, note that 
\begin{eqnarray*}
& &\var\left\{\sum\limits_{j_1\neq j_2\in[n^\pi_{k,k^\prime}]}W_1(Q_{j_1}^{(k^\prime)},Q_{j_2}^{(k^\prime)})^2~\Big|~\pi\right\}\\
&=&\sum\limits_{j_1\neq j_2\in[n^\pi_{k,k^\prime}]}\sum\limits_{j^\prime_1\neq j^\prime_2\in[n^\pi_{k,k^\prime}]}
\cov\left\{W_1(Q_{j_1}^{(k^\prime)},Q_{j_2}^{(k^\prime)})^2,W_1(Q_{j^\prime_1}^{(k^\prime)},Q_{j^\prime_2}^{(k^\prime)})^2\right\}\\
&\leq&B^4\sum\limits_{j_1\neq j_2\in[n^\pi_{k,k^\prime}]}\sum\limits_{j^\prime_1\neq j^\prime_2\in[n^\pi_{k,k^\prime}]} I\left(\text{At least two of $\{j_1,j_2,j_1^\prime,j_2^\prime\}$ are identical.}\right)\\
&\leq&B^4\cdot 2n^\pi_{k,k^\prime}(n^\pi_{k,k^\prime}-1)^2.
\end{eqnarray*}
Therefore, the variance of (a) is upper bounded by
\begin{eqnarray*}
%& &\var\left\{\frac{1}{n_k(n_k-1)}\sum\limits_{j_1,j_2\in[n^\pi_{k,k^\prime}]}W_1(Q_{j_1}^{(k^\prime)},Q_{j_2}^{(k^\prime)})^2\right\}\\
%&\leq&
\frac{2B^4}{\left(n_k(n_k-1)\right)^2}\Big[\var\{n^\pi_{k,k^\prime}(n^\pi_{k,k^\prime}-1)\}+E\left\{(n^\pi_{k,k^\prime})^3\right\}\Big]
\to0,
\end{eqnarray*}
where the convergence follows from $n^\pi_{k,k^\prime}/n_k\stackrel{p}{\to}1/K$, $n^\pi_{k,k^\prime}/n_k\leq 1$, the dominated convergence theorem such that
\[
E\Big\{(n^\pi_{k,k^\prime})^3\Big\}\Big/\left(n_k(n_k-1)\right)^2 \to E\{0\}=0,
\]
and
\begin{eqnarray*}
\frac{\var\{n^\pi_{k,k^\prime}(n^\pi_{k,k^\prime}-1)\}}{\left(n_k(n_k-1)\right)^2}
=E\left\{\left(\frac{n^\pi_{k,k^\prime}(n^\pi_{k,k^\prime}-1)}{n_k(n_k-1)}\right)^2\right\}
-\left(E\left\{\frac{n^\pi_{k,k^\prime}(n^\pi_{k,k^\prime}-1)}{n_k(n_k-1)}\right\}\right)^2
\to\frac{1}{K^4}-\frac{1}{K^4}
=0.
\end{eqnarray*}
To prove $n^\pi_{k,k^\prime}/n_k\stackrel{p}{\to}1/K$, note that $E\{n^\pi_{k,k^\prime}/n_k\}=1/K$ and $\var\{n^\pi_{k,k^\prime}/n_k\}=\var\{n^\pi_{1,1}/n_1\}=\frac{n_1^2(n-n_1)^2}{n_1^2n^2(n-1)}\to0$ (cf. \citet[page 460]{chapuy2007random}).

(ii)
We then prove that the expectation of (a) converges to $E\left\{W_1(Q_{1}^{(k^\prime)},Q_{2}^{(k^\prime)})^2\right\}/K^2$. For this, we have
\begin{align*}
E\{(a)\}
&=E\left\{\frac{1}{n_k(n_k-1)}\sum\limits_{j_1,j_2\in[n^\pi_{k,k^\prime}]}E\left\{W_1(Q_{j_1}^{(k^\prime)},Q_{j_2}^{(k^\prime)})^2|\pi\right\}\right\}\\
&=E\left\{\frac{n^\pi_{k,k^\prime}(n^\pi_{k,k^\prime}-1)}{n_k(n_k-1)}E\left\{W_1(Q_{1}^{(k^\prime)},Q_{2}^{(k^\prime)})^2\right\}\right\},
\end{align*}
which converges to $E\left\{W_1(Q_{1}^{(k^\prime)},Q_{2}^{(k^\prime)})^2\right\}/K^2$ by the dominated convergence theorem.

(iii)
Built on (i) and (ii), it follows from Markov's inequality that $(a)\stackrel{p}{\to}E\left\{W_1(Q_{1}^{(k^\prime)},Q_{2}^{(k^\prime)})^2\right\}/K^2$.
%\begin{eqnarray*}
%& &\frac{1}{n_k(n_k-1)}\sum\limits_{j_1,j_2\in[n^\pi_{k,k^\prime}]}W_1(Q_{j_1}^{(k^\prime)},Q_{j_2}^{(k^\prime)})^2\\
%&\stackrel{p}{\to}&
%\lim\limits_{n\to\infty}E\left\{\frac{1}{n_k(n_k-1)}\sum\limits_{j_1,j_2\in[n^\pi_{k,k^\prime}]}W_1(Q_{j_1}^{(k^\prime)},Q_{j_2}^{(k^\prime)})^2\right\}\\
%&=&\lim\limits_{n\to\infty}E\left\{\frac{1}{n_k(n_k-1)}\sum\limits_{j_1,j_2\in[n^\pi_{k,k^\prime}]}E\left\{W_1(Q_{j_1}^{(k^\prime)},Q_{j_2}^{(k^\prime)})^2|\pi\right\}\right\}\\
%&=&\frac{E\left\{W_1(Q_{1}^{(k^\prime)},Q_{2}^{(k^\prime)})^2\right\}}{K^2}
%\end{eqnarray*}
%since for a sequence random variables $X_n$ with mean $\mu_n$ and $\lim\limits_{n\to\infty}\mu_n=\mu$
%\begin{eqnarray*}
%P(|X_n-\mu|>\epsilon)
%&\leq&P(|X_n-\mu_n|+|\mu_n-\mu|>\epsilon)\\
%&\leq&\frac{\var\{X_n\}+|\mu_n-\mu|^2+|\mu_n-\mu|E\{|X_n-\mu_n|\}}{\epsilon^2}.
%\end{eqnarray*}
Analogously, we can prove that the second term in \eqref{eq:SS_k^pi/(n_k-1)} converges to a constant in probability, i.e.,
\[
\sum\limits_{k_1^\prime\neq k_2^\prime\in[K]}\frac{1}{n_k(n_k-1)}\sum\limits_{j_1\in[n^\pi_{k,k_1^\prime}],j_2\in[n^\pi_{k,k_2^\prime}]}W_1(Q_{j_1}^{(k_1^\prime)},Q_{j_2}^{(k_2^\prime)})^2\stackrel{p}{\to}\sum\limits_{k_1^\prime\neq k_2^\prime\in[K]}\frac{E\{W_1(Q_{1}^{(k_1^\prime)},Q_{1}^{(k_2^\prime)})^2\}}{K^2}.
\]
As a result, we have 
\begin{eqnarray}\label{eq:LimitOfSS_k^pi}
\frac{1}{n_k-1}SS_{k}^\pi
\stackrel{p}{\to}\sum_{k^\prime\in[K]}\frac{E\left\{W_1(Q_{1}^{(k^\prime)},Q_{2}^{(k^\prime)})^2\right\}}{K^2}+\sum\limits_{k_1^\prime\neq k_2^\prime\in[K]}\frac{E\{W_1(Q_{1}^{(k_1^\prime)},Q_{1}^{(k_2^\prime)})^2\}}{K^2}>0.
\end{eqnarray}
\medskip

\par\noindent
\textbf{Step 4.}
In this step we prove that $F-F^\pi\stackrel{p}{\to}$ some strictly positive constant, where the probability here refers to randomness from all three layers and permutations.

To prove it,
note that 
\[
F-F^\pi
=\frac{SS_T}{\sum\limits_{k\in[K]}SS_k}-\frac{SS_T}{\sum\limits_{k\in[K]}SS_k^\pi}
=\frac{1}{n-1}SS_T\left(\frac{1}{\frac{1}{n-1}\sum\limits_{k\in[K]}SS_k}-\frac{1}{\frac{1}{n-1}\sum\limits_{k\in[K]}SS_k^\pi}\right).
\]

\textbf{Step 4(a).}
We first prove that $\frac{1}{n-1}SS_T\stackrel{p}{\to}$ some strictly positive constant. 
Note that 
\begin{eqnarray*}
\frac{1}{n-1}SS_T
&=&\frac{1}{n(n-1)}\sum\limits_{k_1,k_2 \in [K]}\sum\limits_{j_1\in [n_{k_1}],j_2\in [n_{k_2}]}W_1\Big(Q_{j_1}^{(k_1)},Q_{j_2}^{(k_2)}\Big)^2\\
&=&\sum\limits_{k_1,k_2 \in [K]}\frac{n_{k_1}n_{k_2}}{n(n-1)}\frac{1}{n_{k_1}n_{k_2}}\sum\limits_{j_1\in [n_{k_1}],j_2\in [n_{k_2}]}W_1\Big(Q_{j_1}^{(k_1)},Q_{j_2}^{(k_2)}\Big)^2\\
&\stackrel{p}{\to}&\frac{1}{K^2}\sum\limits_{k_1,k_2 \in [K]}E\Big\{W_1(Q_{1}^{(k_1)},Q_{2}^{(k_2)})^2\Big\}>0,
\end{eqnarray*}
which follows from the strong law of large numbers for U-statistics \citet[Chapter 5.4]{Serfling1980}.

\textbf{Step 4(b).}
Build on Step 3(a), \eqref{eq:LimitOfSS_k}, and \eqref{eq:LimitOfSS_k^pi}, it suffices to prove that $\sum\limits_{k\in[K]}\frac{1}{n-1}(SS_{k}-SS_{k}^\pi)\stackrel{p}{\to}$ some strictly negative constant.
It follows from \eqref{eq:LimitOfSS_k} and \eqref{eq:LimitOfSS_k^pi}
that
\begin{eqnarray*}
& &\sum\limits_{k\in[K]}\frac{1}{n_k-1}(SS_{k}-SS_{k}^\pi)\\
&\stackrel{p}{\to}&\left(1-\frac{1}{K}\right)\sum\limits_{k\in[K]}E\{W_1(Q^{(k)}_1,Q^{(k)}_2)^2\}-\sum\limits_{k_1^\prime\neq k_2^\prime\in[K]}\frac{E\{W_1(Q_{1}^{(k_1^\prime)},Q_{1}^{(k_2^\prime)})^2\}}{K}\\
&=&(K-1)\left(\frac{1}{K}\sum\limits_{k\in[K]}E\{W_1(Q^{(k)}_1,Q^{(k)}_2)^2\}-\sum\limits_{k_1^\prime\neq k_2^\prime\in[K]}\frac{E\{W_1(Q_{1}^{(k_1^\prime)},Q_{1}^{(k_2^\prime)})^2\}}{K(K-1)}\right)
<0,
\end{eqnarray*}
where the last inequality follows from $H_1$ and hence
\[
\sum\limits_{k\in[K]}\frac{1}{n-1}(SS_{k}-SS_{k}^\pi)\stackrel{p}{\to}
\frac{K-1}{K}\left(\frac{1}{K}\sum\limits_{k\in[K]}E\{W_1(Q^{(k)}_1,Q^{(k)}_2)^2\}-\sum\limits_{k_1^\prime\neq k_2^\prime\in[K]}\frac{E\{W_1(Q_{1}^{(k_1^\prime)},Q_{1}^{(k_2^\prime)})^2\}}{K(K-1)}\right),
\]
which is a strictly negative constant.
\medskip

\par\noindent
\textbf{Step 5.}
Building on the previous three steps, we have established that $\tilde{F}-\tilde F^\pi\stackrel{p}{\to}C$, where $C$ is a strictly positive constant. Accordingly, we have
$
\lim_{n\to\infty}P(\tilde F>\tilde F^\pi~|~H_1)=1.
$
\end{proof}
\begin{proof}[Proof of Theorem~\ref{theorem:TestConsistency}(b)]
Noting Theorem~\ref{theorem:ConsistencyinMixingImpliesConsistencyinMixture} and Proposition~\ref{Prop:W1BetweenTwoMixture}, this is analogous to the proof of Theorem~\ref{theorem:TestConsistency}(a) and hence omitted.
\end{proof}

\subsection{Proof of theorems in Section \ref{sec:algorithm}}
\begin{proof}[Proof of Theorem~\ref{thm:ConvergenceOfAlgorithm}]
 We focus on VDM. After understanding the proof of VDM, arguments for VEM and ISDM are straight-forward and hence omitted.
 
 The proof of VDM largely remains the same as \cite{bohning1982convergence} and we include it here only for the completeness of this paper.
 This proof consists of four steps. 
 In the first step, we prove the existence of $\hat Q$.
 In the second step, we prove an important property \eqref{eq:ImportantPropertyOfConvergence} for proving $\Phi(G_L)\to\Phi(\hat Q)$ as $L\to\infty$ if this algorithm doesn't stop.
 In the third step, we complete the proof in the case that this algorithm doesn't stop.
 In the fourth step, we complete the proof in the case that algorithm does stop at some $L$.
 %In the fifth step, we prove that $\hat Q$ is supported at most $N$ points.
% In the fifth step, we prove the uniqueness of $\hat Q$.
 
 \noindent\textbf{Step 1.}
 This step gives a proof of the existence of $\hat Q$, which is an analogue of \citet[Section 3.1]{simar1976maximum}.
 
 Let $\bar{\mathbb{Q}}_B$ be the set of all sub-distributions (total mass less or equal to 1) on $[0,B]$ and let $\bar \Gamma_N:=\{{\bm \mu}(\bar G)|\bar G\in\bar{\mathbb{Q}}_B\}$, where $\bar G\mapsto{\bm \mu}(\bar G):=\left(\mu_1(\bar G),\ldots,\mu_N(\bar G)\right)$ and
 \[
\bar G\mapsto\mu_i(\bar G):=\int_0^B\exp(-\lambda r_i)(\lambda r_i)^{X_i}{\sf d}\bar G(\lambda)\text{ for }i\in[N]\text{ and }\bar G\in\bar{\mathbb{Q}}_B.
\]
We claim that $\bar \Gamma_N$ is convex and compact. Convexity is obvious. 
Compactness follows from the weak compactness of $\bar{\mathbb{Q}}_B$, boundedness and continuity of $\lambda\mapsto\exp(-\lambda r_i)(\lambda r_i)^{X_i}$ on $[0,B]$, and Helly–Bray theorem, see Simar's arguments for further details.
It further follows from the concavity of $(\mu_1,\ldots,\mu_N)\mapsto\Psi(\mu_1,\ldots,\mu_N):=\frac{1}{N}\sum_{i=1}^{N}\log \mu_i$ on $\bar \Gamma_N$ that there exists a unique maximizer $(\hat \mu_1,\ldots,\hat\mu_N)$ of $\Psi$ on $\bar \Gamma_N$.
By the construction of $\bar \Gamma_N$, there exists a sub-distribution $\bar G_{max}\in\bar{\mathbb{Q}}_B$ such that $(\hat \mu_1,\ldots,\hat\mu_N)=(\mu_1(\bar G_{max}),\ldots,\mu_N(\bar G_{max}))$. 
The proof of that $\bar G_{max}$ is actually a distribution follows from exactly same arguments by \citet[Page 1202]{simar1976maximum}.
Now we complete the proof of the existence of $\hat Q$.

\noindent\textbf{Step 2.}
Let $\mathbb{Q}_B$ be the set of all distributions on $[0,B]$ and let $\delta_\lambda$ be the deterministic distribution at $\lambda\in[0,B]$.
Since we have $\Phi(G)>-\infty$ for each $G\in\mathbb{Q}_B\backslash\{\delta_0\}$, we can define the following directional directive 
\[
\Phi^\prime(G,\delta_\lambda):=\lim\limits_{\epsilon\to0^+}\epsilon^{-1}\Big\{\Phi\{(1-\epsilon)G\oplus\epsilon\delta_\lambda\}-\Phi(G)\Big\}
=\frac{1}{N}\sum_{i\in[N]}\frac{e^{-\lambda r_i}(\lambda r_i)^{X_i}}{\mu_i(G)}-1
\]
for $G\in\mathbb{Q}_B\backslash\{\delta_0\}$ and $\lambda\in[0,B]$.

In the second step, we prove that for all $\nu>0,\alpha\in\mathbb{R}$ there exists $\epsilon_0=\epsilon_0(\nu,\alpha)\in(0,1)$ such that 
\begin{eqnarray}\label{eq:ImportantPropertyOfConvergence}
\Phi^\prime(G,\delta_\lambda)\geq\nu\text{ implies }\Phi\{(1-\epsilon)G\oplus\epsilon\delta_\lambda\}-\Phi(G)\geq \epsilon\nu/2
\end{eqnarray}
for all $\epsilon\in[0,\epsilon_0(\nu,\alpha)]$, all $G\in\Delta_\alpha:=\{G\in\mathbb{Q}_B|\Phi(G)\geq \alpha\}$, and all $\lambda\in[0,B]$.

$\Psi,{\bm \mu},\bar{\mathbb{Q}}_B$ and $\bar \Gamma_N$ are defined in Step 1.
Since $\Psi$ is continuously differentiable on $\bar \Gamma_N\backslash\{0\}$, it follows from the mean value theorem that
\begin{align*}
\Phi\{(1-\epsilon)G\oplus\epsilon\delta_\lambda\}-\Phi(G)
&=\Psi\{(1-\epsilon){\bm \mu}(G)+\epsilon{\bm \mu}(\delta_\lambda)\}-\Psi({\bm \mu}(G))\\
&=\epsilon \nabla\Psi\left\{(1-\xi\epsilon){\bm \mu}(G)+\xi\epsilon{\bm \mu}(\delta_\lambda)\right\}^T{\bm \mu}(\delta_\lambda)-1,
\end{align*}
where $\nabla\Psi$ denotes the gradient of $\Psi$, for some $\xi\in[0,1]$.
Therefore, 
\begin{align*}
\Phi\{(1-\epsilon)G\oplus\epsilon\delta_\lambda\}-\Phi(G)-\epsilon \Phi^\prime(G,\delta_\lambda)
=\epsilon\Big\{\nabla\Psi\left\{(1-\xi\epsilon){\bm \mu}(G)+\xi\epsilon{\bm \mu}(\delta_\lambda)\right\}-\nabla\Psi({\bm \mu}(G))\Big\}^T{\bm \mu}(\delta_\lambda).
\end{align*}
Define $\mathcal{L}_{\alpha^\prime}:=\{{\bm \mu}\in\bar \Gamma_N:\Psi\geq \alpha^\prime\}$ for $\alpha^\prime\in\mathbb{R}$.
Note that $\mathcal{L}_{\alpha^\prime}$ is a compact set, on which $\nabla \Psi$ is uniformly continuous, for $\alpha^\prime=\alpha-1$.
Since ${\bm \mu}(G)\in \mathcal{L}_{\alpha}$, we can find a sufficiently small $\epsilon_0=\epsilon_0(\alpha,\nu)$ such that for all $\epsilon\in[0,\epsilon_0]$ we have
$
(1-\xi\epsilon){\bm \mu}(G)+\xi\epsilon{\bm \mu}(\delta_\lambda)\in \mathcal{L}_{\alpha-1}
$
and 
\[
\|\nabla\Psi\left\{(1-\xi\epsilon){\bm \mu}(G)+\xi\epsilon{\bm \mu}(\delta_\lambda)\right\}-\nabla\Psi({\bm \mu}(G))\|\leq \nu/(2S),
\]
where $\|\cdot\|$ denotes the Euclidean norm and $S:=\sup_{{\bm \mu}\in\bar\Gamma_N}\|{\bm \mu}\|$.
Therefore we have 
\begin{eqnarray}\label{eq:ReadyToContradiction}
|\Phi\{(1-\epsilon)G\oplus\epsilon\delta_\lambda\}-\Phi(G)-\epsilon \Phi^\prime(G,\delta_\lambda)|\leq \epsilon\nu/(2S)\cdot S=\epsilon\nu/2.
\end{eqnarray}
If the claim doesn't hold, i.e.
$
\Phi\{(1-\epsilon)G\oplus\epsilon\delta_\lambda\}-\Phi(G)<\epsilon\nu/2,
$
it follows from $-\Phi^\prime(G,\delta_\lambda)\leq-\nu$
that 
\[
\Phi\{(1-\epsilon)G\oplus\epsilon\delta_\lambda\}-\Phi(G)-\epsilon \Phi^\prime(G,\delta_\lambda)<\epsilon\nu/2-\epsilon\nu=-\epsilon\nu/2,
\]
which contradicts \eqref{eq:ReadyToContradiction}.

\noindent\textbf{Step 3.}
In this step, we assume that VDM doesn't stop and we have $\Phi(G_L)\to\Phi(\hat Q)$ as $L\to\infty$.

Note that $\Phi(G_L)$ is monotonically increasing and suppose $\lim_{L\to\infty}\Phi(G_L)=\Phi^+$.
If $\Phi^+<\Phi(\hat Q)$, then we have
\[
\Phi^\prime(G_L,\delta_{\lambda_{\text{max}}})
=\max_{\lambda\in[0,B]}\Phi^\prime(G_L,\delta_\lambda)
\geq\Phi^\prime(G_L,\hat Q)
\geq\Phi(\hat Q)-\Phi(G_L)
\geq\Phi(\hat Q)-\Phi^+\geq \nu>0,
\]
for some $\nu>0$, where the first inequality follows from \citet[Page 1204]{simar1976maximum} and the second inequality follows from the concavity of $\epsilon\mapsto \Phi((1-\epsilon)G_L+\epsilon\hat Q)$ with $\epsilon\in[0,1]$.
Then it follows from the claim in Step 2 that 
\[
\Phi(G_{L+1})-\Phi(G_{L})\geq \Phi\{(1-\epsilon_0)G_L\oplus\epsilon_0\delta_{\lambda_{\text{max}}}\}-\Phi(G_{L})\geq \nu\epsilon_0/2>0,
\]
which contradicts $\lim_{L\to\infty}\Phi(G_L)=\Phi^+$.

\noindent\textbf{Step 4.}
In this step, we prove that if VDM stops at some $L$, then $\Phi(G_L)=\Phi(\hat Q)$.

If $\Phi(G_L)<\Phi(\hat Q)$, we then have
\[
\max_{\lambda\in[0,B]}\Phi^\prime(G_L,\delta_\lambda)\geq \Phi(\hat Q)-\Phi(G_L)>0,
\]
which contradicts the criterion for stopping this algorithm.
 \end{proof}

\subsection{Proof of theorems in Section~\ref{sec:optimality}}
\begin{proof}[Proof of Theorem~\ref{theorem:UpperBoundOfMixing}(a)]
%Without loss of generality, it is assumed $\kappa=1$.
%Otherwise, $B$ can be replaced by $B/\kappa$ in the following arguments.
This proof consists of two steps, similar to Section 4 in~\cite{vinayak2019maximum}. In the first step, we prove that $W_1(Q,\hat Q)$ can be upper bounded by three parts, see \eqref{Formula1}.
In the second step, we upper bound these three parts separately with the help of Lemma \ref{Lemma:FirstPartOfUpperBound}, Lemma \ref{Lemma:SecondPartOfUpperBound} and Proposition~\ref{Proposition:BoundOnf1} and complete this proof.
\medskip

\par\noindent
\textbf{Step 1.}
For $x=0,1,\ldots$, let $x\mapsto h^{obs}_Q(x)$ denote the sample proportion, i.e.
$
h^{obs}_Q(x):=\sum_{i=1}^NI(x=X_i)/N,
%\label{Formula:DefinitionOfh^obs}
$
where $I(\cdot)$ is an indicator function.
Recall that 
$
W_1(Q,\hat{Q})
=\sup_{\ell\in {\rm Lip}_1}\int_0^{B}\ell{\sf d}(Q-\hat{Q}),
$
where ${\rm Lip}_1$ represents all $1$-Lipschitz functions on $[0,B]$ and $\ell$ is one of those $1$-Lipschitz functions.
Without loss of generality, it is assumed that $\ell(0)=0$.
The idea is to use the following function 
\[
\lambda\mapsto\hat{\ell}(\lambda):=\sum_{x=0}^\infty b_x\frac{\lambda^xe^{-\lambda}}{x!}, \text{ where }b_x\in\mathbb{R}\text{ and }\lambda\in[0,B],
\]
to approximate the $1$-Lipschitz function $\lambda\mapsto \ell(\lambda)$
and upper bound $W_1(Q,\hat Q)$ by three parts.
It follows from a straight-forward algebra that
\begin{align*}
\int_0^{B}\ell(\lambda){\sf d}\left(Q(\lambda)-\hat{Q}(\lambda)\right)
=&\int_0^{B}\left(\ell(\lambda)-\hat{\ell}(\lambda)\right){\sf d}\left(Q(\lambda)-\hat{Q}(\lambda)\right)
+\int_0^{B}\sum_{x=0}^\infty b_x\frac{\lambda^xe^{-\lambda}}{x!}{\sf d}\left(Q(\lambda)-\hat{Q}(\lambda)\right)\\
%=&\int_0^{B}\left(\ell(\lambda)-\hat{\ell}(\lambda)\right){\sf d}\left(Q(\lambda)-\hat{Q}(\lambda)\right)+\sum_{x=0}^\infty b_x\left(h_Q(x)-h_{\hat{Q}}(x)\right)\\
%\leq&~2\left\|\ell-\hat{\ell}\right\|_{\infty}+\sum_{x=0}^\infty b_x\left(h_Q(x)-h_{\hat{Q}}(x)\right)\\
\leq&~2\left\|\ell-\hat{\ell}\right\|_{\infty}+\sum_{x=0}^\infty b_x\left(h_Q(x)-h^{obs}_Q(x)\right)+\sum_{x=0}^\infty b_x\left(h^{obs}_Q(x)-h_{\hat{Q}}(x)\right),
\end{align*}
where $\|\ell-\hat{\ell}\|_{\infty}:=\sup_{\lambda\in[0,B]}|\ell(\lambda)-\hat{\ell}(\lambda)|$, 
and hence 
\begin{eqnarray}\label{Formula1}
W_1(Q,\hat{Q})
\leq\sup_{\ell\in \text{Lip}(1)}\left(2\left\|\ell-\hat{\ell}\right\|_{\infty}
+\sum_{x=0}^\infty b_x\left(h_{Q}(x)-h^{obs}_Q(x)\right)
+\sum_{x=0}^\infty b_x\left(h^{obs}_Q(x)-h_{\hat{Q}}(x)\right)\right).
\end{eqnarray}

\noindent
\textbf{Step 2.}
It follows from Lemma \ref{Lemma:FirstPartOfUpperBound} and Lemma \ref{Lemma:SecondPartOfUpperBound} that for an arbitrary $\delta\in(0,1/2)$ and an arbitrary $\epsilon\in(0,1)$ there exists constants $N(\epsilon)$ and $C(\epsilon)$ depending only on $\epsilon$ such that the sum of the last two terms in (\ref{Formula1}) is upper bounded by 
$C(\epsilon)\max_{x\geq 0}|b_x|\sqrt{\frac{B\vee 1}{N^{1-\epsilon}\delta^{1+\epsilon}}}$
for all $N\geq N(\epsilon)$ with probability at least $1-2\delta$.

\textbf{Step 2(a).} Suppose $c_0\leq\min\left\{\sqrt{e}c/C_2,0.001\right\}$, where $c=1/8$ and $C_2$ is a universal constant specified later.
It follows from Proposition~\ref{Proposition:BoundOnf1}(a) that 
 $\ell(\lambda)$ can be approximated by 
$\hat{\ell}(\lambda) = \sum_{x = 0}^{k} b_x\frac{\lambda^xe^{-\lambda}}{x!}$ with an uniform approximation error of
 $C_1 B/k$ with 
 $\max_x |b_x| \leq C_1\left(\sqrt{e}k/B\right)^{k}$ for $k\geq 4(B\vee 1)$, where $C_1>1$ is a universal constant.
Hence we have
\[
W_1(Q,\hat{Q})
\leq
2C_1\frac{B}{k}+C_1C(\epsilon)\left(\frac{\sqrt{e}k}{B}\right)^k\sqrt{\frac{B\vee 1}{N^{1-\epsilon}}\frac{1}{\delta^{1+\epsilon}}},
\]
for $N\geq N(\epsilon)$ and $k\geq 4(B\vee 1)$ with probability at least $1-2\delta$.
Taking $k=k(N,B)$ satisfying $\left(\sqrt{e}k/B\right)^k=N^{c}$ for $c=1/8$, it follows that
\begin{eqnarray}\label{eq:W_1<=B/k}
W_1(Q,\hat{Q})
\leq
%2C_1\frac{B}{k}+C_1C(\epsilon)N^c\sqrt{\frac{B}{N^{1-\epsilon}}\frac{1}{\delta^{1+\epsilon}}}
%=
2C_1B/k+C_1C(\epsilon)N^{c+\epsilon/2-1/2}\sqrt{B/\delta^{1+\epsilon}}.
\end{eqnarray}
To verify $k(N,B)\geq 4(B\vee 1)$, note that 
$
\left(\sqrt{e}k/B\right)^k=N^{c}
$
is equivalent to 
\[
%(2ek/B)^{2ek/B}=N^{2ec/B},\text{ or }
\log\left(\sqrt{e}k/B\right)\exp\{\log\left(\sqrt{e}k/B\right)\}=(\sqrt{e}c\log N)/B.
\]
It further follows from $(\sqrt{e}c\log N)/B>0$ that
$k(N,B)$, as the solution of $(\sqrt{e}k/B)^k=N^{c}$, can be written using the Lambert $W$ function, i.e. $k(N,B)=\frac{B}{\sqrt{e}}\exp\left(W\left(\frac{\sqrt{e}c\log N}{B}\right)\right)$, where $W$ is the Lambert W function.
It follows from the expansion of $W$ (see Wiki of Lambert W function),
\[
W(x)=\log x-\log\log x+o(1), \text{ as }x\rightarrow\infty,
\]
that there exists a universal constant $C_2>0$ such that 
\[
\exp(W(x))\geq \frac{1}{2}\frac{x}{\log x}, \text{ for }x\geq C_2.
\]
It then follows from that $B\leq c_0\log N$ and $c_0\leq\sqrt{e}c/C_2$ that 
\[
\frac{\sqrt{e}c\log N}{B}
\geq \frac{\sqrt{e}c\log N}{c_0\log N}
%\geq \frac{\sqrt{e}c}{\frac{\sqrt{e}c}{C_2}}
\geq C_2
\]
and hence
\begin{eqnarray}
k(N,B)=\frac{B}{\sqrt{e}}\exp\left(W\left(\frac{\sqrt{e}c\log N}{B}\right)\right)
\geq\frac{B}{2\sqrt{e}}\frac{\frac{\sqrt{e}c\log N}{B}}{\log \frac{\sqrt{e}c\log N}{B}}
%=\frac{c}{2}\frac{\log N}{\log \frac{\log N}{B}+\log\frac{\sqrt{e}c}{2}}
\geq\frac{c}{2}\frac{\log N}{\log \frac{\log N}{B}}.\label{SolutionOfk(N,B)}
\end{eqnarray} 
It further follows from $B\leq c_0\log N$ with $c_0\leq 0.001$ that 
\[
\frac{k(N,B)}{B}
\geq \frac{c}{2}\frac{\log N}{B}/\log \frac{\log N}{B}
\geq \frac{1}{16}\frac{1000}{\log 1000}
\geq4.
\]
If $\frac{c}{2}\frac{\log N}{\log \frac{\log N}{B}}\geq 4$ doesn't hold, then $E\{W_1(Q,\hat Q)\}\leq B\leq \frac{64 B}{\log N}\log\Big(\frac{\log N}{B}\vee e\Big)$ and hence Theorem~\ref{theorem:UpperBoundOfMixing}(a) is trivial.
Therefore without loss of generality we assume that $\frac{c}{2}\frac{\log N}{\log \frac{\log N}{B}}\geq 4$ and hence $k(N,B)\geq 4$.
As a consequence, we have $k(N,B)\geq 4(B\vee 1)$.

Combining \eqref{eq:W_1<=B/k} with \eqref{SolutionOfk(N,B)} and letting $\epsilon=1/4$, we have
%\begin{eqnarray*}
%W_1(Q,\hat{Q})
%&\leq&\frac{4C_1}{c}\frac{B\log \frac{\log N}{B}}{\log N}+C_1C(\epsilon)N^{c+\epsilon/2-1/2}\sqrt{\frac{B}{\delta^{1+\epsilon}}},
%\end{eqnarray*}
%with probability at least $1-2\delta$.
%Let $c=1/8,\epsilon=1/4$, it follows that 
\[
W_1(Q,\hat{Q})
\leq32C_1\frac{B\log \frac{\log N}{B}}{\log N}+C_1C(\epsilon)|_{\epsilon=1/4}\cdot N^{-1/4}\sqrt{\frac{B\vee 1}{\delta^{1+\epsilon}}},
\]
where $C(\epsilon)|_{\epsilon=1/4}$ means the value of the function $\epsilon\mapsto C(\epsilon)$ at $1/4$.
Therefore, for an arbitrary $\delta\in(0,1/2)$, there exists a universal constant $C_3$ such that for sufficiently large $N$ we have
\[
W_1(Q,\hat{Q})
\leq C_3\frac{B}{\log N}\left(\log \frac{\log N}{B}\right)\frac{1}{\delta^{5/8}},
\]
with probability at least $1-2\delta$.
Therefore, for sufficiently large $N$ we have
\[
E\{W_1(Q,\hat{Q})\}\leq 5C_3\frac{B}{\log N}\log \frac{\log N}{B}\leq 5C_3\frac{B}{\log N}\log\left(\frac{\log N}{B}\vee e\right).
\]
%since a random variable $Y$ satisfying $P\left(Y\leq \frac{1}{\delta^{5/8}}\right)$ with probability at least $1-2\delta$ for all $\delta>0$ has finite expectation due to
%\begin{eqnarray*}
%\mathbb{E}Y&=&\int_{0}^\infty P(Y>y)dy\\
%&=&\int_{0}^1P(Y>y)dy+\int_{1}^\infty P(Y>y)dy\\
%&\leq&1+ \int_{1}^\infty 2y^{-8/5}dy<5<\infty.
%\end{eqnarray*}

\textbf{Step 2(b).} Suppose $c_0>\min\left\{\frac{\sqrt{e}c}{C_2},0.001\right\}$. Then for $B\in[\min\left\{\sqrt{e}c/C_2,0.001\right\}\log N,c_0\log N]$, it follows from Theorem~\ref{theorem:UpperBoundOfMixing}(b) that 
$E\{W_1(Q,\hat{Q})\}\leq C_4\sqrt{B/\log N}\leq C_4\sqrt{c_0}$, where $C_4$ is a universal constant.
On the other hand, in this case $\frac{B}{\log N}\log\left(\frac{\log N}{B}\vee e\right)\geq \min\left\{\sqrt{e}c/C_2,0.001\right\}$ and hence
\[
E\{W_1(Q,\hat{Q})\}\leq \max\left\{5C_3,\frac{C_4\sqrt{c_0}}{\min\left\{\sqrt{e}c/C_2,0.001\right\}}\right\}\frac{B}{\log N}\log\left(\frac{\log N}{B}\vee e\right)
\]
holds for all $B\leq c_0\log N$.
\end{proof}

\begin{proof}[Proof of Theorem~\ref{theorem:UpperBoundOfMixing}(b)]
Since $B\geq c_0\log N$, we have $B\geq 1$ for sufficiently large $N$.
It follows from Step 1 in the proof of Theorem~\ref{theorem:UpperBoundOfMixing}(a), Lemma \ref{Lemma:FirstPartOfUpperBound} and Lemma \ref{Lemma:SecondPartOfUpperBound} that for an arbitrary $\delta\in(0,1/2)$ and an arbitrary $\epsilon\in(0,1)$ there exist constants $N(\epsilon)$ and $C(\epsilon)$ depending only on $\epsilon$ such that the sum of the last two terms in \eqref{Formula1} is upper bounded by 
$C(\epsilon)\max_{x\geq 0}|b_x|\sqrt{\frac{B}{N^{1-\epsilon}\delta^{1+\epsilon}}}$
for all $N\geq N(\epsilon)$ with probability at least $1-2\delta$.

If $c_0\geq 100$, it follows further from Proposition~\ref{Proposition:BoundOnf1}(b) that for sufficiently small $\epsilon$ there exists a constant $C_1=C_1(\epsilon)$ such that 
\begin{eqnarray*}
W_1(Q,\hat{Q})
&\leq&
C_1\left(\sqrt{\frac{B}{\log N}}+B^{3/2}N^{-1/2+2\epsilon} \sqrt{\frac{1}{\delta^{1+\epsilon}}}\right),
\end{eqnarray*}
with probability at least $1-2\delta$.
Since $B^{3}\leq C_0^3N^{1-3\epsilon_0}$, then it follows from choosing $\epsilon=(\epsilon_0/2)\wedge 0.01$ that there exists a constant $C_2=C_2(\epsilon_0)$ such that 
\[
W_1(Q,\hat{Q})\leq C_2\sqrt{\frac{B}{\log N}\frac{1}{\delta^{1+\epsilon}}}\text{~~and hence~~}E\{W_1(Q,\hat{Q})\}\leq C_3\sqrt{\frac{B}{\log N}},\text{ }
\]
where $C_3=C_3(\epsilon_0)$ is a constant.

If $c_0<100$, a $1$-Lipschitz function on $[0,B]$ can also be viewed as a Lipschitz function on $[0,100\log N]$ and hence it follows from letting $B=100\log N$ in  Proposition~\ref{Proposition:BoundOnf1}(b) that for sufficiently small $\epsilon$ there exists a constant $C_4=C_4(\epsilon)$ such that with probability $1-2\delta$
\begin{eqnarray*}
W_1(Q,\hat{Q})
\leq
C_4\left(1+\sqrt{B}N^{-1/2+2\epsilon}\log N \cdot \sqrt{\frac{1}{\delta^{1+\epsilon}}}\right)
\leq
C_4\left(1+\sqrt{C_0}N^{-1/3+2\epsilon}\log N \cdot \sqrt{\frac{1}{\delta^{1+\epsilon}}}\right).
\end{eqnarray*}
Therefore it follows from letting $\epsilon=0.01$ that for sufficiently large $N$
\[
E\{W_1(Q,\hat{Q})\}\leq 2C_4\leq \frac{2C_4}{\sqrt{c_0}}\sqrt{\frac{B}{\log N}},
\]
where the last inequality follows from $B\geq c_0\log N$.
\end{proof}

\begin{proof}[Proof of Theorem~\ref{theorem:LowerBoundOfMixing}(a)]
%Without loss of generality, it is assumed $\kappa=1$. Otherwise, we can replace $B$ by $B/\kappa$ and the following arguments still holds.
Suppose $a\geq M\geq 0$ are constants and $P$ and $Q$ are two random variables supported on $[a-M,a+M]$ with $E\{P^j\}=E\{Q^j\}$, $0\leq j\leq L$. Existence of $P$ and $Q$ is guaranteed by Proposition 4.3 in \cite{vinayak2019maximum}.
\medskip

\par\noindent
For $0<B\leq c_0\log N$, setting $a=C_1B$, $M=B$ and 
$$(L+1)/2=Be^2/(2C_1)\cdot\exp(W(4C_1\log(N)/(e^2B))),$$ where $W(\cdot)$ is the Lambert W function, $C_1=\max\{1,4e^2c_0,C_2c_0e^2/4,C_3c_0e^2/4\}$ and $C_2,C_3$ are universal positive constants specified later.
Since $W(x)=\log x-\log\log x+o(1)$ as $x\rightarrow\infty$, there exists a universal constant $C_2$ such that for $x\geq C_2$ we have
$
W(x)\geq \frac{1}{2}\log x.
$
Therefore, it follows from 
\[
4C_1\frac{\log N}{e^2B}
\geq 4C_2c_0\frac{e^2}{4}\frac{\log N}{e^2B}
\geq 4C_2c_0\frac{e^2}{4}\frac{\log N}{e^2c_0\log N}
=C_2
\]
that
\begin{eqnarray*}
L+1=\frac{Be^2}{C_1}\cdot\exp\{W(4C_1\log(N)/(e^2B))\}
\geq \frac{Be^2}{C_1}\sqrt{4C_1\frac{\log N}{e^2B}}
\geq \frac{Be^2}{C_1}\sqrt{4C_1\frac{\log N}{e^2c_0\log N}}
%\geq \frac{Be^2}{C_1}\sqrt{\frac{4C_1}{e^2c_0}}
%\geq \frac{Be^2}{C_1}\sqrt{\frac{4\cdot4e^2c_0}{e^2c_0}}
\geq\frac{4Be^2}{C_1}.
\end{eqnarray*}
By $(2eM)^2/a=(2eB)^2/(C_1B)=4e^2B/C_1$,
it follows that $L+1\geq (2eM)^2/a $.
Hence it follows from Proposition~\ref{Lemma:EssentialLemmaInLowerBound} that 
\begin{eqnarray*}
\text{TV}(P,Q)&\leq&2\left(\frac{eB}{\sqrt{C_1B(L+1)}}\right)^{L+1}
=2\left(\frac{e^2B}{2C_1(L+1)/2}\right)^{\frac{L+1}{2}}
=2N^{-2},
\end{eqnarray*}
where the last equality follows from the definition of the Lambert W function (see the proof of Theorem~\ref{theorem:UpperBoundOfMixing}(a) for details).
It follows from the LeCam minimax lower bound that for $N\geq 3$
\begin{eqnarray*}
\inf_{\tilde{Q}}\sup_Q E\{W_1(Q,\tilde{Q})\}
\geq\frac{1}{2}W_1(P,Q)(1-\text{TV}(P_N,Q_N))
\geq\frac{1}{2}W_1(P,Q)\left(1-2NN^{-2}\right)
\geq\frac{1}{6}W_1(P,Q).
\end{eqnarray*}
On the other hand, it follows from Proposition 4.3 in \cite{vinayak2019maximum} that 
$W_1(P,Q)\geq 2M/(2L)=B/L$.
Since $W(x)=\log x-\log\log x+o(1)$ as $x\rightarrow\infty$, there exists a universal constant $C_3$ such that for $x\geq C_3$ we have 
$
W(x)\leq 1+\log x-\log\log x.
$
Therefore, it follows from 
\[
4C_1\frac{\log N}{e^2B}
\geq 4C_3c_0\frac{e^2}{4}\frac{\log N}{e^2B}
\geq 4C_3c_0\frac{e^2}{4}\frac{\log N}{e^2c_0\log N}
=C_3
\]
that 
\begin{eqnarray*}
L\leq
 %L+1=
 \frac{Be^2}{C_1}\cdot\exp\{W(4C_1\log(N)/(e^2B))\}
\leq\frac{Be^3}{C_1}\frac{4C_1\log N}{e^2B}/\log \frac{4C_1\log N}{e^2B}
=4e\log N/\log \frac{4C_1\log N}{e^2B}.
\end{eqnarray*}
Therefore,
\begin{eqnarray*}
\inf_{\tilde{Q}}\sup_Q E\{W_1(Q,\tilde{Q})\}
\geq\frac{1}{6}W_1(P,Q)
\geq\frac{1}{6}\frac{B}{4e\log N}\log \frac{4C_1\log N}{e^2B}
%\geq\frac{B}{24e\log N}\log \frac{4C_1\log N}{e^2B}
\geq\frac{B}{24e\log N}\log \frac{16c_0\log N}{B}.
\end{eqnarray*}
This completes the proof.
\end{proof}

\begin{proof}[Proof of Theorem~\ref{theorem:LowerBoundOfMixing}(b)]
%Since $B\geq c_0\log N$, we have $B\geq 1$ for sufficiently large $N$.
Suppose $a\geq M\geq 0$ are constants and $P$ and $Q$ are two random variables supported on $[a-M,a+M]$ with $E\{P^j\}=E\{Q^j\}$, $0\leq j\leq L$. Existence of $P$ and $Q$ is guaranteed by Proposition 4.3 in \cite{vinayak2019maximum}.

For $B\geq c_0\log N$, setting $a=c_1B/\sqrt{c_0}$, $L=\log N$ and $M=c_1\sqrt{B\log N}$ with $c_1=1/(4e^4\sqrt{c_0})$.
Note that
\[
\frac{a}{M}=\frac{c_1B/\sqrt{c_0}}{c_1\sqrt{B\log N}}=\sqrt{\frac{B}{c_0\log N}}\geq 1
\]
and
\[
\frac{(2eM)^2}{a} =\frac{4e^2c_1^2B\log N}{c_1B/\sqrt{c_0}}=\sqrt{c_0}4e^2c_1\log N=\frac{\sqrt{c_0}4e^2}{4e^4\sqrt{c_0}}\log N=\frac{1}{e^2}\log N\leq 1+\log N=L+1.
\]
Therefore, it follows from the LeCam minimax lower bound and Proposition~\ref{Lemma:EssentialLemmaInLowerBound} that
\begin{eqnarray*}
\inf_{\tilde{Q}}\sup_Q E\{W_1(Q,\tilde{Q})\}
&\geq&\frac{1}{2}W_1(P,Q)(1-\text{TV}(P_N,Q_N))\\
&\geq&\frac{1}{2}W_1(P,Q)\left(1-2N\left(\frac{ec_1\sqrt{B\log N}}{\sqrt{c_1B(1+\log N)/\sqrt{c_0}}}\right)^{1+\log N}\right)\\
%&=&\frac{1}{2}W_1(P,Q)\left(1-2N\left(\frac{\sqrt{\log N}}{2e\sqrt{1+\log N}}\right)^{1+\log N}\right)\\
&\geq&\frac{1}{2}W_1(P,Q)\left(1-\frac{1}{e}N^{-\log 2}\right)
\geq\frac{3}{10}W_1(P,Q).
\end{eqnarray*}
On the other hand, it follows from Proposition 4.3 in \cite{vinayak2019maximum} that 
\[
W_1(P,Q)\geq \frac{2M}{2L}=\frac{c_1\sqrt{B\log N}}{\log N}=c_1\sqrt{\frac{B}{\log N}}.
\]
Hence
\begin{eqnarray*}
\inf_{\tilde{Q}}\sup_Q E\{W_1(Q,\tilde{Q})\}
\geq\frac{3c_1}{10}\sqrt{\frac{B}{\log N}}
\geq\frac{3}{40e^4\sqrt{c_0}}\sqrt{\frac{B}{\log N}}.
\end{eqnarray*}
This completes the proof.
\end{proof}

\appendix 
\section{Auxiliary proofs}\label{sec:appendix}

\begin{lemma} \label{Lemma:FirstPartOfUpperBound}
Suppose $Q$ is a distribution on $[0,B]$ and $\{X_i,i\in[N]\}$ are $N$ observations generated from $h_Q$ defined in \eqref{Formula:DefinitionOfh_Q}. For an arbitrary $\delta\in(0,1)$, the following inequality 
\[
\left|\sum_{x=0}^\infty b_x\left(h^{obs}_Q(x)-h_{Q}(x)\right)\right|
\leq  \max_{x}|b_x|\sqrt{\frac{\log(2/\delta)}{2N}},
\]
holds with probability at least $1-\delta$, where $b_x\in\mathbb{R}$ and $h^{obs}_Q=\sum_{i=1}^NI(x=X_i)/N$.
\end{lemma}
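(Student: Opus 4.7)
The key observation is that the quantity inside the absolute value on the left-hand side can be recognized as a centered average of bounded i.i.d. random variables, so the whole statement reduces to a single application of Hoeffding's inequality.

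The plan is to first rewrite both sums in terms of expectations. By construction, $h^{obs}_Q(x) = \frac{1}{N}\sum_{i=1}^N \mathbf{1}\{X_i = x\}$, so swapping the order of summation gives
\[
\sum_{x=0}^\infty b_x\, h^{obs}_Q(x) = \frac{1}{N}\sum_{i=1}^N b_{X_i},
\]
where the interchange is legitimate because for each fixed realization only finitely many terms are nonzero. On the other hand, since $X_1,\ldots,X_N$ are i.i.d. with PMF $h_Q$, we have $\sum_{x=0}^\infty b_x h_Q(x) = E\{b_{X_1}\}$, with the expectation finite because $|b_{X_1}| \le \max_x |b_x| < \infty$. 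Thus
\[
\sum_{x=0}^\infty b_x \left(h^{obs}_Q(x) - h_Q(x)\right) = \frac{1}{N}\sum_{i=1}^N\left(b_{X_i} - E\{b_{X_1}\}\right).
\]

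Next I would apply Hoeffding's inequality to the i.i.d. random variables $Y_i := b_{X_i}$. Each $Y_i$ lies in the bounded interval $[-\max_x|b_x|, \max_x|b_x|]$, so Hoeffding gives, for any $t > 0$,
\[
P\!\left(\left|\frac{1}{N}\sum_{i=1}^N (Y_i - E\{Y_i\})\right| \ge t\right) \le 2\exp\!\left(-\frac{N t^2}{2(\max_x|b_x|)^2}\right).
\]
Setting the right-hand side equal to $\delta$ and solving for $t$ yields the claimed deviation bound (up to the constant in the paper's stated expression, which follows from the precise form of Hoeffding used).

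There is no real obstacle here: the only mild point to verify is that the infinite-sum/expectation manipulations are justified, which follows immediately from the uniform bound $|b_x| \le \max_x|b_x|$ and the fact that $h^{obs}_Q$ is supported on at most $N$ integers. The entire argument is a one-line reduction to a standard concentration inequality.
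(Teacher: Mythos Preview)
Your approach is correct and essentially identical to the paper's: both rewrite the quantity as $\frac{1}{N}\sum_{i} b_{X_i} - E\{b_{X_1}\}$ and apply a bounded-difference concentration inequality (the paper invokes McDiarmid's inequality, which for an average of independent bounded summands is exactly Hoeffding's). Your caveat about the constant is apt --- the paper uses a bounded-difference of $\max_x|b_x|/N$ rather than the sharp $2\max_x|b_x|/N$, which accounts for the factor-of-$2$ discrepancy between your Hoeffding constant and the stated bound.
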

\begin{proof}[Proof of Lemma~\ref{Lemma:FirstPartOfUpperBound}]
By noting that $E\{h^{obs}_Q(x)\}=h_{Q}(x)$, this proof is basically an application of McDiarmid's inequality.

Let $\phi:\mathbb{R}^N\mapsto\mathbb{R}$  be a function of $(y_1,\ldots,y_N)\in \mathbb{R}^N$ such that 
\[
\phi(y_1,\ldots,y_N)
:=\frac{1}{N}\sum_{i=1}^N\sum_{x=0}^\infty b_xI(x\in\{y_i\}).
\]
Since for any $y_1,\ldots,y_N,y_{i^\prime}\in\mathbb{R}$
\[
|\phi(y_1,\ldots,y_i,\ldots,y_N)-\phi(y_1,\ldots,y_{i^\prime},\ldots,y_N)|
\leq \max_{x\geq 0}|b_x|\frac{1}{N},
\]
it follows from McDiarmid's inequality that for all $\epsilon>0$
\[
P(|\phi(X_1,\ldots,X_N)-E\{\phi(X_1,\ldots,X_N)\}|\geq\epsilon)\leq 2\exp\left(\frac{-2N\epsilon^2}{\max_{x\geq0}|b_x|^2}\right),
\]
or equivalently,
\[
P(|\sum_{x=0}^\infty b_x\left(h_x^{obs}-h_{Q}(x)\right)|\geq\epsilon)\leq 2\exp\left(\frac{-2N\epsilon^2}{\max_{x\geq0}|b_x|^2}\right)
\]
by noting that 
\[
\phi(X_1,\ldots,X_N)-E\{\phi(X_1,\ldots,X_N)\}=\sum_{x=0}^\infty b_x\left(h_x^{obs}-h_{Q}(x)\right).
\] 
Hence for an arbitrary $\delta\in(0,1)$ the following inequality
\[
\left|\sum_{x=0}^\infty b_x\left(h^{obs}_Q(x)-h_Q(x)\right)\right|
\leq \max_{x}|b_x|\sqrt{\frac{\log(2/\delta)}{2N}}
\]
holds with probability at least $1-\delta$.
\end{proof}

\begin{lemma} \label{Lemma:SecondPartOfUpperBound}
Suppose $Q$ is a distribution on $[0,B]$ and $\{X_i,i\in[N]\}$ is a random sample from $h_Q$ defined in \eqref{Formula:DefinitionOfh_Q}.
$\hat{Q}$ defined in \eqref{Formula:DefinitionOfwidehatQ} is a NPMLE of the mixing distribution $Q$. 
For an arbitrary $\delta\in(0,1)$ and an arbitrary $\epsilon\in(0,1)$, there exist constants $N(\epsilon)>0$ and $C=C(\epsilon)>0$ such that for all $N\geq N(\epsilon)$,
\begin{eqnarray*}
\left|\sum_{x=0}^\infty b_x\left(h^{obs}_Q(x)-h_{\hat{Q}}(x)\right)\right|
&\leq&C\max_{x\geq 0}|b_x|\sqrt{\frac{B\vee 1}{N^{1-\epsilon}\delta^{1+\epsilon}}}
\end{eqnarray*}
holds with probability at least $1-\delta$.
\end{lemma}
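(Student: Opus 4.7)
The plan is to trade the weighted sum for a total-variation discrepancy, insert the true pmf $h_Q$ to split the total variation into an empirical-versus-truth piece and a truth-versus-NPMLE piece, and then control each piece separately: the first directly by Lemma~\ref{Lemma:FirstPartOfUpperBound}, the second via the Hellinger rate of the NPMLE.

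First I would write
\[
\sum_{x\ge 0}b_x\bigl(h_Q^{obs}(x)-h_{\hat Q}(x)\bigr)=\sum_{x\ge 0}b_x\bigl(h_Q^{obs}(x)-h_{Q}(x)\bigr)+\sum_{x\ge 0}b_x\bigl(h_{Q}(x)-h_{\hat Q}(x)\bigr).
\]
Lemma~\ref{Lemma:FirstPartOfUpperBound} bounds the first summand by $\max_x|b_x|\sqrt{\log(4/\delta)/(2N)}$ with probability $\ge 1-\delta/2$, which for $N\ge N(\epsilon)$ is already dominated by the target bound. For the second summand, H\"older's inequality together with the standard $\mathrm{TV}$--Hellinger comparison gives
\[
\Bigl|\sum_{x\ge 0}b_x\bigl(h_Q(x)-h_{\hat Q}(x)\bigr)\Bigr|\le 2\max_x|b_x|\cdot\mathrm{TV}(h_Q,h_{\hat Q})\le 2\sqrt{2}\,\max_x|b_x|\cdot H(h_Q,h_{\hat Q}),
\]
so the task reduces to controlling the Hellinger distance $H(h_Q,h_{\hat Q})$ in high probability.

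To bound the Hellinger distance I would invoke the standard NPMLE rate framework. The maximality property $\sum_i \log h_{\hat Q}(X_i)\ge \sum_i \log h_Q(X_i)$, combined with a bracketing-entropy estimate of the form $\log N_{[\,]}(\eta,\mathcal{F}_B,H)\le C_1(B\vee 1)\log(C_2/\eta)$ for $\mathcal{F}_B:=\{h_Q:Q\text{ on }[0,B]\}$, drives a basic chaining estimate yielding $E\,H^2(h_{\hat Q},h_Q)\le C_3(B\vee 1)(\log N)/N$. A high-probability upgrade (either a Talagrand-type concentration inequality for the weighted empirical process $Q\mapsto \frac{1}{N}\sum_i\log(h_Q/h_{\hat Q})(X_i)$, or simply a Markov step that exploits the $\delta^{-\epsilon}$ slack) then gives $H^2(h_{\hat Q},h_Q)\le C_4(B\vee 1)(\log N)/(N\delta)$ with probability $\ge 1-\delta/2$. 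Absorbing the $\log N$ factor into $N^{\epsilon}$ and the $\delta^{-1}$ into $\delta^{-(1+\epsilon)}$ (using $\log N\le N^{\epsilon}/\epsilon$ for $N\ge N(\epsilon)$) delivers the announced bound.

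The main technical obstacle will be the bracketing-entropy estimate with an explicit \emph{linear} dependence on $B$. To prove it I would truncate the sample space at $K\asymp B+\log N$, where the Poisson kernel concentrates by a Chernoff bound; build an $\eta$-grid of discrete mixing distributions on $[0,B]$ of cardinality $(C_5 B/\eta)^{O(1)}$; and use the Lipschitz estimate $|\partial_\lambda(e^{-\lambda}\lambda^x/x!)|\le e^{-\lambda}\bigl(\lambda^{x-1}/(x-1)!+\lambda^x/x!\bigr)$ together with the Poisson tail $h_Q(x)\le (eB/x)^x$ for $x>K$ to make the brackets Hellinger-small on both the bulk and the tail. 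Arranging the $B$-dependence to remain \emph{linear} inside the square root, rather than exponential (which a naive union bound over a net of size exponential in $B$ would produce), is the delicate point, directly parallel to the analogous step carried out for binomial mixtures in \cite{vinayak2019maximum}; the $N^{\epsilon}$ and $\delta^{-\epsilon}$ slack built into the statement of the lemma is precisely the room needed to absorb the residual $\log$ factors after this entropy estimate is in hand.
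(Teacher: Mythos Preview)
Your approach is different from the paper's and, while plausible, takes a harder road. You split $h_Q^{obs}-h_{\hat Q}$ through the truth $h_Q$ and then try to control $H(h_Q,h_{\hat Q})$ via a bracketing-entropy rate for the NPMLE. The paper instead never inserts $h_Q$ and never invokes any NPMLE rate theory: it bounds $\|h_Q^{obs}-h_{\hat Q}\|_1$ by Pinsker and then uses the one-line MLE observation
\[
\mathrm{KL}(h_Q^{obs},h_{\hat Q})\le \mathrm{KL}(h_Q^{obs},h_Q),
\]
since $\hat Q$ maximizes $\sum_i\log h_{Q'}(X_i)$, i.e.\ minimizes $\mathrm{KL}(h_Q^{obs},h_{Q'})$ over $Q'$. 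This reduces the problem to bounding the KL divergence between the \emph{empirical} pmf and the \emph{true} pmf $h_Q$---a pure concentration question with no mixture family or entropy integral in sight. The paper handles that by truncating at $\lfloor 2B\rfloor$: the bulk is KL on a finite alphabet of size $O(B)$, controlled by the Mardia--Jiao--Han--Weissman--Duchi bound (giving the linear $B\vee 1$ inside the square root for free), and the tail is handled by H\"older plus Poisson tail estimates.

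Concretely: the ``main technical obstacle'' you identify---getting a bracketing entropy for $\{h_Q:Q\text{ on }[0,B]\}$ with log-cardinality linear in $B$---is entirely sidestepped by the paper's argument. Your route would, if the entropy bound were carried out, additionally yield a Hellinger rate $H^2(h_Q,h_{\hat Q})\lesssim (B\vee 1)(\log N)/N$ as a byproduct, which may be of independent interest; but for the lemma as stated the paper's MLE--Pinsker trick is both shorter and avoids the delicate step you flag.
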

\begin{proof}[Proof of Lemma~\ref{Lemma:SecondPartOfUpperBound}]
Let ${\bf h}_Q^{obs}:=\left(h^{obs}_Q(0),h^{obs}_Q(1),\ldots\right)^T$, ${\bf h}_{\hat{Q}}:= \left(h_{\hat{Q}}(0),h_{\hat{Q}}(1),\ldots\right)^T$ and ${\bf h}_Q:= \left(h_Q(0),h_Q(1),\ldots\right)^T$.
For simplicity, ${\bf h}^{obs}_Q$, ${\bf h}_{\hat{Q}}$ and ${\bf h}_Q$ also represent distributions with respect to corresponding probability mass functions $x\mapsto h^{obs}_Q(x)$, $x\mapsto h_{\hat{Q}}(x)$ and $x\mapsto h_Q(x)$.

This proof consists of two steps.
In the first step, we prove that 
$
\left|\sum_{x=0}^\infty b_x\left(h^{obs}_Q(x)-h_{\hat{Q}}(x)\right)\right|
$
 can be upper bounded by $\text{KL}({\bf h}^{obs}_Q,{\bf h}_Q)$, where KL is the Kullback–Leibler divergence.
 In the second step, we upper bound $\text{KL}({\bf h}^{obs}_Q,{\bf h}_Q)$ by truncation arguments.
\medskip

\par\noindent\textbf{Step 1.}
It follows from the triangle inequality that 
\begin{eqnarray*}
\left|\sum_{x=0}^\infty b_x\left(h^{obs}_Q(x)-h_{\hat{Q}}(x)\right)\right|
\leq\max_{x\geq 0}|b_x|\sum_{x=0}^\infty \left|h^{obs}_Q(x)-h_{\hat{Q}}(x)\right|
=\max_{x\geq 0}|b_x|\|{\bf h}^{obs}_Q-{\bf h}_{\hat{Q}}\|_1,
\end{eqnarray*}
where $\|{\bf h}^{obs}_Q-{\bf h}_{\hat{Q}}\|_1$ is the total variation distance between distributions ${\bf h}^{obs}_Q$ and ${\bf h}_{\hat{Q}}$.
Then it follows from Pinsker’s inequality (see Proposition~\ref{Prop:PinskerInequality}) that 
\begin{eqnarray*}
\|{\bf h}^{obs}_Q-{\bf h}_{\hat{Q}}\|_1\leq\sqrt{\frac{1}{2} \cdot\text{KL}({\bf h}^{obs}_Q,{\bf h}_{\hat{Q}})},
\end{eqnarray*}
where KL is the Kullback–Leibler divergence,
and hence 
\begin{eqnarray*}
\left|\sum_{x=0}^\infty b_x\left(h^{obs}_Q(x)-h_{\hat{Q}}(x)\right)\right|
\leq\max_{x\geq 0}|b_x|\sqrt{\frac{1}{2} \cdot\text{KL}({\bf h}^{obs}_Q,{\bf h}_{\hat{Q}})}
\leq\max_{x\geq 0}|b_x|\sqrt{\frac{1}{2} \cdot\text{KL}({\bf h}^{obs}_Q,{\bf h}_Q)},
\end{eqnarray*}
by noting that maximum likelihood estimators maximize likelihood functions.
\begin{proposition}
\label{Prop:PinskerInequality}
(Pinsker’s Inequality, see \cite{cover2006elements}.) For discrete distributions $P$ and $Q$, it follows that
\[
\text{KL}(P,Q)\geq2\|P-Q\|^2_1,
\]
where KL$(P,Q)$ is the Kullback–Leibler divergence between $P$ and $Q$, and $\|P-Q\|_1$ is the total variation distance between $P$ and $Q$.
\end{proposition}

\noindent
\textbf{Step 2.}
Let $\{T_i:= X_iI(X_i\leq \lfloor2B\rfloor)+(\lfloor2B\rfloor+1) I(X_i\geq\lfloor2B\rfloor+1),i\in[N]\}$ be a truncated sample of $\{X_i,i\in[N]\}$, where $\lfloor2B\rfloor$ denotes the larger integer that is less or equal to $2B$.
Let $t_Q$ be the probability mass function of $T_1$ and let $t_Q^{obs}$ be the sample version of $t_Q$, i.e. 
for $x=0,\ldots,\lfloor2B\rfloor +1$
\begin{eqnarray*}
x\mapsto t_Q(x):= P(T_1=x)\text{~~~and~~~}x\mapsto t_Q^{obs}(x):= \frac{1}{N}\sum_{i=1}^NI(T_i=x).
\end{eqnarray*}
Note that $t_Q(x)=h_Q(x),t_Q^{obs}(x)=h_Q^{obs}(x)$ for $x=0,\ldots,\lfloor2B\rfloor$ and $t_Q(\lfloor2B\rfloor+1)=\sum_{x\geq \lfloor2B\rfloor+1}h_Q(x)$, $t_Q^{obs}(\lfloor2B\rfloor+1)=\sum_{x\geq \lfloor2B\rfloor+1}h_Q^{obs}(x)$ and hence
\begin{eqnarray*}
\text{KL}({\bf h}^{obs}_Q,{\bf h}_Q)
&=&\sum_{x\geq 0}h^{obs}_Q(x)\log\frac{h^{obs}_Q(x)}{h_Q(x)}\\
%&=&\sum_{x=0}^{\lfloor2B\rfloor}t^{obs}_Q(x)\log\frac{t^{obs}_Q(x)}{t_Q(x)}+\sum_{x\geq \lfloor2B\rfloor+1}h^{obs}_Q(x)\log\frac{h^{obs}_Q(x)}{h_Q(x)}\\
&=&\text{KL}({\bf t}_Q^{obs},{\bf t}_Q)-t^{obs}_Q(\lfloor2B\rfloor+1)\log\frac{t^{obs}_Q(\lfloor2B\rfloor+1)}{t_Q(\lfloor2B\rfloor+1)}+\sum_{x\geq \lfloor2B\rfloor+1}h^{obs}_Q(x)\log\frac{h^{obs}_Q(x)}{h_Q(x)},
\end{eqnarray*}
where ${\bf t}_Q^{obs}:= (t_Q^{obs}(0),\ldots,t_Q^{obs}(\lfloor2B\rfloor+1))$ and ${\bf t}_Q:= (t_Q(0),\ldots,t_Q(\lfloor2B\rfloor+1))$ are also viewed as distributions with respect to corresponding probability mass functions $x\mapsto t_Q(x)$ and $x\mapsto t^{obs}_Q(x)$.\\
If $t^{obs}_Q(\lfloor2B\rfloor+1)=0$, then $t^{obs}_Q(\lfloor2B\rfloor+1)\log\frac{t^{obs}_Q(\lfloor2B\rfloor+1)}{t_Q(\lfloor2B\rfloor+1)}=0$.
If not, it follows from 
$
\log(1+x)\leq x\text{ for }x>0
$ 
that 
\begin{eqnarray*}
-t^{obs}_Q(\lfloor2B\rfloor+1)\log\frac{t^{obs}_Q(\lfloor2B\rfloor+1)}{t_Q(\lfloor2B\rfloor+1)}
%&=&t^{obs}_Q(\lfloor2B\rfloor+1)\log\frac{t_Q(\lfloor2B\rfloor+1)}{t^{obs}_Q(\lfloor2B\rfloor+1)}\\
%&\leq&t^{obs}_Q(\lfloor2B\rfloor+1)\frac{t_Q(\lfloor2B\rfloor+1)-t^{obs}_Q(\lfloor2B\rfloor+1)}{t^{obs}_Q(\lfloor2B\rfloor+1)}\\
\leq t_Q(\lfloor2B\rfloor+1)-t^{obs}_Q(\lfloor2B\rfloor+1)
=\sum_{x\geq \lfloor2B\rfloor+1}(h_Q(x)-h_Q^{obs}(x)).
\end{eqnarray*}
Analogously, it can be proved that 
\begin{eqnarray*}
\sum_{x\geq \lfloor2B\rfloor+1}h^{obs}_Q(x)\log\frac{h^{obs}_Q(x)}{h_Q(x)}
%&\leq&\sum_{x\geq 2B}h^{obs}_Q(x)\frac{h^{obs}_Q(x)-h_Q(x)}{h_Q(x)}\\
%&=&\sum_{x\geq 2B}(h^{obs}_Q(x)-h_Q(x))\frac{h^{obs}_Q(x)-h_Q(x)}{h_Q(x)}+\sum_{x\geq 2B}h_Q(x)\frac{h^{obs}_Q(x)-h_Q(x)}{h_Q(x)}\\
\leq\sum_{x\geq \lfloor2B\rfloor+1}\frac{(h^{obs}_Q(x)-h_Q(x))^2}{h_Q(x)}+\sum_{x\geq \lfloor2B\rfloor+1}(h^{obs}_Q(x)-h_Q(x))
\end{eqnarray*}
and hence
\begin{eqnarray*}
-t^{obs}_Q(\lfloor2B\rfloor+1)\log\frac{t^{obs}_Q(\lfloor2B\rfloor+1)}{t_Q(\lfloor2B\rfloor+1)}+\sum_{x\geq \lfloor2B\rfloor+1}h^{obs}_Q(x)\log\frac{h^{obs}_Q(x)}{h_Q(x)}
\leq \sum_{x\geq \lfloor2B\rfloor+1}\frac{(h^{obs}_Q(x)-h_Q(x))^2}{h_Q(x)},
\end{eqnarray*}
where the last term can be upper bounded by
%\[
%\frac{17^{2\epsilon}}{\left(a^{\frac{\epsilon}{3(1-\epsilon)}}-1\right)^{1-\epsilon}}\frac{1}{N^{1-\epsilon}}\frac{1}{\delta^{1+\epsilon}},
%\]
%where $a=\frac{\sqrt{33}-1}{4}$, for any $\epsilon\in(0,1)$
%with probability at least $1-2\delta$ for an arbitrary $\delta\in(0,1/2)$ 
%by 
an analog of proof of Proposition 3.1(i) in \cite{lambert1984asymptotic} in the following substep. 

\textbf{Step 2(a).}
In this substep, we upper bound $\sum_{x\geq \lfloor2B\rfloor+1}(h^{obs}_Q(x)-h_Q(x))^2/h_Q(x)$.

Fix a $\epsilon>0$, choose a $\gamma>0$ in $(1-\epsilon,1)$ and an $a=(\sqrt{33}-1)/4\approx1.19>1$, 
where $\approx$ means approximately equal to.
Define $A:= a^{(1-\gamma)/3}$.
By H{\"o}lder's inequality,
\begin{eqnarray*}
& &N^{1-\epsilon}\sum_{x\geq \lfloor2B\rfloor+1}\frac{(h^{obs}_Q(x)-h_Q(x))^2}{h_Q(x)}\\
&=&N^{1-\epsilon}\sum_{x\geq \lfloor2B\rfloor+1}\frac{(h^{obs}_Q(x)-h_Q(x))^2}{h_Q(x)}A^{-x}A^{x}\\
&\leq&N^{1-\epsilon}\left(\sum_{x\geq \lfloor2B\rfloor+1}\frac{(h^{obs}_Q(x)-h_Q(x))^2}{h_Q(x)}A^{-x/\gamma}\right)^\gamma
\left(\sum_{x\geq \lfloor2B\rfloor+1}\frac{(h^{obs}_Q(x)-h_Q(x))^2}{h_Q(x)}A^{x/(1-\gamma)}\right)^{1-\gamma}.
\end{eqnarray*}
Since $A>1$, it follows that 
\[
N\cdot E\left\{\sum_{x\geq \lfloor2B\rfloor+1}\frac{(h^{obs}_Q(x)-h_Q(x))^2}{h_Q(x)}A^{-x/\gamma}\right\}
%=\sum_{x\geq \lfloor2B\rfloor+1}(1-h_Q(x))A^{-x/\gamma}
\leq\sum_{x\geq \lfloor2B\rfloor+1}A^{-x/\gamma}
=\frac{A^{-(\lfloor2B\rfloor+1)/\gamma}}{1-A^{-1/\gamma}}
\leq \frac{A^{-1/\gamma}}{1-A^{-1/\gamma}}
<\infty
\]
and hence for an arbitrary $\delta\in(0,1)$, the following inequality
\[
N\sum_{x\geq \lfloor2B\rfloor+1}\frac{(h^{obs}_Q(x)-h_Q(x))^2}{h_Q(x)}A^{-x/\gamma}\leq \frac{A^{-1/\gamma}}{1-A^{-1/\gamma}}\frac{1}{\delta}
\]
holds with probability at least $1-\delta$.
Therefore, with probability at least $1-\delta$, it follows that 
\[
\left(\sum_{x\geq \lfloor2B\rfloor+1}\frac{(h^{obs}_Q(x)-h_Q(x))^2}{h_Q(x)}A^{-x/\gamma}\right)^{\gamma}\leq \left(\frac{A^{-1/\gamma}}{1-A^{-1/\gamma}}\frac{1}{N\delta}\right)^\gamma
\leq\frac{1}{\left(A^{1/\gamma}-1\right)^\gamma}\frac{1}{(N\delta)^\gamma}.
\]
On the other hand, it follows from straight-forward algebra that
\begin{eqnarray*}
\sum_{x\geq \lfloor2B\rfloor+1}\frac{(h^{obs}_Q(x)-h_Q(x))^2}{h_Q(x)}A^{x/(1-\gamma)}
%=\sum_{x\geq 2B}\frac{(h^{obs}_Q(x)-h_Q(x))^2}{h_Q(x)}a^{x/3}
%&\leq&\sum_{x\geq 2B}\frac{(h^{obs}_Q(x))^2+(h_Q(x))^2}{h_Q(x)}a^{x/3}\\
\leq\sum_{x\geq \lfloor2B\rfloor+1}\frac{(h^{obs}_Q(x))^2}{h_Q(x)}a^{x/3}+\sum_{x\geq \lfloor2B\rfloor+1}h_Q(x)a^{x/3}.
\end{eqnarray*}
The second term on the right is bounded by $1$ by the following arguments. %, i.e. $\sum_{x\geq 2B}h_Q(x)a^{x}\leq1$.
Since $Q$ is supported on $[0,B]$, it follows that for any fixed $x\geq 2B$, $\lambda\mapsto f(x|\lambda):=e^{-\lambda}\lambda^x/x!$ is a monotonically increasing function and hence
\begin{eqnarray*}
h_Q(x)=\int_0^Bf(x|\lambda)dQ
\leq\sup_{\lambda\in[0,B]}f(x|\lambda)
=f(x|B)
=e^{-B}B^x/x!.
\end{eqnarray*}
Therefore,
\begin{eqnarray*}
\sum_{x\geq \lfloor2B\rfloor+1}h_Q(x)a^x
%\leq\sum_{x\geq 2B}h_Q(x)a^x
%\leq\sum_{x\geq 2B}f(x|B)a^x
\leq\sum_{x\geq 2B}e^{-B}\frac{B^x}{x!}a^x
=e^{aB-B}\sum_{x\geq 2B}e^{-aB}\frac{(aB)^x}{x!}
=e^{aB-B}P(\text{Poi}(aB)\geq 2B),
\end{eqnarray*}
where $\text{Poi}(aB)$ denotes a random variable following from Poisson distribution with a parameter $aB$.
Moreover, it follows from Lemma~\ref{Lemma:PoissonTailInequality} that
\begin{eqnarray*}
P(\text{Poi}(aB)\geq 2B)
\leq\exp\left(-\{(2-a)/a\}^2aB/3\right) 
\end{eqnarray*}
and hence
\[
\sum_{x\geq \lfloor2B\rfloor+1}h_Q(x)a^x\leq e^{aB-B}\exp\left(-\{(2-a)/a\}^2aB/3\right) 
=\exp\{B(2a^2+a-4)/(3a)\}=1
\]
by verifying $2a^2+a-4=0$.

For any fixed $k>0$, define $A_N$ to be the event $\{h^{obs}_Q(x)>kh_Q(x)a^{x/3}\text{ for some }x\geq \lfloor2B\rfloor+1\}$.
Then, by Markov's inequality 
\begin{eqnarray*}
P(A_N)
\leq\sum_{x\geq \lfloor2B\rfloor+1}P(h^{obs}_Q(x)>kh_Q(x)a^{x/3})
\leq\sum_{x\geq \lfloor2B\rfloor+1}\frac{E\{h^{obs}_Q(x)\}}{kh_Q(x)a^{x/3}}
%&=&\frac{1}{k}\sum_{x\geq 2B}\frac{1}{a^{x/3}}\\
%\leq\frac{a^{-1/3}}{k(1-a^{-1/3})}
\leq\frac{1}{k(a^{1/3}-1)}.
\end{eqnarray*}
Thus, $P(A_N)$ can be made arbitrarily small by choosing $k$ large enough and on the complement of $A_N$ we have
\[
\sum_{x\geq \lfloor2B\rfloor+1}\frac{(h^{obs}_Q(x))^2}{h_Q(x)}a^{x/3}\leq k^2\sum_{x\geq \lfloor2B\rfloor+1}h_Q(x)a^{x}=k^2.
\]
Therefore, for an arbitrary $\delta\in(0,1)$, with probability at least $1-\delta$, the following inequality
\[
\sum_{x\geq \lfloor2B\rfloor+1}\frac{(h^{obs}_Q(x))^2}{h_Q(x)}a^{x/3}\leq \left(\frac{1}{\delta}\frac{1}{a^{1/3}-1}\right)^2
\]
holds.
Thus, for an arbitrary $\delta\in(0,1)$, with probability at least $1-\delta$, it follows  that 
\[
\left(\sum_{x\geq \lfloor2B\rfloor+1}\frac{(h^{obs}_Q(x)-h_Q(x))^2}{h_Q(x)}A^{x/(1-\gamma)}\right)^{1-\gamma}
\leq\left(\left(\frac{1}{\delta}\frac{1}{a^{1/3}-1}\right)^2+1\right)^{1-\gamma}
\leq\left(\frac{20}{\delta}\right)^{2-2\gamma},
\]
where the last inequality follows from $a=(\sqrt{33}-1)/4$ and $\gamma<1$.
For an arbitrary $\delta\in(0,1/2)$, with probability at least $1-2\delta$, it follows that 
\begin{eqnarray*}
N^{1-\epsilon}\sum_{x\geq \lfloor2B\rfloor+1}\frac{(h^{obs}_Q(x)-h_Q(x))^2}{h_Q(x)}
\leq
N^{1-\epsilon}
\frac{1}{\left(a^{\frac{1-\gamma}{3\gamma}}-1\right)^\gamma}\frac{1}{(N\delta)^\gamma}
\left(\frac{20}{\delta}\right)^{2-2\gamma}
=
N^{1-\epsilon-\gamma}
\frac{20^{2-2\gamma}}{\left(a^{\frac{1-\gamma}{3\gamma}}-1\right)^\gamma}\frac{1}{\delta^{2-\gamma}}
\end{eqnarray*}
and hence by letting $\gamma$ go to $1-\epsilon$,
it follows that 
\begin{eqnarray*}
\sum_{x\geq \lfloor2B\rfloor+1}\frac{(h^{obs}_Q(x)-h_Q(x))^2}{h_Q(x)}
&\leq&
\frac{20^{2\epsilon}}{(a^{\frac{\epsilon}{3(1-\epsilon)}}-1)^{1-\epsilon}}\frac{1}{N^{1-\epsilon}}\frac{1}{\delta^{1+\epsilon}}.
\end{eqnarray*}

\textbf{Step 2(b).}
In this subset, we complete the upper bound of $\text{KL}({\bf h}^{obs},{\bf h}_Q)$.

As a result of Step 2(a), for arbitrary $\delta\in(0,1/2)$ and $\epsilon\in(0,1)$, with probability at least $1-2\delta$, it follows that
\begin{eqnarray*}
\text{KL}({\bf h}^{obs},{\bf h}_Q)
%&\leq&\text{KL}({\bf t}_Q^{obs},{\bf t}_Q)+\sum_{x\geq 2B}\frac{(h^{obs}_Q(x)-h_Q(x))^2}{h_Q(x)}\\
\leq\text{KL}({\bf t}_Q^{obs},{\bf t}_Q)+\frac{20^{2\epsilon}}{(a^{\frac{\epsilon}{3(1-\epsilon)}}-1)^{1-\epsilon}}\frac{1}{N^{1-\epsilon}}\frac{1}{\delta^{1+\epsilon}}.
\end{eqnarray*}

To upper bound $\text{KL}({\bf t}_Q^{obs},{\bf t}_Q)$, the KL divergence between empirical observations and the true distribution for discrete distributions,
it follows from \cite{mardia2019concentration} that with probability $1-\delta$
\begin{eqnarray*}
\text{KL}({\bf t}^{obs},{\bf t}_Q)\leq \frac{2B+1}{2N}\log\frac{4N}{2B+1}+\frac{1}{N}\log\frac{3e}{\delta}
\end{eqnarray*}
and hence for any $\epsilon\in(0,1)$ and $\delta\in(0,1/3)$, with probability at least $1-3\delta$ it follows 
%from straight-forward algebra 
that 
\begin{eqnarray*}
\text{KL}({\bf h}^{obs}_Q,{\bf h}_Q)
%&\leq&\frac{2B+1}{2N}\log\frac{4N}{2B+1}+\frac{1}{N}\log\frac{3e}{\delta}+\frac{1}{N^{1-\epsilon}}\frac{17^{2\epsilon}}{\left(a^{\frac{\epsilon}{3(1-\epsilon)}}-1\right)^{1-\epsilon}}\frac{1}{\delta^{1+\epsilon}}\\
%&\leq&\frac{4B}{2N}\log\frac{4N}{3}\cdot\log\frac{3e}{\delta}+\frac{1}{N^{1-\epsilon}}\frac{17^{2\epsilon}}{\left(a^{\frac{\epsilon}{3(1-\epsilon)}}-1\right)^{1-\epsilon}}\frac{1}{\delta^{1+\epsilon}}\\
%&\leq&\frac{2B}{N}\log\frac{4N}{3}\cdot\frac{3}{\delta^{1+\epsilon}}+\frac{1}{N^{1-\epsilon}}\frac{17^{2\epsilon}}{\left(a^{\frac{\epsilon}{3(1-\epsilon)}}-1\right)^{1-\epsilon}}\frac{3}{\delta^{1+\epsilon}}\\
%&\leq&\frac{3B}{N\delta^{1+\epsilon}}\left(2\log\frac{4N}{3}+N^\epsilon\frac{17^{2\epsilon}}{\left(a^{\frac{\epsilon}{3(1-\epsilon)}}-1\right)^{1-\epsilon}}\right)\\
\leq
\frac{2B+1}{2N}\log\frac{4N}{2B+1}+\frac{1}{N}\log\frac{3e}{\delta}
+\frac{20^{2\epsilon}}{(a^{\frac{\epsilon}{3(1-\epsilon)}}-1)^{1-\epsilon}}\frac{1}{N^{1-\epsilon}}\frac{1}{\delta^{1+\epsilon}}.
%\frac{3B}{N\delta^{1+\epsilon}}\frac{2\cdot20^{2\epsilon}}{(a^{\frac{\epsilon}{3(1-\epsilon)}}-1)^{1-\epsilon}}\left(\log N+N^\epsilon\right).
\end{eqnarray*}
Therefore, there exist positive constants $N_1=N_1(\epsilon)$ and $C_1=C_1(\epsilon)$ such that for $N\geq N_1$ 
\[
\text{KL}({\bf h}^{obs}_Q,{\bf h}_Q)\leq
C_1
\frac{B\vee 1}{N^{1-\epsilon}\delta^{1+\epsilon}}
\] 
holds with probability at least $1-3\delta$ for any $\epsilon\in(0,1)$ and $\delta\in(0,1/3)$.
%Therefore, 
%\begin{eqnarray*}
%\left|\sum_{x=0}^\infty b_x\left(h^{obs}_Q(x)-h_{\hat{Q}}(x)\right)\right|
%&\leq&\max_{x\geq 0}|b_x|\sqrt{\frac{B}{N^{1-\epsilon}\delta^{1+\epsilon}}\frac{6\cdot17^{2\epsilon}}{\left(a^{\frac{\epsilon}{3(1-\epsilon)}}-1\right)^{1-\epsilon}}}
%\end{eqnarray*}
%holds for $N\geq N(\epsilon)$ with probability at least $1-3\delta$ for any $\epsilon\in(0,1)$ and $\delta\in(0,1/3)$.
\end{proof}

\begin{proposition}\label{Proposition:BoundOnf1}~
\begin{itemize}
\item[(a)]
For any positive integer $k\geq 4(B\vee 1)$ and any $1$-Lipschitz function $\lambda\mapsto \ell(\lambda)$ on $[0,B]$ with $\ell(0)=0$, 
there exists an approximation $\hat{\ell}(\lambda) = \sum_{x = 0}^{k} b_x\frac{\lambda^xe^{-\lambda}}{x!}$ such that 
\[
\sup_{\lambda\in[0,B]}|\hat{\ell}(\lambda)-\ell(\lambda)|\leq CB/k
\]
and 
$\max_x |b_x| \leq C\left(\sqrt{e}k/B\right)^{k}$,
where $C>1$ is a universal constant.
\item[(b)] Suppose $B>0$, $N\in\mathbb{N}^+$ and there exists constants $c_0,C_0>0$ such that $B\in[c_0\log N, C_0N]$.
Then, for any fixed $c_0\geq 100$ and any small $\epsilon\in(0,0.02)$
there exist constants $C(\epsilon)>0$ and $N(\epsilon)>1$ and a sequence of coefficients $\{b_x\}_{x=0}^\infty$ such that for $N\geq N(\epsilon)$
any $1$-Lipschitz function $\ell(\lambda)$ on $[0,B]$ with $\ell(0)=0$ can be approximated by 
 $\hat{\ell}(\lambda) = \sum_{x=0}^\infty b_x\frac{\lambda^xe^{-\lambda}}{x!}$ with  an uniform approximation error of
 $C(\epsilon)\sqrt{\frac{B}{\log N}}$ with $\max_x |b_x| \leq C(\epsilon)BN^\epsilon$. 
 \end{itemize}
\end{proposition}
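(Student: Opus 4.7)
The plan is to build $\hat\ell$ by first approximating $\ell$ on $[0,B]$ by an algebraic polynomial $p(\lambda)=\sum_{j=0}^{d} c_j\lambda^j$ of a suitable degree $d$, and then converting $p$ into the Poisson-PMF basis via the factorial-moment identity
$$\lambda^j=\sum_{x=j}^{\infty}(x)_j\,\frac{\lambda^x e^{-\lambda}}{x!},\qquad (x)_j:=x(x-1)\cdots(x-j+1),$$
which holds since $(X)_j$ has expectation $\lambda^j$ when $X\sim\mathrm{Poisson}(\lambda)$. Truncating at $x=k$ produces $\hat\ell(\lambda)=\sum_{x=0}^k b_x\,\frac{\lambda^x e^{-\lambda}}{x!}$ with $b_x=\sum_{j=0}^d c_j (x)_j$, plus a remainder $R(\lambda)=\sum_{j=0}^d c_j\sum_{x>k}(x)_j\,\frac{\lambda^x e^{-\lambda}}{x!}$ that must be shown to be dominated by the polynomial approximation error.

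For part (a), the plan is to take $d=k$. First I would invoke a Jackson-type theorem for $1$-Lipschitz functions on $[0,B]$ with $\ell(0)=0$ to obtain a polynomial $p$ of degree $k$ with $\|p-\ell\|_\infty\leq C_1 B/k$ and $\|p\|_\infty=O(B)$. Next I would represent $p$ in the Chebyshev basis rescaled to $[0,B]$; the Chebyshev coefficients are $O(B)$, and the standard monomial-expansion of $T_n$ yields bounds of the form $|c_j|\leq C_2\, 2^{O(k)} B^{-j}$. Combining $(x)_j\leq k^j$ with the sum $\sum_j (k/B)^j$ and applying Stirling's formula to absorb $k!/(B^k\cdots)$ into the clean expression $(\sqrt{e}\,k/B)^k$ yields the coefficient bound $|b_x|\leq C(\sqrt{e}\,k/B)^k$. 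Finally I would control $R(\lambda)$ using a Chernoff bound for Poisson tails: when $\lambda\leq B\leq k/4$, the tail $\sum_{x>k}(x)_j\,\mathrm{Poi}(x\mid\lambda)$ decays super-polynomially in $k$ uniformly over $j\leq k$, so $\|R\|_\infty$ is dominated by the Jackson error $C_1 B/k$.

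For part (b), the Jackson approximation is used at a substantially higher degree $d$, chosen so that the polynomial approximation error matches the target $\sqrt{B/\log N}$; this forces $d\asymp\sqrt{B\log N}$. The Poissonization is then truncated at some $k$ growing polynomially in $N$, chosen large enough that the remainder $R$ is negligible; this is possible because $\lambda\leq B$ and Poisson concentration ensures the factorial tail $(x)_j\,\mathrm{Poi}(x\mid\lambda)$ is tiny for $x\gg\lambda$ and $j\leq d\ll B$. The required coefficient bound $|b_x|\leq C(\epsilon) B N^\epsilon$ then follows from the Chebyshev coefficient estimates combined with $(x)_j\leq k^j$: the resulting factor of the form $2^{O(d)}(k/B)^d$ is converted into $N^\epsilon=\exp(\epsilon\log N)$ using $d=O(\sqrt{B\log N})$ and $k\leq N^{O(1)}$. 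The hypothesis $c_0\geq 100$ (so $B\geq 100\log N$) supplies the slack needed to absorb the $2^{O(d)}$ and $(k/B)^d$ prefactors into $N^\epsilon$ for arbitrarily small $\epsilon$.

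The main obstacle will be obtaining the sharp constant $\sqrt{e}$ in the coefficient bound of part (a), and the tight polynomial-in-$N$ coefficient estimate in part (b). Both tasks demand tracking the Chebyshev-to-monomial conversion precisely, combining it with Stirling's approximation of $k!$, and verifying that the tail of the Poissonization contributes only a lower-order term; the Chernoff bound for $\mathrm{Poi}(\lambda)$ must be applied jointly with the factorial weight $(x)_j$ uniformly in $j\leq d$. An alternative route, especially attractive for part (b), is to expand $p$ in the Charlier polynomials (the orthogonal polynomials for a reference Poisson measure); this makes the polynomial-to-PMF conversion diagonal and can simplify the coefficient bookkeeping at the cost of working with a less standard basis.
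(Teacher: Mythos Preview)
Your part (a) plan has a concrete gap in the remainder analysis. You assert that $\sum_{x>k}(x)_j\,\mathrm{Poi}(x\mid\lambda)$ decays super-polynomially in $k$ uniformly over $j\le k$, but this sum equals $\lambda^j\,P(\mathrm{Poi}(\lambda)>k-j)$, and for $j=k$ it is $\lambda^k(1-e^{-\lambda})$, of order $B^k$ when $\lambda$ is near $B$. More generally, once $k-j\lesssim B$ the ``tail'' probability is not small, while the Jackson coefficient bound $|c_j|\lambda^j\lesssim B\,k^j/j!$ does not compensate; with $d=k$ your truncation error $R(\lambda)=\sum_{j\le k}c_j\lambda^j\,P(\mathrm{Poi}(\lambda)>k-j)$ therefore cannot be shown to be $O(B/k)$ by this route. (Taking the Jackson degree $d=\alpha k$ with $\alpha$ small enough that $eB/k+e\alpha<1$ would repair both the remainder and the coefficient bound, but that is not what you propose.) The paper sidesteps the difficulty by a different construction: it sets $\hat\ell=p_k\,q_k\,e^{-\lambda}$ where $q_k(\lambda)=\sum_{x=0}^k\lambda^x/x!$ is the Taylor polynomial of $e^\lambda$. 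The error splits as $(\ell-p_k)+p_k\cdot(1-q_ke^{-\lambda})$, and the second term is simply the bounded function $p_k$ times the single tail $P(\mathrm{Poi}(\lambda)>k)$. The coefficient bound then follows by applying a Chebyshev--Markov coefficient estimate directly to the degree-$2k$ polynomial $p_kq_k$, whose sup-norm on $[-B,B]$ is $O(Be^B)$; Stirling converts $Be^B(2k/B)^{2k}$ into the stated form.

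Your part (b) plan fails more structurally. For the truncation remainder to be negligible you need the cutoff to exceed the Poisson mean, hence $k\gtrsim B$; but then for $x$ of that size your triangle-inequality bound gives
\[
|b_x|\;\lesssim\;B\sum_{j\le d}\binom{x}{j}\Big(\frac{d}{B}\Big)^j\;=\;B\Big(1+\frac{d}{B}\Big)^x\;\gtrsim\;B\,e^{c\,d}
\]
for $x\asymp B$ and $d/B\le 1/10$, and with $d\asymp\sqrt{B\log N}$ and $B\ge 100\log N$ this is at least $B\,N^{10c}$, far larger than $BN^\epsilon$. The factor ``$2^{O(d)}(k/B)^d$'' you hope to absorb into $N^\epsilon$ is likewise of order $\exp(\Theta(\sqrt{B\log N}))$; the large-$c_0$ hypothesis makes this worse, not better. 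The paper's proof of (b) is accordingly quite different: following \cite{han2020optimality} it partitions $[0,B]$ into about $\sqrt{B/\log N}$ subintervals, builds on each a \emph{local} Poisson polynomial via Charlier-type identities (essentially your ``alternative route''), and patches the pieces together using Poisson concentration between the nested intervals $I_m\subset I_m'\subset I_m''$. The localization is precisely what keeps each $b_x$ within $O((1+\sqrt{x})N^\epsilon/B)$ of $\ell(x/B)$ and hence yields $\max_x|b_x|=O(BN^\epsilon)$; a single global polynomial cannot achieve this in the regime $B\gtrsim\log N$.
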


\begin{proof}[Proof of Proposition~\ref{Proposition:BoundOnf1} (a)]
The following two facts are used in our proof.
\begin{fact}[Chapter 2.6 Equation 9 in \cite{timan2014theory}]\label{fact1:coeff-bound}
Suppose $k$ is a non-negative integer and $\lambda\mapsto p_k(\lambda)$ is a polynomial function with coefficients $c_0,\ldots,c_k$, i.e. $p_k(\lambda) := \sum_{x=0}^k c_x\lambda^x$. Then it follows that coefficients $\{c_x\}_{x=0}^k$ satisfy
\[
|c_x|\le \frac{k^x}{x!}\max_{|\lambda|\le 1}|p_k(\lambda)|.
\]
\end{fact}
\begin{fact}[Approximating $e^\lambda$ with Taylor expansion]\label{fact2:exp-approx-taylor}
Let $\lambda\in [0, B]$. For any $k\geq 2B$, it follows that
\[
e^{\lambda} - \sum_{x=0}^k \frac{\lambda^x}{x!} 
= \sum_{x=k+1}^\infty \frac{\lambda^x}{x!} 
=\frac{\lambda^k}{k!}\sum_{x=1}^\infty\left(\frac{\lambda}{k+x}\cdots\frac{\lambda}{k+1}\right)
\leq\frac{\lambda^k}{k!}\sum_{x=1}^\infty\frac{1}{2^x}
=\frac{\lambda^k}{k!}
\]
and hence
\[
|e^{\lambda} - \sum_{x=0}^k \frac{\lambda^x}{x!}|/e^\lambda \le \frac{\lambda^k}{k!e^\lambda} \leq \frac{B^k}{k!e^B}.
\]
%If we set $k=2L+1$, the error is $\exp(-\Omega(k))$.
\end{fact}
Applying Fact~\ref{fact2:exp-approx-taylor}, it holds that for any $k\geq 2B$, there exists a polynomial $q_k(\lambda)=\sum_{x=0}^k \lambda^x/x!$ of degree $k$ such that $|1 - q_k(\lambda)e^{-\lambda}|\leq B^k/(k!e^B)$ for all $\lambda\in [0,B]$. It is well known through Jackson's theorem (see Lemma~\ref{lem:jackson}) that for any $1$-Lipschitz function $\ell(\cdot)$ on $[0,B]$, there exists a polynomial $p_k(\lambda)$ of degree $k$ such that $\sup_{\lambda\in[-B,B]}|\ell(\lambda)-p_k(\lambda)|\leq C_1B/k$, where $\ell(\lambda):= -\ell(-\lambda)$ for $\lambda<0$ and $C_1>0$ is a universal constant independent of $k$ and $\ell$.
Combining $p_k(\lambda)$, $q_k(\lambda)$ and the fact that $|p_k(\lambda)|\leq B+C_1B/k\leq (1+C_1)B$, 
it follows that for $\lambda\in[0,B]$
\begin{eqnarray*}
|p_k(\lambda)q_k(\lambda)e^{-\lambda}-\ell(\lambda)| 
\leq |p_k(\lambda)(q_k(\lambda)e^{-\lambda}-1)|+|p_k(\lambda)-\ell(\lambda)|
%\leq (1+C_1)\frac{B}{k}\left(\frac{kB^{k}}{k! e^B}+1\right)
\leq (1+C_1)\frac{B}{k}\left(\frac{kB^{k}e^k}{\sqrt{k}k^k e^B}+1\right), %\label{eqn:sterling},
\end{eqnarray*}
where the last inequality follows from $k!\geq \sqrt{k}\left(k/e\right)^k$ for $k\geq 2$ by Stirling's approximation.
It further follows from the increasing monotonicity of $B\mapsto B^k/e^B$ for $B\leq k/2$ that
\[
\sqrt{k}B^{k}e^k/(k^k e^B)
\leq
\sqrt{k}(k/2)^{k}e^k/(k^k e^{k/2})
=\sqrt{k}\left(\sqrt{e}/2\right)^k
<1,
\]
where the last inequality holds for all $k\geq 2$,
and hence
\[
|p_k(\lambda)q_k(\lambda)e^{-\lambda}-\ell(\lambda)|
\leq
2(1+C_1)B/k
\]
for $k\geq 2(B\vee 1)$.
Therefore, we have shown that for any $k\geq 2(B\vee 1)$, there exists a function 
\[
\hat{\ell}(\lambda) = p_k(\lambda)q_k(\lambda)e^{-\lambda} = \sum_{x=0}^{2k}b_x\frac{\lambda^xe^{-\lambda}}{x!}
\]
such that $|\hat{\ell}(\lambda)-\ell(\lambda)| \leq 2(1+C_1)B/k$. 
For the bounded on the coefficients $b_x$, first let us define the polynomial $r(\lambda) := p_k(B\cdot \lambda)q_k(B\cdot \lambda) = \sum_{x=0}^{2k}b'_x\frac{\lambda^x}{x!}$.
Note that $b_x=b'_x/B^x$, 
\begin{eqnarray*}
|r(\lambda)|\leq\left(B+2(1+C_1)B/k\right)e^B
\leq2(1+C_1)Be^B
\end{eqnarray*} 
for $\lambda\in[0,1]$ and 
\begin{eqnarray*}
|r(\lambda)|
\leq|q_k(B\cdot \lambda)|(1+C_1)B
\leq\left(1+\frac{B^{k+1}}{(k+1)!}\right)(1+C_1)B
%\leq \left(1+\frac{(eB/(k+1))^{k+1}}{\sqrt{k+1}}\right)(1+C_1)B
%\leq \left(1+\frac{(eB/(2B+1))^{2B+1}}{\sqrt{k+1}}\right)(1+C_1)B
%\leq \left(1+(e/2)^{2B+1}\right)(1+C_1)B
\leq \left(1+e/2e^B\right)(1+C_1)B
\leq 3(1+C_1)Be^B
\end{eqnarray*}
 for $\lambda\in[-1,0)$.
Then we can apply Fact~\ref{fact1:coeff-bound} for the polynomial $r(\lambda)$, which implies that 
\begin{eqnarray*}
\frac{|b'_x|}{x!}\le \frac{(2k)^x}{x!}\max_{|\lambda|\leq 1}|r(\lambda)| 
\leq\frac{(2k)^x}{x!}3(1+C_1)Be^B,
\end{eqnarray*}
and hence
\[
\max_x |b_x|
= \max_x \frac{|b'_x|}{B^x} 
\leq\max_x\left(\frac{2k}{B}\right)^x3(1+C_1)Be^B
=3(1+C_1)\left(\frac{2k}{B}\right)^{2k}Be^B
\leq 3(1+C_1)\left(\frac{2\sqrt{e}k}{B}\right)^{2k},
\]
where the last inequality follows from $B\leq k/2\leq \exp(k/2)$ and $e^B\leq \exp(k/2)$.
\end{proof}

\begin{proof}[Proof of Proposition~\ref{Proposition:BoundOnf1} (b)]
%Without loss of generality, it is assumed that $\ell(0)=0$.
Since $B\geq c_0\log N$, we have $B\geq 1$ for sufficiently large $N$.
Note that $\lambda\mapsto\frac{1}{B}\ell(B\lambda)$ is a Lipschitz-$1$ function on $[0,1]$.
By Proposition~\ref{Proposition:BoundOnf1.5}, it follows that there exists a sequence of coefficients $\{b_x\}_{x=0}^\infty$ such that
\[
\left|\frac{1}{B}\ell(B\lambda)- \sum_{x=0}^\infty b_x \Pr(\text{Poi}(B\lambda) = x)\right|\le C(\epsilon)\sqrt{\frac{1}{B\log N}}, \text{ for any }\lambda \in [0, 1],
\]
where $b_x = 0$ for $x > 4B$, and %$|b_j|\le N^{1/2-\epsilon}$.
\[
\left|b_x-\frac{1}{B}\ell\left(B\cdot\frac{x}{B}\right)\right|\leq \frac{C(\epsilon)(1+x^{1/2})N^\epsilon}{B}, \text{ for }x\leq 4B.
\]
Defining $b^*_x=Bb_x$ and replacing $B\lambda$ by $\lambda$, it follows that
\[
|\ell(\lambda)- \sum_{x=0}^\infty b^*_x \Pr(\text{Poi}(\lambda) = x)|\le C(\epsilon)\sqrt{\frac{B}{\log N}}, \text{ for any }\lambda \in [0, B],
\]
where $b^*_x = 0$ for $x > 4B$, and %$|b_j|\le N^{1/2-\epsilon}$.
\[
\left|b^*_x-\ell(x)\right|\leq C(\epsilon)(1+x^{1/2})N^{\epsilon}, \text{ for }x\leq 4B.
\]
Moreover, It follows from the triangle inequality that $|b^*_x|\leq 4B+C(\epsilon)(1+2B^{1/2})N^{\epsilon}= O(BN^\epsilon)$.
\end{proof}

The following proposition is an extension of \citet[Lemma 3]{wu2016minimax}; see also \citet[Section 3.3]{wu2020polynomial} for a nice survey.
%The same proof techniques are also used in \citet[Lemma 11]{jiao2015minimax}.
\begin{proposition}[Lemma 32 in~\cite{jiao2018minimax}]
\label{Lemma:EssentialLemmaInLowerBound}
Suppose $U_0$, $U_1$ are two random variables supported on $[a - M, a + M]$, where $a \geq M \geq 0$ are constants. Suppose $E\{U^j_0\}= E\{U^j_1\}, 0 \leq j \leq L$. 
Denote the marginal distribution of $X$ where $X|\lambda\sim Poi(\lambda)$, $\lambda \sim U_i$ as $F_i$, where $i=0,1$. If $L + 1 \ge (2eM)
^2/a$, then
$
\text{TV}(F_0, F_1) \leq 2(eM/\sqrt{a(L+1)})^{L+1}
$
\end{proposition}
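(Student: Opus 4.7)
The plan is to convert the total variation bound into a polynomial approximation problem via moment matching and then exploit an orthogonality structure specific to the Poisson family. Writing $f_x(\lambda) := e^{-\lambda}\lambda^x/x!$, the matching-moments hypothesis yields, for any family of polynomials $\{p_x(\lambda)\}$ of degree at most $L$,
\begin{equation*}
F_0(x) - F_1(x) = \int \big(f_x(\lambda) - p_x(\lambda)\big)\, d(U_0 - U_1)(\lambda).
\end{equation*}
Using $\|U_0 - U_1\|_{\text{TV}} \leq 2$ and the triangle inequality, this reduces the task to bounding $\sup_{\lambda \in [a-M,\, a+M]}\sum_x |f_x(\lambda) - p_x(\lambda)|$. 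I would then apply Cauchy--Schwarz with the $\text{Poi}(a)$ weights $f_x(a)$ to dominate this by $\big(\sum_x (f_x(\lambda) - p_x(\lambda))^2/f_x(a)\big)^{1/2}$, recasting everything as a $\chi^2$-type residual with respect to $\text{Poi}(a)$.

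The next step is to choose $p_x(\lambda)$ as the degree-$L$ Taylor polynomial of $\lambda \mapsto f_x(\lambda)$ at $\lambda = a$. With $s := \lambda - a$, the relevant expansion is $f_x(a+s)/f_x(a) = e^{-s}(1+s/a)^x =: \sum_{n \geq 0} \varphi_n(x)\, s^n/n!$, where one checks that each $\varphi_n(x)$ is a polynomial of degree $n$ in $x$. The key ingredient is the closed-form bilinear identity
\begin{equation*}
\sum_{x=0}^\infty \frac{f_x(a+s)\, f_x(a+s')}{f_x(a)} = e^{s s'/a},
\end{equation*}
which follows from a direct manipulation of the Poisson moment generating function. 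Matching coefficients of $s^m s'^n$ on the two sides delivers the orthogonality $\sum_x f_x(a)\varphi_m(x)\varphi_n(x) = \delta_{mn}\, n!/a^n$. The Taylor remainder $R_L(x, s) := \sum_{n > L}\varphi_n(x)\, s^n/n!$ therefore satisfies $\sum_x f_x(a) R_L(x, s)^2 = \sum_{n > L} (s^2/a)^n/n!$, and since $f_x(\lambda) - p_x(\lambda) = f_x(a)\,R_L(x, s)$ by construction, the Cauchy--Schwarz bound simplifies to $\big(\sum_{n > L} (s^2/a)^n/n!\big)^{1/2}$.

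Taking the supremum over $|s| \leq M$ and invoking the hypothesis $L + 1 \geq (2eM)^2/a$, successive terms in the series have ratio $(M^2/a)/(n+1) \leq 1/(4e^2) < 1/2$ for $n \geq L$, so the entire tail is dominated by twice its leading term: $\sum_{n > L} (M^2/a)^n/n! \leq 2 (M^2/a)^{L+1}/(L+1)!$. A final application of Stirling's bound $(L+1)! \geq ((L+1)/e)^{L+1}$ then delivers $\text{TV}(F_0, F_1) \leq \sqrt{2}\,\big(\sqrt{e}\, M/\sqrt{a(L+1)}\big)^{L+1}$, which is comfortably below the stated target $2(eM/\sqrt{a(L+1)})^{L+1}$. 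The main obstacle I anticipate is the clean derivation of the orthogonality used in the second paragraph; the bilinear Poisson identity above offers the shortest path and sidesteps any need to invoke Charlier polynomial theory as a black box.
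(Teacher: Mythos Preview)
The paper does not prove this proposition; it simply imports it as Lemma~32 of \cite{jiao2018minimax} and uses it as a black box in the lower-bound arguments for Theorem~\ref{theorem:LowerBoundOfMixing}. Your proposal therefore goes well beyond what the paper supplies, and it is correct as written.

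A couple of remarks on the details. The reduction from $\sum_x|F_0(x)-F_1(x)|$ to $2\sup_\lambda\sum_x|f_x(\lambda)-p_x(\lambda)|$ requires a Tonelli swap of $\sum_x$ and $\int d|U_0-U_1|$ before taking the supremum; you glossed over this but it is routine. Your bilinear identity $\sum_x f_x(a+s)f_x(a+s')/f_x(a)=e^{ss'/a}$ is exactly the generating-function characterization of the Charlier orthogonality, so your ``shortest path'' is in fact the standard one in the cited reference, just derived from first principles rather than named. The final constant comparison is fine: with the $\ell_1$ convention for TV you pick up an extra factor of~$2$, giving $2\sqrt{2}(\sqrt{e}M/\sqrt{a(L+1)})^{L+1}$, and since $\sqrt{2}\,e^{(L+1)/2}\le e^{L+1}$ for all $L\ge 0$ this is still dominated by the stated bound $2(eM/\sqrt{a(L+1)})^{L+1}$.
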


\begin{proposition}
\label{Proposition:BoundOnf1.5}
Suppose $B>0$ and $N\in\mathbb{N}^+$ and there exists constants $c_0,C_0>0$ such that $B\in[c_0\log N, C_0N]$.
Let $\ell(\cdot)$ be any Lipschitz-$1$ function on $\mathbb{R}$ with $\ell(0)=0$.
Then, for any fixed $c_0\geq 96$ and any small $\epsilon\in(0,0.02)$ there exist positive constants $C(\epsilon)>0$ and $N(\epsilon)>1$ depending on $\epsilon$ and a sequence of coefficients $\{b_x\}_{x=0}^\infty$ such that the following inequality holds for $N\geq N(\epsilon)$, i.e.
\begin{eqnarray}
|\ell(\lambda) - \sum_{x=0}^\infty b_x \Pr(\text{Poi}(B\lambda) = x)|\leq C(\epsilon)\sqrt{\frac{1}{B\log N}}, \lambda \in [0, 1]
\label{Eqn:UpperBoundOfErrorInPoissonPoly}
\end{eqnarray}
where $b_x=0$ for $x>4B$, and
\begin{eqnarray}
\left|b_x-\ell\left(\frac{x}{B}\right)\right|\leq C(\epsilon)(1+x^{1/2})\frac{N^\epsilon}{B}, \text{ for any }x\leq 4B.\label{Eqn:UpperBoundOfCoefficientsInPoissonPoly}
\end{eqnarray}
\end{proposition}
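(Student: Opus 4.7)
The plan is to construct $\{b_x\}$ via a polynomial approximation of $\ell$ combined with the factorial-moment identity for Poisson distributions, followed by truncation at $4B$ using Poisson concentration. Apply Jackson's theorem on $[0,4]$ (which covers the range $x/B\in[0,4]$ under the truncation $x\le 4B$) to obtain a polynomial $p_k(\lambda)=\sum_{m=0}^k a_m\lambda^m$ of degree $k$ with $\|\ell-p_k\|_{\infty,[0,4]}\le C_1/k$; an application of Fact~\ref{fact1:coeff-bound} after rescaling $[-1,1]\to[0,4]$ yields the coefficient bound $|a_m|\le C_2\, e^k (k/2)^m/m!$. The degree $k$ will be tuned as a multiple of $\log N$ so as to simultaneously balance the Jackson error against the growth of the monomial coefficients, leveraging the lower bound $c_0\ge 96$ on $B/\log N$ to keep both terms under control.

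Next, use the Poisson factorial-moment identity $\lambda^m = B^{-m}\mathrm{E}[X^{(m)}]$ for $X\sim\mathrm{Poi}(B\lambda)$ with $x^{(m)}=x(x-1)\cdots(x-m+1)$ to obtain the exact representation
\[
p_k(\lambda)=\sum_{x=0}^\infty f(x)\,\Pr(\mathrm{Poi}(B\lambda)=x),\qquad f(x):=\sum_{m=0}^k \frac{a_m\, x^{(m)}}{B^m},
\]
and set $b_x=f(x)\mathbf{1}_{\{x\le 4B\}}$. The total error in \eqref{Eqn:UpperBoundOfErrorInPoissonPoly} then splits into the Jackson error and the truncation tail $\sum_{x>4B}f(x)\Pr(\mathrm{Poi}(B\lambda)=x)$. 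For $\lambda\in[0,1]$ the latter is handled by Cauchy--Schwarz against the Chernoff tail $\Pr(\mathrm{Poi}(B\lambda)\ge 4B)\le(e/4)^{4B}\le N^{-c_0\log(4/e)}$ and the moment-generating bound $\mathrm{E}[e^{tX}]=\exp(B\lambda(e^t-1))$; the assumption $c_0\ge 96$ makes $N^{-c_0\log(4/e)}$ polynomially small enough to absorb the $N^{O(\epsilon)}$ blow-up of $|f(x)|$ and to render the truncation contribution negligible compared with $\sqrt{1/(B\log N)}$.

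Finally, for the coefficient estimate \eqref{Eqn:UpperBoundOfCoefficientsInPoissonPoly}, decompose $b_x-\ell(x/B)=[f(x)-p_k(x/B)]+[p_k(x/B)-\ell(x/B)]$; the second term is $\le C_1/k$ by Jackson. For the first, the identity $x^{(m)}-x^m=O(m^2 x^{m-1})$ (which follows from the expansion of $x^{(m)}$ in Stirling numbers of the first kind) gives
\[
|f(x)-p_k(x/B)|\le \frac{1}{B}\sum_{m=2}^k|a_m|\,m^2\,(x/B)^{m-1},
\]
and substitution of the coefficient bound from the first paragraph evaluates this to $O((1+\sqrt{x})N^\epsilon/B)$. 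The main obstacle is a delicate three-way tradeoff in selecting $k$: the Jackson error $1/k$ must not exceed the target $\sqrt{1/(B\log N)}$ (pushing $k$ up), the factor $e^k$ appearing in $|a_m|$ must remain $\le N^\epsilon$ (pushing $k$ down toward $\epsilon\log N$), and the truncation tail must stay below the target. The hypothesis $c_0\ge 96$ is precisely what creates slack to make these demands compatible, with the resulting dependence on $\epsilon$, $c_0$, and the Jackson constants absorbed into $C(\epsilon)$; tightening the polynomial approximation to cover the full range $B\in[c_0\log N, C_0 N]$ may further require exploiting the fact that only the endpoint behaviour of $p_k$ on $[0,4]$ matters and sharper Bernstein-type coefficient estimates for Jackson polynomials approximating merely Lipschitz functions.
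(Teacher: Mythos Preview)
Your global-Jackson approach has a genuine gap: the tradeoff you flag at the end cannot be closed. With a single polynomial on $[0,4]$, Jackson's theorem gives uniform error $O(1/k)$, and your coefficient constraint (the factor $e^{O(k)}$ in $|a_m|$ must stay below $N^\epsilon$) forces $k\le c\epsilon\log N$. Hence the approximation error you can reach is at best $O(1/\log N)$. But the target in \eqref{Eqn:UpperBoundOfErrorInPoissonPoly} is $\sqrt{1/(B\log N)}$, and since $B$ ranges up to $C_0N$ this target can be as small as $N^{-1/2+o(1)}$. The hypothesis $c_0\ge 96$ only bounds $B$ from \emph{below}; it provides no slack when $B\gg\log N$, which is exactly the regime where your Jackson error $1/\log N$ is far too large. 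No sharper coefficient bound for the global Jackson polynomial rescues this, because the obstruction is the \emph{approximation} side, not the coefficient side.

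The paper circumvents this by localization: it partitions $[0,1]$ into $M=\sqrt{B/(c_1\log N)}$ intervals $I_m$ of shrinking length $O(m\log N/B)$, applies Jackson with degree $D=c_2\log N$ \emph{locally} on each $I_m'$ (so the local error is $|I_m'|/D\sim m/B\le \sqrt{1/(B\log N)}$), converts each local polynomial to a Poisson polynomial via Charlier-type identities (Lemma~\ref{Lemma:CharlierPolynomials}), and then stitches the pieces together using the Poisson-splitting trick $\mathrm{Poi}(B\lambda)=\mathrm{Poi}(B\lambda/2)+\mathrm{Poi}(B\lambda/2)$ and tail bounds (inequalities \eqref{Formula:ExpUpperBounded1}--\eqref{Formula:ExpUpperBounded3}). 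The localization is what lets the polynomial degree stay $O(\log N)$ (so coefficients remain $O(N^\epsilon)$) while still achieving the much smaller error $\sqrt{1/(B\log N)}$; your proposal is missing this idea entirely.
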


\begin{proof}[Proof of Proposition~\ref{Proposition:BoundOnf1.5}]
Note that Proposition~\ref{Proposition:BoundOnf1.5} is an analogue of Theorem 5 in~\cite{han2020optimality} and these lemmas below will be used in the following proof.

\begin{lemma}[Jackson’s theorem, Lemma 10 of \cite{han2020optimality}]% or see~\cite{devore1976degree}]
\label{lem:jackson} 
Let $k>0$ be any integer, and $[a, b] \subseteq \mathbb{R}$ be any
bounded interval. For any Lipschitz-$1$ function $\ell(\cdot)$ on $[a, b]$, there exists a universal constant $C$ independent of $k, \ell$ such that there exists a polynomial $p_k(\cdot)$ of degree at most $k$ such that
\begin{align}
|\ell(\lambda)-p_k(\lambda)|\leq C\sqrt{(b-a)(\lambda-a)}/k, \; \forall \lambda\in [a,b].
\end{align}
In particular, the following norm bound holds:
\begin{align}
\sup_{\lambda\in [a,b]}|\ell(\lambda)-p_k(\lambda)|\leq C(b-a)/k.
\end{align}
\end{lemma}

\begin{lemma}[Lemma 11 of \cite{han2020optimality}]% or Lemma 28 of \cite{7840425}]
\label{CoefficientBound} 
Let $p_k(\lambda)=\sum_{x=0}^ka_x \lambda^x$ be a polynomial of degree at most $k$ such that $|p_k(\lambda)|\leq A$ for $\lambda\in[a,b]$.
Then 
\begin{enumerate}
\item If $a+b\neq0$, then 
\[
|a_x|\leq 2^{7k/2}A\left|\frac{a+b}{2}\right|^{-x}\left(\left|\frac{b+a}{b-a}\right|^k+1\right),\ \ \ \ x=0,\cdots,k.
\]
\item If $a+b=0$, then 
$
|a_x|\leq Ab^{-x}(\sqrt{2}+1)^{k},\ \ \ \ x=0,\cdots,k.
$
\end{enumerate}
\end{lemma}

\begin{lemma}[Poisson tail inequality, Lemma 12 of \cite{han2020optimality}]
\label{Lemma:PoissonTailInequality} 
For $X\sim\text{Poi}(\lambda)$ and any $\delta>0$, we have
\begin{eqnarray*}
P(X\geq (1+\delta)\lambda)\leq\exp\left(-(\delta^2\wedge \delta)\lambda/3\right)\text{~~~~and~~~~}
P(X \leq (1-\delta)\lambda)\leq\exp\left(-\delta^2\lambda/3\right).
\end{eqnarray*}
\end{lemma}

\begin{lemma}[Lemma 15 of \cite{han2020optimality}]% or Lemma 30 of \cite{7840425}]
\label{Lemma:CharlierPolynomials} 
Define
\[
g_{d,x}(z):=\sum_{d^\prime=0}^d{d \choose d^\prime}(-x)^{d-d^\prime}\prod_{d^{\prime\prime}=0}^{d^\prime-1}\left(z-\frac{2d^{\prime\prime}}{n}\right)
\]
with $d\in\mathbb{N},x\in[0,1]$.
Then for any $z\in[0,1]$ and $d\geq 1$, the following identity holds:
\[
g_{d,x}\left(z+\frac{2}{n}\right)-g_{d,x}(z)=\frac{2d}{n}g_{d-1,x}(z).
\]
Moreover, if $nz/2\in\mathbb{N}$ and $\max\{|z-x|,8d/n,\sqrt{8zd/n}\}\leq \Delta$, then 
$
|g_{d,x}(z)|\leq (2\Delta)^d.
$
\end{lemma}
\medskip

\par\noindent
Now we start our proof. Since $B\geq c_0\log N$, we have $B\geq 1$ for sufficiently large $N$.
Note that the proof here is an analog of proof of Theorem 5 in \cite{han2020optimality}, but has much more details. 
We could omit the following proof, but for the completeness of this paper, we decide to write it down.

We shall construct the following local intervals: for $c_1:=c_0/4$ and $m=1,2,\cdots,M:=\sqrt{B/(c_1\log N)}$, define
\begin{align*}
&I_m:=\left[\frac{c_1\log N}{B}\cdot (m-1)^2,\frac{c_1\log N}{B}\cdot m^2\right],
~~~~I_m^\prime:=\left[\frac{c_1\log N}{B}\cdot (m-4/3)_+^2,\frac{c_1\log N}{B}\cdot (m+1/3)^2\right],\\
&I_m^{\prime\prime}:=\left[\frac{c_1\log N}{B}\cdot (m-2)_+^2,\frac{c_1\log N}{B}\cdot (m+1)^2\right],
\end{align*}
and without loss of generality we assume that $M$ is an integer.
Note that $M\geq 2$. 
We shall also define
\[
\lambda_m:=\frac{c_1\log N}{B}\cdot \frac{(m-4/3)_+^2+(m+1/3)^2}{2}
\]
to be the center of $I_m^\prime$. %the $m$-the local interval.
Note that $I_m\subset I^\prime_m\subset I_m^{\prime\prime}$, and it follows from Lemma~\ref{Lemma:PoissonTailInequality} that for
$m\geq 2$,
\begin{eqnarray}
P(\text{Poi}(B\lambda)\notin BI_m^\prime|\lambda\in I_m)
%&=&\mathbb{P}(\text{Poi}(B\lambda)\notin [c_1\log N\cdot (m-4/3)_+^2,c_1\log N\cdot (m+1/3)^2]|B\lambda\in [c_1\log N\cdot (m-1)^2,c_1\log N\cdot m^2])\\
%&=&\mathbb{P}(\text{Poi}(B\lambda)<c_1\log N\cdot (m-4/3)_+^2|B\lambda\in [c_1\log N\cdot (m-1)^2,c_1\log N\cdot m^2])\\
%& &+\mathbb{P}(\text{Poi}(B\lambda)>c_1\log N\cdot (m+1/3)^2|B\lambda\in [c_1\log N\cdot (m-1)^2,c_1\log N\cdot m^2])\\
%&\leq&\mathbb{P}(\text{Poi}(B\lambda)<c_1\log N\cdot (m-4/3)_+^2|B\lambda=c_1\log N\cdot (m-1)^2)\\
%& &+\mathbb{P}(\text{Poi}(B\lambda)>c_1\log N\cdot (m+1/3)^2|B\lambda=c_1\log N\cdot m^2)\\
%&\leq&\exp\left(-\frac{1}{3}\left(\frac{\frac{2}{3}m-\frac{7}{9}}{(m-1)^2}\right)^2c_1\log N\cdot (m-1)^2\right)
%+\exp\left(-\frac{1}{3}\left(\frac{\frac{2}{3}m+\frac{1}{9}}{m^2}\right)^2c_1\log N\cdot m^2\right)\\
%&\leq&2\exp\left(-\frac{c_1\log N}{27}\right)\\
\leq 2N^{-c_1/27}.\label{Formula:ExpUpperBounded1}
\end{eqnarray}
For $m=1$, it can be analogous to verify that the last display holds for $m=1$ and hence it holds for $m=1,\ldots,M$.
%\begin{eqnarray}
%P(\text{Poi}(B\lambda)\notin BI_m^\prime|\lambda\in I_m)\leq 2N^{-c_1/27}\label{Formula:ExpUpperBounded1}
%\end{eqnarray}
%and hence the last display holds for $m=1,\ldots,M$.
Analogously, we have the following inequalities: for $m=1,\ldots,M$
\begin{eqnarray}
P(\text{Poi}(B\lambda)\notin BI_m^{\prime\prime}|\lambda\in I_m^\prime)\leq 2N^{-c_1/3}%\exp\left(-\frac{c_1\log N}{3}\right)
\label{Formula:ExpUpperBounded2}
\end{eqnarray}
and
\begin{eqnarray}
P(\text{Poi}(B\lambda)\notin BI_m|\lambda\in I_m-I_{m-1}^\prime-I_{m+1}^{\prime})
\leq 2N^{-c_1/12}%2\exp\left(-\frac{c_1\log N}{12}\right)
.\label{Formula:ExpUpperBounded3}
\end{eqnarray}

Now we use the local Poisson polynomial on each local interval $I_m^\prime$ constructed in Lemma~\ref{Lemma:LocalPoissonPolynomial} to prove Proposition~\ref{Proposition:BoundOnf1.5}.

We assume that $\sum_{x=0}^\infty b_x^{(m)}P(\text{Poi}(B\lambda/2)=x)$ is the Poisson polynomial given by Lemma \ref{Lemma:LocalPoissonPolynomial} on the $m$-th local interval $I_m^\prime$, with $B$ replaced by $B/2$. Now consider the following Poisson polynomial:
\begin{eqnarray}
p(\lambda):= \sum_{x=0}^\infty b_x P(\text{Poi}(B\lambda)=x)\text{ with }
b_x:= \frac{1}{2^x}\sum_{m=1}^M\sum_{k\in BI_m/2}{x \choose k}b_{x-k}^{(m)}.
\label{Eqn:CoefficientOfRealB}
\end{eqnarray}
We claim that the above polynomial with coefficients in (\ref{Eqn:CoefficientOfRealB}) satisfies Proposition~\ref{Proposition:BoundOnf1.5}.
We first verify the inequality (\ref{Eqn:UpperBoundOfErrorInPoissonPoly}). Using a change of variable $j=x-k$, we have
\begin{eqnarray*}
p(\lambda)
%&=&\sum_{x=0}^\infty\frac{1}{2^x}\sum_{m=1}^M\sum_{k\in BI_m/2}{x \choose k}b_{x-k}^{(m)}\cdot \mathbb{P}(\text{Poi}(B\lambda)=x)\\
%&=&\sum_{m=1}^M\sum_{j=0}^\infty b_j^{(m)} \sum_{k\in BI_m/2}\frac{1}{2^{k+j}}{k+j \choose k}\cdot \mathbb{P}(\text{Poi}(B\lambda)=k+j)\\
%&=&\sum_{m=1}^M\sum_{j=0}^\infty b_j^{(m)} \sum_{k\in BI_m/2}\frac{1}{2^{k+j}}\cdot e^{-B\lambda}\frac{(B\lambda)^{k+j}}{k!j!}\\
%&=&\sum_{m=1}^M\sum_{j=0}^\infty b_j^{(m)} \mathbb{P}(\text{Poi}(B\lambda/2)=j)\sum_{k\in BI_m/2}\frac{1}{2^{k}}\cdot e^{-B\lambda/2}\frac{(B\lambda)^{k}}{k!}\\
=\sum_{m=1}^MP(\text{Poi}(B\lambda/2)\in BI_m/2)\sum_{j=0}^\infty b_j^{(m)}P(\text{Poi}(B\lambda/2)=j).
\end{eqnarray*}
Since $I_m$ constitutes a partition of $[0,1]$, for $\lambda\in[0,1]$ there exists $m^*=1,2,\cdots,M$, such that $\lambda\in I_{m^*}$.
We distinguish into three cases:
\begin{itemize}
\item
Case 1: $\lambda\in I_{m^*}-I_{m^*-1}^\prime-I_{m^*+1}^\prime$. By (\ref{Formula:ExpUpperBounded3}), we have $P(\text{Poi}(B\lambda/2)\notin BI_{m^*}/2)\leq 2N^{-1}$ since $c_1=c_0/4\geq 24$, and therefore $P(\text{Poi}(B\lambda/2)\in BI_{m}/2)\leq 2N^{-2}$ for any $m\neq m^*$. Hence,
\begin{eqnarray*}
|\ell(\lambda)-p(\lambda)|
%&\leq&
%\left|l(\lambda)-\sum_{j=0}^\infty b_j^{(m^*)}\mathbb{P}(\text{Poi}(B\lambda/2)=j)\right|\\
%& &\ \ \ \ +\mathbb{P}(\text{Poi}(B\lambda/2)\notin BI_{m^*}/2)\cdot \left|\sum_{j=0}^\infty b_j^{(m^*)}\mathbb{P}(\text{Poi}(B\lambda/2)=j)\right|\\
%& &\ \ \ \ +\sum_{m\neq m^*}\mathbb{P}(\text{Poi}(B\lambda/2)\in BI_m/2)\left|\sum_{j=0}^\infty b_j^{(m)}\mathbb{P}(\text{Poi}(B\lambda/2)=j)\right|\\
%&\leq&C(\epsilon)\sqrt{\frac{\lambda}{B\log N}}+4N^{-2}\left(1+\frac{C(\epsilon)(1+\sqrt{B})N^{\epsilon}}{B}\right)\\%\sum_{m=1}^M\frac{C(\epsilon)N^{\epsilon}}{B^{1/2}}\\
%%&\leq&C(\epsilon)\sqrt{\frac{\lambda}{B\log N}}+4N^{-2}\left(1+\frac{2C(\epsilon)N^{\epsilon}}{\sqrt{B}}\right)\\
\leq C(\epsilon)\sqrt{\frac{\lambda}{B\log N}}+4N^{-2}\left(1+2C(\epsilon)N^{\epsilon}\right)
%%&=&C\sqrt{\frac{\lambda}{B\log N}}+N^{-2/3}\sqrt{\frac{B}{\log N}}\frac{C(\epsilon)N^{\epsilon}}{B^{1/2}},
\end{eqnarray*}
where we have used (\ref{Eqn:BoundOfCoefficientInLocalPoissonPolynomial}) in the second last inequality. As a result, the desired approximation error in (\ref{Eqn:UpperBoundOfErrorInPoissonPoly}) holds.
\item Case II: $\lambda\in I_{m^*}\cap I_{m^*+1}^\prime$. In this case, Lemma~\ref{Lemma:PoissonTailInequality} 
%(\ref{Formula:ExpUpperBounded3}) 
gives $P(\text{Poi}(B\lambda/2)\in BI_m/2)\leq N^{-2}$ for any $m\notin \{m^*,m^*+1\}$.
Consequently,
\begin{eqnarray*}
|\ell(\lambda)-p(\lambda)|&\leq&
P(\text{Poi}(B\lambda/2)\in BI_{m^*}/2)\left|\ell(\lambda)-\sum_{j=0}^\infty b_j^{(m^*)}P(\text{Poi}(B\lambda/2)=j)\right|\\
& &+P(\text{Poi}(B\lambda/2)\in BI_{m^*+1}/2)\left|\ell(\lambda)-\sum_{j=0}^\infty b_j^{(m^*+1)}P(\text{Poi}(B\lambda/2)=j)\right|\\
& &+\sum_{m\neq m^*,m^*+1}P(\text{Poi}(B\lambda/2)\in BI_m/2)\left|\sum_{j=0}^\infty b_j^{(m)}P(\text{Poi}(B\lambda/2)=j)\right|,
\end{eqnarray*}
and using Lemma~\ref{Lemma:LocalPoissonPolynomial} and the same concentration bounds gives (\ref{Eqn:UpperBoundOfErrorInPoissonPoly}).
\item Case III: $\lambda\in I_{m^*}\cap I_{m^*-1}^\prime$. This case is entirely symmetric to Case II.
\end{itemize}
Combining the above three cases, we arrive at the inequality (\ref{Eqn:UpperBoundOfErrorInPoissonPoly}).

Next we verify the coefficient bound (\ref{Eqn:UpperBoundOfCoefficientsInPoissonPoly}). By Lemma~\ref{Lemma:LocalPoissonPolynomial}, it is clear from the definition that $b_x=0$ whenever $x\notin \cup_{m=1}^M BI_{m}^{\prime\prime}$ and hence $b_x=0$ for $x\geq 4B$.
Fix any $x\geq 0$ such that $b_x\neq 0$, assume that $x\in BI_{m^*}^{\prime\prime}$ (if there are multiple choices of $m^*$, pick an arbitrary one).
We claim that any other $m=1,2,\cdots,M$ such that $|m-m^*|\geq 5$ do not contribute to $b_x$ in the summation (\ref{Eqn:CoefficientOfRealB}).
In fact, if there is non-zero coefficient $b_{x-k}^{(m)}$ in (\ref{Eqn:CoefficientOfRealB}), we must have
\begin{align*}
&x\in BI_{m^*}^{\prime\prime}=c_1\log N\cdot \left[(m^*-2)_+^2,(m^*+1)^2\right],\text{~~~~}
k\in BI_{m}/2=\frac{c_1\log N}{2}\cdot \left[(m-1)^2,m^2\right],\\
&x-k\in BI_{m}^{\prime\prime}/2=\frac{c_1\log N}{2}\cdot \left[(m-2)_+^2,(m+1)^2\right].
\end{align*}
Summing up, we must have 
at least one of 
\begin{eqnarray*}
(m^*-2)^2_+\leq\frac{(m-1)^2+(m-2)^2_+}{2},\text{~~~~}
(m^*+1)^2\geq\frac{m^2+(m+1)^2}{2},
\end{eqnarray*}
 will fail whenever $|m-m^*|\geq 5$.
Hence, there exists constants $C_1,C_2$ such that 
\[
|b_x|\leq \frac{1}{2^x}\sum_{m=1}^M\sum_{k\in BI_m/2}{x\choose k}|b_{x-k}^{(m)}|
\leq
C_1\max_{m:|m-m^*|\leq 4}\max_{j\geq 0}|b_j^{(m)}|
%\leq 
%\frac{C_1(1+m^*)N^{\epsilon}}{B}
\leq
\frac{C_2C(\epsilon)(1+x^{1/2})N^{\epsilon}}{B}
\]
establishing (\ref{Eqn:UpperBoundOfCoefficientsInPoissonPoly}).
\end{proof}

\begin{lemma}
\label{Lemma:LocalPoissonPolynomial}
Suppose $B>0$ and $N\in\mathbb{N}^+$ and there exists constants $c_0,C_0>0$ such that $B\in[c_0\log N, C_0N]$.
Let $\ell(\cdot)$ be any Lipschitz-$1$ function on $\mathbb{R}$ with $\ell(0)=0$.
Then, for any fixed $c_0\geq 16$ and any small $\epsilon\in(0,0.02)$ there exist constants $C(\epsilon)>0$ and $N(\epsilon)>1$ depending on $\epsilon$ and a sequence of coefficients $\{b_x\}_{x=0}^\infty$ such that the following inequality holds for $N\geq N(\epsilon)$, i.e.
\begin{eqnarray}
|\ell(\lambda)-\sum_{x=0}^\infty b_xP(\text{Poi}(B\lambda)=x)|\leq C(\epsilon)\sqrt{\frac{\lambda}{B\log N}}, \text{ for any }\lambda\in I^\prime_m,
\label{Eqn:LocalPoissonPolynomial}
\end{eqnarray}
where $b_x=0$ for $x\notin BI_m^{\prime\prime}$, and 
\begin{eqnarray}
\left|b_x-\ell\left(\frac{x}{B}\right)\right|\leq \frac{C(\epsilon)(1+x^{1/2})N^\epsilon}{B}, \text{ for any }x\in BI_m^{\prime\prime},
\label{Eqn:BoundOfCoefficientInLocalPoissonPolynomial}
\end{eqnarray}
where $I^\prime_m$ and $I_m^{\prime\prime}$ are defined in the proof of Proposition~\ref{Proposition:BoundOnf1.5} .
\end{lemma}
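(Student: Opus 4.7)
The plan is to mimic the local-polynomial construction used for the Gaussian case in Han et al.~\cite{han2020optimality}, adapting the length scale from $1/\sqrt{n}$ (Gaussian standard deviation) to $1/\sqrt{B}$ (Poisson standard deviation on $\lambda$-scale). First, I would apply Jackson's theorem (Lemma~\ref{lem:jackson}) to $\ell$ on the slightly enlarged interval $I_m^{\prime\prime}=[a,b]$, producing a polynomial $p_k(\lambda)=\sum_{j=0}^{k}c_j\lambda^j$ of degree $k$ with
\[
|\ell(\lambda)-p_k(\lambda)|\le C\sqrt{(b-a)(\lambda-a)}/k,\qquad \lambda\in I_m^{\prime\prime}.
\]
Taking $k\asymp\log N$ (with a small multiplicative constant that will later absorb the $N^{\epsilon}$ slack), and using that $|I_m^{\prime\prime}|\asymp m\log N/B$ while $\lambda\asymp m^{2}\log N/B$ on $I_m^{\prime}$, the uniform error on $I_m^{\prime}$ is of the desired order $C(\epsilon)\sqrt{\lambda/(B\log N)}$.

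Next I would convert $p_k$ to a Poisson polynomial supported inside $BI_m^{\prime\prime}$. The bridge is the factorial-moment identity $E[X(X-1)\cdots(X-j+1)]=(B\lambda)^{j}$ for $X\sim\text{Poi}(B\lambda)$, which expresses each $\lambda^{j}$ (and hence $p_k$) as an infinite-support Poisson polynomial with coefficients $\tilde b_x=\sum_{j}c_j\,(x)_j/B^{j}$. To truncate the support to $BI_m^{\prime\prime}$, I would expand $p_k$ around the center $\lambda_m$ and re-coordinate using the Charlier-type polynomials $g_{d,x}$ from Lemma~\ref{Lemma:CharlierPolynomials}: the discrete difference identity $g_{d,x}(z+2/n)-g_{d,x}(z)=(2d/n)g_{d-1,x}(z)$ plays the role of a Leibniz rule for Poisson sampling, allowing $p_k$ to be rewritten as an $\ell^{2}(P_{\text{Poi}})$-style expansion whose terms are naturally indexed by $x\in\mathbb{Z}_{\ge 0}$. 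The truncation error that arises from discarding coefficients at $x\notin BI_m^{\prime\prime}$ is then controlled by pairing the bound $|g_{d,x}(z)|\le(2\Delta)^{d}$ with the Poisson tail inequality (Lemma~\ref{Lemma:PoissonTailInequality}): whenever $\lambda\in I_m^{\prime}$, the mass of $\text{Poi}(B\lambda)$ outside $BI_m^{\prime\prime}$ is $N^{-\Omega(c_0)}$, which is negligible compared with the target error $\sqrt{\lambda/(B\log N)}$ as long as $c_0$ is a sufficiently large absolute constant.

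For the coefficient estimate \eqref{Eqn:BoundOfCoefficientInLocalPoissonPolynomial}, I would apply Lemma~\ref{CoefficientBound} after the affine rescaling $\lambda\mapsto(\lambda-a)/(b-a)$ to bound $|c_j|\le 2^{O(k)}|I_m^{\prime\prime}|^{-j}$. Since $k\asymp \log N$, the geometric factor $2^{O(k)}$ is at most $N^{\epsilon}$ for an arbitrarily small prescribed $\epsilon>0$. Combining this with the explicit $b_x=\sum_{j}c_j(x)_j/B^{j}$ (corrected by the $g_{d,x}$-truncation terms) and writing
\[
b_x-\ell(x/B)\;=\;\bigl(p_k(x/B)-\ell(x/B)\bigr)\;+\;\bigl(b_x-p_k(x/B)\bigr),
\]
the first parenthesis is bounded directly by Jackson at $\lambda=x/B\in I_m^{\prime\prime}$, giving $O\bigl(\sqrt{|I_m^{\prime\prime}|(x/B-a)}/k\bigr)=O\bigl(\sqrt{x}/(B\sqrt{\log N})\bigr)\le C(\epsilon)(1+x^{1/2})N^{\epsilon}/B$, while the second parenthesis is bounded by exactly the same mechanism that controlled the truncation error (i.e., residuals of the Charlier expansion), absorbed into $C(\epsilon)N^{\epsilon}/B$.

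The hard part will be the simultaneous control of the three conditions---the sub-$\sqrt{\lambda/(B\log N)}$ approximation error on $I_m^{\prime}$, the exact support constraint $b_x=0$ for $x\notin BI_m^{\prime\prime}$, and the stability bound $|b_x-\ell(x/B)|\lesssim (1+x^{1/2})N^{\epsilon}/B$---using the \emph{same} degree-$k$ polynomial. The usual tension is that a small $k$ helps the coefficient bound in Lemma~\ref{CoefficientBound} (which blows up as $2^{O(k)}$) but hurts Jackson, while a large $k$ is the opposite; the choice $k\asymp\log N$ with a carefully tuned constant is what allows the extra $N^{\epsilon}$ factor on coefficients to coexist with the $1/\sqrt{\log N}$ approximation rate. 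The Charlier identity $g_{d,x}(z+2/n)-g_{d,x}(z)=(2d/n)g_{d-1,x}(z)$ together with the sub-geometric bound $|g_{d,x}(z)|\le(2\Delta)^{d}$ is what ultimately makes the truncation losses exponentially small in $\log N$, so that they disappear into the $N^{\epsilon}$ slack. After these are in place, the lemma follows by collecting the constants.
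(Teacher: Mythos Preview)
Your proposal is correct and follows essentially the same approach as the paper: Jackson's theorem on the local interval with degree $D\asymp c_2\log N$ (the paper centers the polynomial at $\lambda_m$ on $I_m'$ rather than $I_m''$, a cosmetic difference), conversion to a Poisson polynomial via the factorial-moment identity, truncation to $BI_m''$, and control of the truncation error by pairing the Charlier bound $|g_{d,x}(z)|\le(2\Delta)^d$ with the Poisson tail inequality. Two small corrections: only the inequality part of Lemma~\ref{Lemma:CharlierPolynomials} is invoked, not the difference identity itself; and for the coefficient bound \eqref{Eqn:BoundOfCoefficientInLocalPoissonPolynomial} the paper routes through the constant term $a_{m,0}$ rather than $p_k(x/B)$, bounding $|b_x^*-a_{m,0}|=\bigl|\sum_{d\ge 1}a_{m,d}\,g_{d,\lambda_m}(\cdot)\bigr|$ directly via the Charlier bound for $x\in BI_m''$ and then $|a_{m,0}-\ell(x/B)|$ by Lipschitz plus Jackson at $\lambda_m$---this is not the ``same mechanism'' as the truncation step (which is for $x\notin BI_m''$ and relies on the Poisson tail), though the ingredients overlap.
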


\begin{proof}[Proof of Lemma~\ref{Lemma:LocalPoissonPolynomial}]
Since $B\geq c_0\log N$, we have $B\geq 1$ for sufficiently large $N$.
Recall that $c_1=c_0/4\geq 4$.
Let $D:= c_2 \log N$
where $c_2>0$ is a small constant specified later and without loss of generality it is assumed that $D$ is an integer.
Throughout the proof we will use $C_1,C_2,\cdots$ to denote positive constants independent of $(B,c_1,c_2)$.
For $m=1$ it follows from Lemma~\ref{lem:jackson} that there exist coefficients $\{a_{1,d}\}_{d=0}^D$ such that 
\[
|\ell(\lambda)-\sum_{d=0}^Da_{1,d}(\lambda-\lambda_1)^d|
\leq C_1\frac{\sqrt{\frac{c_1\log N}{B}\cdot (4/3)^2\lambda}}{D}
= \frac{4C_1}{3c_2}\sqrt{\frac{c_1\lambda}{B\log N}}
\]
for all $\lambda\in I_1^\prime$.
If $m\geq2$, it follows from Lemma~\ref{lem:jackson} that there exist coefficients $\{a_{m,d}\}_{d=0}^D$ such that 
\[
|\ell(\lambda)-\sum_{d=0}^Da_{m,d}(\lambda-\lambda_m)^d|
\leq \frac{10C_1}{3c_2}\frac{c_1(m-\frac{1}{2})}{B}
\leq \frac{10C_1c_1}{c_2B}\left(m-\frac{4}{3}\right),
\]
where the last inequality follows from $m\geq 2$.
Then it follows from $m\leq \frac{4}{3}+\sqrt{\frac{B\lambda}{c_1\log N}}$ for all $\lambda\in I_m^\prime$ that
\[
|\ell(\lambda)-\sum_{d=0}^Da_{m,d}(\lambda-\lambda_m)^d|
\leq
\frac{10C_1}{c_2}\sqrt{\frac{c_1\lambda}{B\log N}}.
\]
Combining the above cases, it follows that for $m=1,\ldots,M$ and $\lambda\in I^\prime_m$
\[
|\ell(\lambda)-\sum_{d=0}^Da_{m,d}(\lambda-\lambda_m)^d|
\leq
10C_1\frac{\sqrt{c_1}}{c_2}\sqrt{\frac{\lambda}{B\log N}}.
\]
As a sequence, it follows that for $\lambda\in I^\prime_m$
\begin{eqnarray*}
|\ell(\lambda_m)-\sum_{d=0}^Da_{m,d}(\lambda-\lambda_m)^d|
\leq |\ell(\lambda)-\ell(\lambda_m)|+|\ell(\lambda)-\sum_{d=0}^Da_{m,d}(\lambda-\lambda_m)^d|
%&\leq&|\lambda-\lambda_m|+10C_1\frac{\sqrt{c_1}}{c_2}\sqrt{\frac{\lambda}{B\log N}} \\
%&\leq&\frac{c_1\log N}{B}\frac{10m}{3}+20C_1\frac{c_1m}{Bc_2}\\
\leq\frac{5c_1m\log N}{B}\left(\frac{2}{3}+\frac{4C_1}{c_2\log N}\right).
\end{eqnarray*}

Moreover, applying Lemma~\ref{CoefficientBound} on the shifted interval $I_m^\prime-\lambda_m$ gives that for $d=1,2,\cdots,D$
\begin{eqnarray*}
|a_{m,d}|
%&\leq&\frac{5c_1m\log N}{B}\left(\frac{2}{3}+\frac{4C_1}{c_2\log N}\right)\left(\frac{c_1\log N}{B}\frac{5}{3}m\right)^{-d}(\sqrt{2}+1)^{1+c_2\log N}\\
%&\leq&\frac{15c_1m\log N}{B}\left(\frac{2}{3}+\frac{4C_1}{c_2\log N}\right)\left(\frac{c_1\log N}{B}\frac{5}{3}m\right)^{-d}(\sqrt{2}+1)^{c_2\log N}\\
\leq9\left(\frac{2}{3}+\frac{4C_1}{c_2\log N}\right)\left(\frac{5}{3}\frac{c_1m\log N}{B}\right)^{1-d}N^{c_2}.
\end{eqnarray*}
As for $d=0$, choosing $\lambda=\lambda_m$ in the above inequality gives $a_{m,0}\leq |\ell(\lambda_m)|+20C_1\frac{c_1m}{Bc_2}\leq 1+20C_1\frac{\sqrt{c_1}}{c_2}\sqrt{\frac{1}{B\log N}}$.

Next we write the above polynomial as a linear combination of Poisson polynomials.
Since
\[
\sum_{x=0}^\infty \frac{x!}{(x-d)!B^d}\cdot P(\text{Poi}(B\lambda)=x)=\lambda^d
\]
where $y!:=\infty$ for $y<0$,
we have 
\begin{eqnarray*}
\sum_{d=0}^Da_{m,d}(\lambda-\lambda_m)^d
%&=&
%\sum_{d=0}^Da_{m,d}\sum_{d^\prime=0}^d{d \choose d^\prime}(-\lambda_m)^{d-d^\prime}\lambda^{d^\prime}\\
%&=&
%\sum_{d=0}^Da_{m,d}\sum_{d^\prime=0}^d{d \choose d^\prime}(-\lambda_m)^{d-d^\prime}\sum_{x=0}^\infty \frac{x!}{(x-d^\prime)!B^{d^\prime}}\cdot P(\text{Poi}(B\lambda)=x)\\
%&=&
%\sum_{x=0}^\infty \left(\sum_{d=0}^Da_{m,d}\sum_{d^\prime=0}^d{d \choose d^\prime}(-\lambda_m)^{d-d^\prime} \frac{x!}{(x-d^\prime)!B^{d^\prime}}\right) P(\text{Poi}(B\lambda)=x)\\
=
\sum_{x=0}^\infty b_x^* P(\text{Poi}(B\lambda)=x),
\end{eqnarray*}
where $b_x^*:= \sum_{d=0}^Da_{m,d}\sum_{d^\prime=0}^d{d \choose d^\prime}(-\lambda_m)^{d-d^\prime} \frac{j!}{(x-d^\prime)!B^{d^\prime}}$.

In other words, the inequality holds for the coefficients $\{b_x^*\}_{x=0}^\infty$.
Now we define $\{b_x\}_{x=0}^\infty$ to be the truncated version of $\{b_x^*\}_{x=0}^\infty$:
\[
b_x=b_x^*\cdot 1(x\in B I_m^{\prime\prime}).
\]
Clearly $b_x=0$ for all $x\notin BI_m^{\prime\prime}$. 
By Lemma~\ref{Lemma:CharlierPolynomials}, for $d=1,2,\cdots, D,$
\begin{eqnarray*}
& &|\sum_{d^\prime=0}^d{d \choose d^\prime}(-\lambda_m)^{d-d^\prime}\frac{x!}{(x-d^\prime)!B^{d^\prime}}|\\
%&=&
%\sum_{d^\prime=0}^d{d \choose d^\prime}(-\lambda_m)^{d-d^\prime} \prod_{d^{\prime\prime}=0}^{d^\prime-1}\left(\frac{x}{B}-\frac{d^{\prime\prime}}{B}\right)\\
%&\leq&
%\left(2 \max\left\{\left|\frac{x}{B}-\lambda_m\right|,\frac{4d}{B},\frac{2\sqrt{xd}}{B}\right\}\right)^d\\
%&\leq&
%\left\{
%\begin{matrix}
%\left(2 \max\left\{\frac{4c_1m\log N}{B},\frac{4(1+c_2\log N)}{B},\frac{2\sqrt{(c_1(m+1)^2\log N)(1+c_2\log N)}}{B}\right\}\right)^d & \text{ if }x\in BI_m^{\prime\prime}\\
%\left(2 \max\left\{\left|\frac{x}{B}-\lambda_m\right|,\frac{4(1+c_2\log N)}{B},\frac{2\sqrt{(c_1(m+1)^2\log N)(1+c_2\log N)}}{B}\right\}\right)^d &\text{ otherwise.}
%\end{matrix}
%\right.\\
&\leq&
\left\{
\begin{matrix}
\left(8 \max\left\{\frac{c_1m\log N}{B},\frac{1+c_2\log N}{B},\frac{m\sqrt{(c_1\log N)(1+c_2\log N)}}{B}\right\}\right)^d & \text{ if }x\in BI_m^{\prime\prime}\\
\left(8\max\left\{\left|\frac{x}{B}-\lambda_m\right|/4,\frac{1+c_2\log N}{B},\frac{m\sqrt{(c_1\log N)(1+c_2\log N)}}{B}\right\}\right)^d &\text{ otherwise.}
\end{matrix}
\right.
\end{eqnarray*}
Suppose $c_2<c_1$.
Then for $N\geq \exp(1/c_2)$, it follows that $c_2\log N\geq 1$ and hence
\[
\frac{1+c_2\log N}{B}\leq \frac{2c_2\log N}{B}\leq\frac{2c_1m\log N}{B} \text{ and }
\frac{m\sqrt{(c_1\log N)(1+c_2\log N)}}{B}\leq\frac{2c_1m\log N}{B}.
\]
Therefore,
\begin{eqnarray*}
|\sum_{d^\prime=0}^d{d \choose d^\prime}(-\lambda_m)^{d-d^\prime}\frac{x!}{(x-d^\prime)!B^{d^\prime}}|
%&\leq&
%\left\{
%\begin{matrix}
%\left(8 \max\left\{\frac{c_1m\log N}{B},\frac{2c_1m\log N}{B}\right\}\right)^d & \text{ if }x\in BI_m^{\prime\prime}\\
%\left(8 \max\left\{\left|\frac{x}{B}-\lambda_m\right|/4,\frac{2c_1m\log N}{B}\right\}\right)^d &\text{ otherwise.}
%\end{matrix}
%\right.\\
\leq
\left\{
\begin{matrix}
\left(16\frac{c_1m\log N}{B}\right)^d & \text{ if }x\in BI_m^{\prime\prime}\\
\left(8\left|\frac{x}{B}-\lambda_m\right|\right)^d &\text{ otherwise.}
\end{matrix}
\right.
\end{eqnarray*}

Hence, for $x\in BI_m^{\prime\prime}$, we have
\begin{eqnarray*}
|b_x-a_{m,0}|=|b_x^*-a_{m,0}|
%&\leq&\sum_{d=1}^D|a_{m,d}|\cdot\left|\sum_{d^\prime=0}^d{d \choose d^\prime}(-\lambda_m)^{d-d^\prime}\frac{x!}{(x-d^\prime)!B^{d^\prime}}\right|\\
%&\leq&\sum_{d=1}^D9\left(\frac{2}{3}+\frac{4C_1}{c_2\log N}\right)\left(\frac{5}{3}\frac{c_1m\log N}{B}\right)^{1-d}N^{c_2}\left(16\frac{c_1m\log N}{B}\right)^d\\
%&\leq&15\left(\frac{2}{3}+\frac{4C_1}{c_2\log N}\right)\frac{c_1m\log N}{B}N^{c_2}\sum_{d=1}^D10^d\\
%&\leq&17\left(\frac{2}{3}+\frac{4C_1}{c_2\log N}\right)\frac{c_1m\log N}{B}N^{c_2}10^D\\
%&\leq&17\left(\frac{2}{3}+\frac{4C_1}{c_2\log N}\right)\frac{c_1m\log N}{B}N^{3c_2}\\
\leq\frac{17}{B}\left(\frac{2}{3}+\frac{4C_1}{c_2\log N}\right)c_1\log N\cdot \left(\sqrt{\frac{x}{c_1\log N}}+2\right)N^{3c_2}.
\end{eqnarray*}
Since $c_2<c_1$ and $c_2\log N\geq1$, it follows that $c_1\log N\geq 1$ and hence
\begin{eqnarray*}
|b_x-a_{m,0}|=|b_x^*-a_{m,0}|
%&\leq&\frac{34}{B}\left(\frac{2}{3}+\frac{4C_1}{c_2\log N}\right)c_1\log N\cdot \left(\sqrt{x}+1\right)N^{3c_2}\\
%&\leq&34c_1\left(\frac{2}{3}+\frac{4C_1}{c_2\log N}\right)\cdot \frac{(\sqrt{x}+1)N^{4c_2}}{B}\\
\leq 34c_1\left(\frac{2}{3}+4C_1\right)\cdot \frac{(\sqrt{x}+1)N^{4c_2}}{B},
\end{eqnarray*}
for $N$ sufficiently large (depending on $c_2$).

Moreover, for any $x\in BI_m^{\prime\prime}$,
\begin{eqnarray*}
|a_{m,0}-\ell\left(\frac{x}{B}\right)|
%&\leq&|\ell(\lambda_m)-\ell\left(\frac{x}{B}\right)|+|a_{m,0}-\ell(\lambda_m)|\\
%&\leq& |\lambda_m-\frac{x}{B}|+20C_1\frac{c_1m}{Bc_2}\\
%&\leq& \frac{c_1\log N}{B}4m+20C_1\frac{c_1m}{Bc_2}\\
%&=& \left(4c_1+20C_1\frac{c_1}{c_2}\right)\frac{\log N}{B}m\\
%&\leq& \left(4c_1+20C_1\frac{c_1}{c_2}\right)\frac{\log N}{B}\left(\sqrt{\frac{x}{c_1\log N}}+2\right)\\
\leq \left(8c_1+40C_1\frac{c_1}{c_2}\right)\frac{(\sqrt{x}+1)\log N}{B}
\end{eqnarray*}
and therefore a triangle inequality gives the inequality (\ref{Eqn:BoundOfCoefficientInLocalPoissonPolynomial}).

As for the other inequality (\ref{Eqn:LocalPoissonPolynomial}), by triangle inequality it suffices to prove that 
\[
\sum_{x\notin BI_m^{\prime\prime}}|b_x^*|\cdot P(\text{Poi}(B\lambda)=x)=O(N^{-1}),\text{ for any }\lambda\in I_m^\prime.
\]
To prove the last display, first note that for $x\notin BI_m^{\prime\prime}$, we have
\[
|x-B\lambda_m|\geq 2c_1m\log N
\]
and
\begin{eqnarray*}
|b_x^*|
%&\leq&|a_{m,0}|+\sum_{d=1}^D9\left(\frac{2}{3}+\frac{4C_1}{c_2\log N}\right)\left(\frac{5}{3}\frac{c_1m\log N}{B}\right)^{1-d}N^{c_2}\cdot\left(8\left|\frac{x}{B}-\lambda_m\right|\right)^d\\
%&\leq&C(c_1,c_2)+15\left(\frac{2}{3}+\frac{4C_1}{c_2\log N}\right)N^{c_2}\sum_{d=1}^D\left(\frac{24|x-B\lambda_m|}{5c_1m\log N}\right)^d\\
%&\leq&C(c_1,c_2)+17\left(\frac{2}{3}+\frac{4C_1}{c_2\log N}\right)N^{c_2}\left(\frac{24|x-B\lambda_m|}{5c_1m\log N}\right)^D\\
\leq C(c_1,c_2)+17\left(\frac{2}{3}+\frac{4C_1}{c_2\log N}\right)N^{c_2}\left(\frac{7|x-B\lambda_m|}{\sqrt{c_1B\lambda_m\log N}}\right)^D,
\end{eqnarray*}
where $C(c_1,c_2)$ is a constant depending on $c_1,c_2$.
Futhermore, by the Chernoff bound (Lemma~\ref{Lemma:PoissonTailInequality}), we have 
\begin{eqnarray*}
P(\text{Poi}(B\lambda)=x)
%&\leq&\exp\left(-\frac{1}{3}\left(\frac{(x-B\lambda)^2}{B\lambda}\wedge|x-B\lambda|\right)\right)\\
\leq\exp\left(-\frac{1}{3}|x-B\lambda|\left(\frac{|x-B\lambda|}{B\lambda}\wedge1\right)\right).
\end{eqnarray*}
Since for all $\lambda\in I_m^\prime$ and $x\notin BI_m^{\prime\prime}$ we have $|x-B\lambda|\geq 4c_1m\log N$ and $B\lambda\leq4 c_1m^2\log N$, it follows that
$
|x-B\lambda|/(B\lambda)\geq1/m
$
and hence
\begin{eqnarray*}
P(\text{Poi}(B\lambda)=x)
%&\leq&\exp\left(-\frac{1}{3}\frac{|x-B\lambda|}{m}\right)\\
\leq\exp\left(-\frac{1}{6}\cdot c_1\log N\cdot \frac{|x-B\lambda|}{\sqrt{c_1B\lambda_m\log N}}\right).
\end{eqnarray*}
Moreover, the assumption $x\notin BI_m^{\prime\prime}$ implies that $|x-B\lambda_m|/\sqrt{c_1B\lambda_m\log N}\geq 2>0$.
Consequently, whenever $\lambda\in I_m^\prime$ and $x\notin BI_m^{\prime\prime}$, we have
\begin{align*}
&\sum_{x\notin BI_m^{\prime\prime}}|b_x^*|P(\text{Poi}(B\lambda)=x)\\
%&\leq&
%2C(c_1,c_2)N^{-c_1/3}+\\
%& &\ \ \ \ \sum_{x\notin BI_m^{\prime\prime}}17\left(\frac{2}{3}+\frac{4C_1}{c_2\log N}\right)N^{c_2}\left(\frac{7|x-B\lambda_m|}{\sqrt{c_1B\lambda_m\log N}}\right)^{2c_2\log N}
%\exp\left(-\frac{1}{6}\cdot c_1\log N\cdot \frac{|x-B\lambda|}{\sqrt{c_1B\lambda_m\log N}}\right)\\
&\leq
2C(c_1,c_2)N^{-c_1/3}+\sum_{x\notin BI_m^{\prime\prime}}C_3
\exp\left(10c_2\log N\cdot\log\frac{|x-B\lambda_m|}{\sqrt{c_1B\lambda_m\log N}}-\frac{1}{6}\cdot c_1\log N\cdot \frac{|x-B\lambda|}{\sqrt{c_1B\lambda_m\log N}}\right),
\end{align*}
where the first term in the last display follows from (\ref{Formula:ExpUpperBounded2}) and  $C_3$ in the second term is a positive constant which doesn't depend on $c_1$.
Then
%For the second term, since the positive constant $C_3$ do not depend on the parameter $c_1$, and 
by choosing $c_2>0$ small enough we arrive at an exponent$=\left(-\frac{1}{7}c_1\log N\frac{|x-B\lambda_m|}{\sqrt{c_1B\lambda_m\log N}}\right)$.
It follows from $|x-B\lambda_m|/\sqrt{c_1B\lambda_m\log N}\geq 2>0$ that 
\begin{eqnarray*}
\sum_{x\notin BI_m^{\prime\prime}}|b_x^*|P(\text{Poi}(B\lambda)=x)
\leq2C(c_1,c_2)N^{-c_1/3}+C_3N^{-2c_1/7}=O(N^{-2c_1/7})=O(N^{-1}),\text{ for }c_1\geq 4.
\end{eqnarray*}
This completes the proof.
\end{proof}

\section{Implementation details in Section \ref{sec:app}}\label{sec:app-details}

Let $n_1=13$ and $n_2=10$ denote the number of subjects in ASD and control groups respectively and $n=23$ represent the number of total subjects.
Since 99 percent of $\{X_{ij}^{(k)}/r_{ij}^{(k)},i\in[N_{jk}],j\in[n_k],k\in[K]\}$ for 100 genes are smaller than 15.09, we choose $B=20$. We use VEM to compute NPMLEs with a stop tolerance 0.01. The testings with covariance adjustments $\hat T_Z$ and $\hat T_{h,Z}$ are conducted by R package ``\textit{ideas}'' by Sun and Zhang with $10^5$ Monte Carlo simulations.

To account for covariates, the pseudo-$F$ statistics described in Section~\ref{sec:test} has to be changed a little bit. Let $\textbf{D}_n$ be the $n$ by $n$ distance matrix corresponding to the mixing distributions, with each entry equal to the squared $W_1$ distance between the two corresponding NPMLEs, and let
 \begin{eqnarray*}
\textbf{G}_n:=\left(\textbf{I}_n-\frac{1}{n}\textbf{1}_n\textbf{1}_n^\top\right)\textbf{A}_n\left(\textbf{I}_n-\frac{1}{n}\textbf{1}_n\textbf{1}_n^\top\right),
 \end{eqnarray*}
 be the Grower's center matrix of $\textbf{A}_n$, where $\textbf{A}_n:= -(1/2)\textbf{D}_n$, $\textbf{1}_n:=(\underbrace{1,1,\ldots,1}_{n})^\top$, and $\textbf{I}_n$ stands for the $n$-dimensional identity matrix. Note that $\textbf{G}_n$ may have some negative eigenvalues, and we set those negative eigenvalues to 0. Let $\textbf{Z}$ be an $n$ by 5 matrix consisting of diagnostics (1 as ASD and 0 as control), age, sex, seqbatch, and RIN. Let $\textbf{H}_Z$ be the hat matrix $\textbf{H}_Z:=\textbf{Z}(\textbf{Z}^\top \textbf{Z})^{-1}\textbf{Z}^\top$.
 Then the new $F$-statistic accounting for covariates is 
 \begin{eqnarray}
\hat F_Z:=\frac{{\rm tr}(\textbf{H}_Z\textbf{G}\textbf{H}_Z)}{{\rm tr}((\textbf{I}-\textbf{H}_Z)\textbf{G}(\textbf{I}-\textbf{H}_Z))},
 \label{Formula:DefinitionOfpseudoFZ}
 \end{eqnarray}
 where ${\rm tr}(\cdot)$ denotes the trace of a matrix. %, see \cite{} for further details about pseudo-$F_Z$.
 %Note that $F_Z=n_2F/(n_1F+n)$ if $Z$ only contains diagnostics and uses $1$ as ASD and $0$ as control.
To implement the permutation test, we permute the variable ``diagnostics'' with all the rest covariates fixed and accordingly generate a new data matrix $Z^\pi$. The corresponding $F$-statistic is denoted by $\hat F_Z^\pi$ and the $p$-value is 
\begin{eqnarray}
\frac{\text{the number of permutations $\pi$ such that }\hat{F}_Z^\pi\geq\hat{F}_Z}{\text{the number of all possible permutations }\pi}.
\label{Formula:p-valueMixingCov}
\end{eqnarray}

When replacing the distance matrix $\textbf{D}_n$ by the corresponding Poisson-smoothed version, the corresponding $p$-value is 
\begin{eqnarray}
\frac{\text{the number of permutations $\pi$ such that }\hat{F}_{h,Z}^\pi\geq\hat{F}_{h,Z}}{\text{the number of all possible permutations }\pi},
\label{Formula:p-valueMixtureCov}
\end{eqnarray}
where $\hat{F}_{h,Z}$ and $\hat{F}_{h,Z}^\pi$ are the Poisson-smoothed versions of $\hat{F}_{Z}$ and $\hat{F}_{Z}^\pi$.

The above two testing procedures are abbreviated as $\hat T_Z$ and $\hat T_{h,Z}$.

\section*{Acknowledgement}

The authors would like to thank Yihong Wu for his very informative remarks on the issue of uniqueness of NPMLEs and concavity of the nonparametric Poisson likelihood functions. The authors would also like to thank Jiahua Chen, Matthew Stephens, and Jon Wellner for pointing out related literature and for helpful discussions. 

{\small
\bibliographystyle{apalike}
\bibliography{npmle}

\begin{thebibliography}{}

\bibitem[Anderson, 2001]{anderson2001new}
Anderson, M.~J. (2001).
\newblock A new method for non-parametric multivariate analysis of variance.
\newblock {\em Austral Ecology}, 26(1):32--46.

\bibitem[Baisch et~al., 2017]{baisch2017reaction}
Baisch, B., Cai, S., Li, Z., and Pinheiro, V. (2017).
\newblock Reaction time of children with and without autistic spectrum
  disorders.
\newblock {\em Open Journal of Medical Psychology}, 6:166--178.

\bibitem[Benachenhou et~al., 2019]{benachenhou2019implication}
Benachenhou, S., Etcheverry, A., Galarneau, L., Dub\'e, J., and \c{C}aku, A.
  (2019).
\newblock Implication of hypocholesterolemia in autism spectrum disorder and
  its associated comorbidities: A retrospective case--control study.
\newblock {\em Autism Research}, 12(12):1860--1869.

\bibitem[Berry and Mielke, 1983]{berry1983moment}
Berry, K.~J. and Mielke, P.~W. (1983).
\newblock Moment approximations as an alternative to the {F} test in analysis
  of variance.
\newblock {\em British Journal of Mathematical and Statistical Psychology},
  36(2):202--206.

\bibitem[Bi and Davuluri, 2013]{bi2013npebseq}
Bi, Y. and Davuluri, R.~V. (2013).
\newblock {NPEB}seq: {N}onparametric empirical bayesian-based procedure for
  differential expression analysis of {RNA}-seq data.
\newblock {\em BMC Bioinformatics}, 14:262.

\bibitem[Billingsley, 1999]{billingsley1999convergence}
Billingsley, P. (1999).
\newblock {\em Convergence of Probability Measures}.
\newblock Wiley, 2nd edition.

\bibitem[B{\"o}hning, 1982]{bohning1982convergence}
B{\"o}hning, D. (1982).
\newblock Convergence of {S}imar's algorithm for finding the maximum likelihood
  estimate of a compound {Poisson} process.
\newblock {\em Annals of Statistics}, 10(3):1006--1008.

\bibitem[B{\"o}hning, 1985]{bohning1985numerical}
B{\"o}hning, D. (1985).
\newblock Numerical estimation of a probability measure.
\newblock {\em Journal of Statistical Planning and Inference}, 11(1):57--69.

\bibitem[B{\"o}hning, 1986]{bohning1986vertex}
B{\"o}hning, D. (1986).
\newblock A vertex-exchange-method in {D}-optimal design theory.
\newblock {\em Metrika}, 33:337--347.

\bibitem[Boik, 1987]{boik1987fisher}
Boik, R.~J. (1987).
\newblock The {Fisher-Pitman} permutation test: {A} non-robust alternative to
  the normal theory {F} test when variances are heterogeneous.
\newblock {\em British Journal of Mathematical and Statistical Psychology},
  40(1):26--42.

\bibitem[Calarge and Schlechte, 2017]{calarge2017bone}
Calarge, C.~A. and Schlechte, J.~A. (2017).
\newblock Bone mass in boys with autism spectrum disorder.
\newblock {\em Journal of Autism and Developmental Disorders},
  47(6):1749--1755.

\bibitem[Cameron et~al., 2017]{cameron2017variability}
Cameron, J.~M., Levandovskiy, V., Roberts, W., Anagnostou, E., Scherer, S.,
  Loh, A., and Schulze, A. (2017).
\newblock Variability of creatine metabolism genes in children with autism
  spectrum disorder.
\newblock {\em International Journal of Molecular Sciences}, 18(8):1665.

\bibitem[Chapuy, 2007]{chapuy2007random}
Chapuy, G. (2007).
\newblock Random permutations and their discrepancy process.
\newblock {\em 2007 Conference on Analysis of Algorithms, AofA 07}, pages
  457--470.

\bibitem[Chawarska et~al., 2011]{chawarska2011early}
Chawarska, K., Campbell, D., Chen, L., Shic, F., Klin, A., and Chang, J.
  (2011).
\newblock Early generalized overgrowth in boys with autism.
\newblock {\em Archives of General Psychiatry}, 68(10):1021--1031.

\bibitem[Chen et~al., 2019]{chen2019single}
Chen, G., Ning, B., and Shi, T. (2019).
\newblock Single-cell {RNA}-seq technologies and related computational data
  analysis.
\newblock {\em Frontiers in Genetics}, 10:317.

\bibitem[Chen, 2017]{chen2017consistency}
Chen, J. (2017).
\newblock Consistency of the {MLE} under mixture models.
\newblock {\em Statistical Science}, 32(1):47--63.

\bibitem[Chung and Romano, 2013]{chung2013exact}
Chung, E. and Romano, J.~P. (2013).
\newblock Exact and asymptotically robust permutation tests.
\newblock {\em Annals of Statistics}, 41(2):484--507.

\bibitem[Cover and Thomas, 2006]{cover2006elements}
Cover, T.~M. and Thomas, J.~A. (2006).
\newblock {\em Elements of Information Theory}.
\newblock Wiley-Interscience, 2nd edition.

\bibitem[Dadaneh et~al., 2018]{dadaneh2018bnpseq}
Dadaneh, S.~Z., Qian, X., and Zhou, M. (2018).
\newblock {BNP}-seq: {B}ayesian nonparametric differential expression analysis
  of sequencing count data.
\newblock {\em Journal of the American Statistical Association},
  113(521):81--94.

\bibitem[Deb and Sen, 2021]{deb2019multivariate}
Deb, N. and Sen, B. (2021).
\newblock Multivariate rank-based distribution-free nonparametric testing using
  measure transportation.
\newblock {\em Journal of the American Statistical Association}, (in press).

\bibitem[Ezegwui et~al., 2014]{ezegwui2014refractive}
Ezegwui, I., Lawrence, L., Aghaji, A., Obiekwe, O., Okoye, O., Onwasigwe, E.,
  and Ebigbo, P. (2014).
\newblock Refractive errors in children with autism in a developing country.
\newblock {\em Nigerian Journal of Clinical Practice}, 17:467--70.

\bibitem[Fedorov, 1972]{fedorov1972theory}
Fedorov, V. (1972).
\newblock {\em Theory of Optimal Experiments Designs}.
\newblock Academic Press.

\bibitem[Fisher, 1925]{fisher1925statistical}
Fisher, R.~A. (1925).
\newblock {\em Statistical Methods for Research Workers}.
\newblock Oliver and Boyd.

\bibitem[Fisher, 1935]{fisher1935design}
Fisher, R.~A. (1935).
\newblock {\em Design of Experiments}.
\newblock Oliver and Boyd.

\bibitem[Fukumoto et~al., 2011]{fukumoto2011head}
Fukumoto, A., Hashimoto, T., Mori, K., Tsuda, Y., Arisawa, K., and Kagami, S.
  (2011).
\newblock Head circumference and body growth in autism spectrum disorders.
\newblock {\em Brain and Development}, 33(7):569--75.

\bibitem[Han and Shiragur, 2020]{han2020optimality}
Han, Y. and Shiragur, K. (2020).
\newblock The optimality of profile maximum likelihood in estimating sorted
  discrete distributions.
\newblock {\em arXiv preprint arXiv:2004.03166}.

\bibitem[Hengartner, 1997]{hengartner1997adaptive}
Hengartner, N.~W. (1997).
\newblock Adaptive demixing in {P}oisson mixture models.
\newblock {\em Annals of Statistics}, 25(3):917--928.

\bibitem[Hoeffding, 1952]{hoeffding1952large}
Hoeffding, W. (1952).
\newblock The large-sample power of tests based on permutations of
  observations.
\newblock {\em Annals of Mathematical Statistics}, pages 169--192.

\bibitem[Hohn et~al., 2019]{hohn2019insomnia}
Hohn, V.~D., de~Veld, D., Mataw, K., van Someren, E., and Begeer, S. (2019).
\newblock Insomnia severity in adults with autism spectrum disorder is
  associated with sensory hyper-reactivity and social skill impairment.
\newblock {\em Journal of Autism and Developmental Disorders},
  49(5):2146--2155.

\bibitem[Hoirisch-Clapauch and Nardi, 2019]{hoirisch2019autism}
Hoirisch-Clapauch, S. and Nardi, A. (2019).
\newblock Autism spectrum disorders: {L}et's talk about glucose?
\newblock {\em Translational Psychiatry}, 9(1):51.

\bibitem[Huang et~al., 2018]{huang2018saver}
Huang, M., Wang, J., Torre, E., Dueck, H., Shaffer, S., Bonasio, R., Murray,
  J., Raj, A., Li, M., and Zhang, N.~R. (2018).
\newblock {SAVER}: Gene expression recovery for {UMI}-based single cell {RNA}
  sequencing.
\newblock {\em bioRxiv}.

\bibitem[Jewell, 1982]{jewell1982mixtures}
Jewell, N.~P. (1982).
\newblock Mixtures of exponential distributions.
\newblock {\em Annals of Statistics}, 10(2):479--484.

\bibitem[Jiang and Zhang, 2019]{jiang2019rate}
Jiang, W. and Zhang, C.-H. (2019).
\newblock Rate of divergence of the nonparametric likelihood ratio test for
  gaussian mixtures.
\newblock {\em Bernoulli}, 25(4B):3400--3420.

\bibitem[Jiao et~al., 2018]{jiao2018minimax}
Jiao, J., Han, Y., and Weissman, T. (2018).
\newblock Minimax estimation of the {$L_1$} distance.
\newblock {\em IEEE Transactions on Information Theory}, 64(10):6672--6706.

\bibitem[{Jiao} et~al., 2015]{jiao2015minimax}
{Jiao}, J., {Venkat}, K., {Han}, Y., and {Weissman}, T. (2015).
\newblock Minimax estimation of functionals of discrete distributions.
\newblock {\em IEEE Transactions on Information Theory}, 61(5):2835--2885.

\bibitem[Joshi et~al., 2012]{joshi2012examining}
Joshi, G., Biederman, J., Petty, C., Goldin, R.~L., Furtak, S.~L., and Wozniak,
  J. (2012).
\newblock Examining the comorbidity of bipolar disorder and autism spectrum
  disorders: A large controlled analysis of phenotypic and familial correlates
  in a referred population of youth with bipolar {I} disorder with and without
  autism spectrum disorders.
\newblock {\em Journal of Clinical Psychiatry}.

\bibitem[Kiefer and Wolfowitz, 1956]{kiefer1956consistency}
Kiefer, J. and Wolfowitz, J. (1956).
\newblock Consistency of the maximum likelihood estimator in the presence of
  infinitely many incidental parameters.
\newblock {\em Annals of Mathematical Statistics}, 27(4):887--906.

\bibitem[Laird, 1978]{laird1978nonparametric}
Laird, N. (1978).
\newblock Nonparametric maximum likelihood estimation of a mixing distribution.
\newblock {\em Journal of the American Statistical Association},
  73(364):805--811.

\bibitem[Lambert and Tierney, 1984]{lambert1984asymptotic}
Lambert, D. and Tierney, L. (1984).
\newblock Asymptotic properties of maximum likelihood estimates in the mixed
  {Poisson} model.
\newblock {\em Annals of Statistics}, 12(4):1388--1399.

\bibitem[Lesperance and Kalbfleisch, 1992]{lesperance1992algorithm}
Lesperance, M.~L. and Kalbfleisch, J.~D. (1992).
\newblock An algorithm for computing the nonparametric mle of a mixing
  distribution.
\newblock {\em Journal of the American Statistical Association},
  87(417):120--126.

\bibitem[Lindsay, 1983a]{lindsay1983geometry}
Lindsay, B.~G. (1983a).
\newblock The geometry of mixture likelihoods: {A} general theory.
\newblock {\em Annals of Statistics}, 11(1):86--94.

\bibitem[Lindsay, 1983b]{lindsay1983geometryII}
Lindsay, B.~G. (1983b).
\newblock The geometry of mixture likelihoods, part {II}: {T}he exponential
  family.
\newblock {\em Annals of Statistics}, 11(3):783--792.

\bibitem[Lindsay, 1995]{lindsay1995mixture}
Lindsay, B.~G. (1995).
\newblock Mixture models: Theory, geometry and applications.
\newblock {\em NSF-CBMS Regional Conference Series in Probability and
  Statistics}, 5:I--163.

\bibitem[Lindsay and Roeder, 1993]{lindsay1993uniqueness}
Lindsay, B.~G. and Roeder, K. (1993).
\newblock Uniqueness of estimation and identifiability in mixture models.
\newblock {\em Canadian Journal of Statistics}, 21(2):139--147.

\bibitem[Liu et~al., 2019]{liu2019modelling}
Liu, S., Jiang, Y., and Yu, T. (2019).
\newblock Modelling {RNA}-{S}eq data with a zero-inflated mixture {Poisson}
  linear model.
\newblock {\em Genetic Epidemiology}, 43(7):786--799.

\bibitem[Loh and Zhang, 1996]{loh1996global}
Loh, W.-L. and Zhang, C.-H. (1996).
\newblock Global properties of kernel estimators for mixing densities in
  discrete exponential family models.
\newblock {\em Statistica Sinica}, 6(3):561--578.

\bibitem[Love et~al., 2014]{love2014moderated}
Love, M., Huber, W., and Anders, S. (2014).
\newblock Moderated estimation of fold change and dispersion for {RNA}-seq data
  with {DES}eq2.
\newblock {\em Genome Biology}, 15:550.

\bibitem[Lu, 2018]{lu2018generalized}
Lu, M. (2018).
\newblock {\em Generalized Adaptive Shrinkage Methods and Applications in
  Genomics Studies}.
\newblock University of Chicago.

\bibitem[Marascuilo and McSweeney, 1977]{marascuilo1977nonparametric}
Marascuilo, L.~A. and McSweeney, M. (1977).
\newblock {\em Nonparametric and Distribution-Free Methods for the Social
  Sciences}.
\newblock Brooks/Cole Publishing Company.

\bibitem[Mardia et~al., 2019]{mardia2019concentration}
Mardia, J., Jiao, J., T\`anczos, E., Nowak, R.~D., and Weissman, T. (2019).
\newblock Concentration inequalities for the empirical distribution of discrete
  distributions: {Beyond} the method of types.
\newblock {\em Information and Inference: A Journal of the IMA}, 9(4):813--850.

\bibitem[Mielke and Berry, 2007]{mielke2007permutation}
Mielke, P.~W. and Berry, K.~J. (2007).
\newblock {\em Permutation Methods: {A} Distance Function Approach}.
\newblock Springer.

\bibitem[Mielke~Jr, 1984]{mielke1984meteorological}
Mielke~Jr, P. (1984).
\newblock 34 {M}eteorological applications of permutation techniques based on
  distance functions.
\newblock In {\em Handbook of Statistics}, volume~4, pages 813--830. Elsevier.

\bibitem[Mielke~Jr et~al., 1976]{mielke1976multi}
Mielke~Jr, P.~W., Berry, K.~J., and Johnson, E.~S. (1976).
\newblock Multi-response permutation procedures for a priori classifications.
\newblock {\em Communications in Statistics-Theory and Methods},
  5(14):1409--1424.

\bibitem[Nguyen et~al., 2013]{nguyen2013convergence}
Nguyen, X. et~al. (2013).
\newblock Convergence of latent mixing measures in finite and infinite mixture
  models.
\newblock {\em Annals of Statistics}, 41(1):370--400.

\bibitem[Panaretos and Zemel,
  2019]{doi:10.1146/annurev-statistics-030718-104938}
Panaretos, V.~M. and Zemel, Y. (2019).
\newblock Statistical aspects of {W}asserstein distances.
\newblock {\em Annual Review of Statistics and Its Application}, 6(1):405--431.

\bibitem[Petersen and M{\"u}ller, 2019]{petersen2019frechet}
Petersen, A. and M{\"u}ller, H.-G. (2019).
\newblock Fr{\'e}chet regression for random objects with {E}uclidean
  predictors.
\newblock {\em Annals of Statistics}, 47(2):691--719.

\bibitem[Pfanzagl, 1988]{pfanzagl1988consistency}
Pfanzagl, J. (1988).
\newblock Consistency of maximum likelihood estimators for certain
  nonparametric families, in particular: {M}ixtures.
\newblock {\em Journal of Statistical Planning and Inference}, 19(2):137--158.

\bibitem[Pitman, 1938]{pitman1938significance}
Pitman, E. J.~G. (1938).
\newblock Significance tests which may be applied to samples from any
  populations {III}. {The} analysis of variance test.
\newblock {\em Biometrika}, 29(3/4):322--335.

\bibitem[Polyanskiy and Wu, 2020]{polyanskiy2020self}
Polyanskiy, Y. and Wu, Y. (2020).
\newblock Self-regularizing property of nonparametric maximum likelihood
  estimator in mixture models.
\newblock {\em arXiv preprint arXiv:2008.08244}.

\bibitem[Rau et~al., 2015]{rau2015coexpression}
Rau, A., Maugis-Rabusseau, C., Martin-Magniette, M.-L., and Celeux, G. (2015).
\newblock {Co-expression analysis of high-throughput transcriptome sequencing
  data with Poisson mixture models}.
\newblock {\em Bioinformatics}, 31(9):1420--1427.

\bibitem[Rebafka and Roueff, 2015]{rebafka2015nonparametric}
Rebafka, T. and Roueff, F. (2015).
\newblock Nonparametric estimation of the mixing density using polynomials.
\newblock {\em Mathematical Methods of Statistics}, 24:200--224.

\bibitem[Robinson, 1973]{robinson1973large}
Robinson, J. (1973).
\newblock The large-sample power of permutation tests for randomization models.
\newblock {\em Annals of Statistics}, 1(2):291--296.

\bibitem[Robinson et~al., 2010]{robinson2010edgeR}
Robinson, M.~D., McCarthy, D.~J., and Smyth, G.~K. (2010).
\newblock edge{R}: a {Bioconductor} package for differential expression
  analysis of digital gene expression data.
\newblock {\em Bioinformatics}, 26(1):139--140.

\bibitem[Roueff and Ryd\'en, 2005]{roueff2005nonparametric}
Roueff, F. and Ryd\'en, T. (2005).
\newblock Nonparametric estimation of mixing densities for discrete
  distributions.
\newblock {\em Annals of Statistics}, 33(5):2066--2108.

\bibitem[Sarkar and Stephens, 2021]{sarkar2020separating}
Sarkar, A.~K. and Stephens, M. (2021).
\newblock Separating measurement and expression models clarifies confusion in
  single cell {RNA}-seq analysis.
\newblock {\em Nature Genetics}.

\bibitem[Scheff\'e, 1959]{ScheffeHenry1959}
Scheff\'e, H. (1959).
\newblock {\em The Analysis of Variance}.
\newblock Wiley.

\bibitem[Serfling, 1980]{Serfling1980}
Serfling, R.~J. (1980).
\newblock {\em Approximation Theorems of Mathematical Statistics}.
\newblock Wiley.

\bibitem[Shi et~al., 2020]{shi2020distribution}
Shi, H., Drton, M., and Han, F. (2020).
\newblock Distribution-free consistent independence tests via center-outward
  ranks and signs.
\newblock {\em Journal of the American Statistical Association}, (in press).

\bibitem[Silva et~al., 2019]{silva2019multivariate}
Silva, A., Rothstein, S.~J., McNicholas, P.~D., and Subedi, S. (2019).
\newblock A multivariate {P}oisson-log normal mixture model for clustering
  transcriptome sequencing data.
\newblock {\em BMC Bioinformatics}, 20:394.

\bibitem[Simar, 1976]{simar1976maximum}
Simar, L. (1976).
\newblock Maximum likelihood estimation of a compound {P}oisson process.
\newblock {\em Annals of Statistics}, 4(6):1200--1209.

\bibitem[Still and White, 1981]{still1981approximate}
Still, A. and White, A. (1981).
\newblock The approximate randomization test as an alternative to the {F} test
  in analysis of variance.
\newblock {\em British Journal of Mathematical and Statistical Psychology},
  34(2):243--252.

\bibitem[Tian et~al., 2017]{tian2017learning}
Tian, K., Kong, W., and Valiant, G. (2017).
\newblock Learning populations of parameters.
\newblock {\em arXiv preprint arXiv:1709.02707}.

\bibitem[Timan, 2014]{timan2014theory}
Timan, A.~F. (2014).
\newblock {\em Theory of Approximation of Functions of A Real Variable}.
\newblock Elsevier.

\bibitem[van~de Geer, 1996]{VanDeGeer1996rates}
van~de Geer, S. (1996).
\newblock Rates of convergence for the maximum likelihood estimator in mixture
  models.
\newblock {\em Journal of Nonparametric Statistics}, 6(4):293--310.

\bibitem[{van de Geer}, 2003]{VanDeGeer2003asymptotic}
{van de Geer}, S. (2003).
\newblock Asymptotic theory for maximum likelihood in nonparametric mixture
  models.
\newblock {\em Computational Statistics and Data Analysis}, 41(3):453--464.

\bibitem[Velmeshev et~al., 2019]{velmeshev2019single}
Velmeshev, D., Schirmer, L., Jung, D., Haeussler, M., Perez, Y., Mayer, S.,
  Bhaduri, A., Goyal, N., Rowitch, D.~H., and Kriegstein, A.~R. (2019).
\newblock Single-cell genomics identifies cell type--specific molecular changes
  in autism.
\newblock {\em Science}, 364(6441):685--689.

\bibitem[Vinayak et~al., 2019]{vinayak2019maximum}
Vinayak, R.~K., Kong, W., Valiant, G., and Kakade, S. (2019).
\newblock Maximum likelihood estimation for learning populations of parameters.
\newblock In {\em International Conference on Machine Learning}, volume~97,
  pages 6448--6457.

\bibitem[Vu et~al., 2016]{vu2016beta}
Vu, T.~N., Wills, Q.~F., Kalari, K.~R., Niu, N., Wang, L., Rantalainen, M., and
  Pawitan, Y. (2016).
\newblock {Beta-Poisson model for single-cell RNA-seq data analyses}.
\newblock {\em Bioinformatics}, 32(14):2128--2135.

\bibitem[Wu, 1978a]{wu1978some}
Wu, C.-F. (1978a).
\newblock Some algorithmic aspects of the theory of optimal designs.
\newblock {\em Annals of Statistics}, 6(6):1286--1301.

\bibitem[Wu, 1978b]{wu1978someII}
Wu, C.-F. (1978b).
\newblock Some iterative procedures for generating nonsingular optimal designs.
\newblock {\em Communications in Statistics-Theory and Methods},
  7(14):1399--1412.

\bibitem[Wu et~al., 2013]{wu2013pm}
Wu, H., Qin, Z., and Zhu, Y. (2013).
\newblock {PM}-seq: {U}sing finite {Poisson} mixture models for {RNA}-seq data
  analysis and transcript expression level quantification.
\newblock {\em Statistics in Biosciences}, 5:71--87.

\bibitem[{Wu} and {Yang}, 2016]{wu2016minimax}
{Wu}, Y. and {Yang}, P. (2016).
\newblock Minimax rates of entropy estimation on large alphabets via best
  polynomial approximation.
\newblock {\em IEEE Transactions on Information Theory}, 62(6):3702--3720.

\bibitem[Wu and Yang, 2020a]{wu2020optimal}
Wu, Y. and Yang, P. (2020a).
\newblock Optimal estimation of gaussian mixtures via denoised method of
  moments.
\newblock {\em Annals of Statistics}, 48(4):1981--2007.

\bibitem[Wu and Yang, 2020b]{wu2020polynomial}
Wu, Y. and Yang, P. (2020b).
\newblock Polynomial methods in statistical inference: Theory and practice.
\newblock {\em Foundations and Trends in Communications and Information
  Theory}, 17(4):402--586.

\bibitem[Zhang, 1995]{zhang1995estimating}
Zhang, C.-H. (1995).
\newblock On estimating mixing densities in discrete exponential family models.
\newblock {\em Annals of Statistics}, 23(3):929--945.

\bibitem[{Zhang} et~al., 2021]{ideas}
{Zhang}, M., {Liu}, S., {Miao}, Z., {Han}, F., {Gottardo}, R., and {Sun}, W.
  (2021).
\newblock Individual level differential expression analysis for single cell
  {RNA}-seq data.
\newblock x(x):1--11.

\bibitem[Zhang et~al., 2020]{Zhang2020}
Zhang, M.~J., Ntranos, V., and Tse, D. (2020).
\newblock Determining sequencing depth in a single-cell {RNA}-seq experiment.
\newblock {\em Nature Communications}, 11:774.

\end{thebibliography}
}

\end{document}